\documentclass[11pt]{article}
\usepackage[utf8]{inputenc}
\usepackage[T1]{fontenc}
\usepackage{note}
\allowdisplaybreaks
\makeatletter
\newcommand{\TODO}[1]{\textcolor{red}{[TODO\@ifnotempty{#1}{: #1}]}}
\newcommand{\sam}[1]{\textcolor{blue}{[sam\@ifnotempty{#1}{: #1}]}}
\newcommand{\fred}[1]{\textcolor{purple}{[fred\@ifnotempty{#1}{: #1}]}}
\makeatother

\usepackage[ruled,vlined, linesnumbered]{algorithm2e}
	\SetKwFor{While}{While}{}{}
	\SetKwFor{For}{For}{}{}
	\SetKwIF{If}{ElseIf}{Else}{If}{}{elif}{Else}{}
	\SetKw{Return}{Return}

 \title{Robust and Heavy-Tailed  Mean Estimation Made Simple, via Regret Minimization }
\date{} 
\author{Samuel B. Hopkins\thanks{
Department of Electrical Engineering and Computer Sciences, UC Berkeley. Email: \texttt{\{hopkins, z0\}@berkeley.edu}} \and Jerry Li \thanks{Microsoft Research AI. Email: \texttt{jerrl@microsoft.com}}\and Fred Zhang\footnotemark[1]}

\begin{document}

\maketitle
\thispagestyle{empty}

\begin{abstract}
     We study the problem of estimating the mean of a distribution in high dimensions when either the samples are  adversarially corrupted or the distribution is heavy-tailed. Recent developments in robust statistics have established   efficient and (near) optimal procedures for both settings.  However, the algorithms developed on each side tend to be sophisticated and do not directly transfer to the other, with many of them having ad-hoc or complicated analyses.
     
     In this paper, we provide a meta-problem and   a duality theorem that lead to a new unified view on  robust and heavy-tailed mean estimation in high dimensions. We show that the meta-problem can be solved either by   a variant of the   \textsc{Filter} algorithm from the recent literature on robust estimation or by the quantum entropy scoring scheme (QUE), due to Dong, Hopkins  and Li (NeurIPS '19). By leveraging  our duality theorem, these results translate into  simple and efficient algorithms for both robust and heavy-tailed settings.  Furthermore, the QUE-based procedure has run-time that matches the fastest known algorithms on both fronts.
     
     Our analysis of \textsc{Filter} is through the classic regret bound of the  multiplicative weights update method. This connection allows us to avoid the technical complications in previous works and  improve upon the run-time analysis of a gradient-descent-based algorithm for robust mean estimation by Cheng, Diakonikolas, Ge and Soltanolkotabi (ICML '20).
\end{abstract}
\newpage
\setcounter{page}{1}

\section{Introduction}
Learning from high-dimensional data in the presence of outliers is a central task in modern statistics and machine learning.
Outliers have many sources.
Modern data sets can be exposed to random corruptions or even malicious tampering, as in data poison attacks.
Data drawn from heavy-tailed distributions can naturally contain outlying samples---heavy-tailed data are found often in network science, biology, and beyond \cite{faloutsos1999power,leskovec2005graphs,barabasi2005origin,albert2005scale}.
Minimizing the effect of outliers on the performance of learning algorithms is therefore a key challenge for statistics and computer science.

\emph{Robust statistics}---that is, statistics in the presence of outliers---has been studied formally since at least the 1960s, and informally since long before \cite{huber1964robust, tukey60}.
However, handling outliers in high dimensions presents significant computational challenges.
Classical robust estimators (such as the Tukey median) suffer from worst-case computational hardness, while na\"ive computationally-efficient algorithms (e.g., throwing out atypical-looking samples) have far-from-optimal rates of error.
In the last five years, however, numerous works have developed sophisticated, efficient algorithms with optimal error rates for a variety of problems in high-dimensional robust statistics.
Despite significant recent progress, many basic algorithmic questions remain unanswered, and many algorithms and rigorous approaches to analyzing them remain complex and \emph{ad hoc}.

In this work, we revisit the most fundamental high-dimensional estimation problem, estimating the mean of a distribution from samples, in the following two basic and widely-studied robust settings.
In each case, $X_1,\ldots,X_n \in \mathbb{R}^d$ are independent samples from an unknown $d$-dimensional distribution $D$ with mean $\mu \in \mathbb{R}^d$ and (finite) covariance $\Sigma \in \mathbb{R}^{d \times d}$.
\begin{itemize}
    \item \emph{Robust mean estimation:} Given $Y_1,\ldots,Y_n \in \mathbb{R}^d$ such that $Y_i = X_i$ except for $\epsilon n$ choices of $i$, estimate the mean $\mu$.
    We interpret the $\epsilon n$ \emph{contaminated} samples $Y_i \neq X_i$ as corruptions introduced by a malicious adversary.
    Na\"ive estimators such as the empirical mean can suffer arbitrarily-high inaccuracy as a result of these malicious samples.
    
    \item \emph{Heavy-tailed mean estimation:} Given $X_1,\ldots,X_n$, estimate $\mu$ by an estimator $\hat{\mu}$ such that $\|\mu - \hat{\mu}\|$ is small with high probability (or equivalently, estimate $\mu$ with optimal confidence intervals).
    Since our only assumption about $D$ is that it has finite covariance, $D$ may have heavy tails.
    Standard estimators such as the empirical mean can therefore be poorly concentrated.
\end{itemize}

A significant amount of recent work in statistics and computer science has led to an array of algorithms for both problems with provably-optimal rates of error and increasingly-fast running times, both in theory and experiments \cite{lai2016agnostic,diakonikolas2019robust,dong2019quantum, cheng2019high,hopkins2018mean,cherapanamjeri2019fast,lecue2019robust, lei2019fast}.
However, several questions remain, which we address in this work.

First, the relationship between heavy-tailed and robust mean estimation is still murky: while algorithms are known which simultaneously solve both problems to information-theoretic optimality \cite{lecue2019robust}, we lack general conditions under which algorithms for one problem also solve the other.
This suggests:
\begin{quote}
   \emph{Question 1: Is there a formal connection between robust mean estimation and heavy-tailed mean estimation which can be exploited by efficient algorithms?}
\end{quote}

Second, iterated \emph{sample downweighting} (or pruning) is arguably the most natural approach to statistics with outliers---indeed, the \emph{filter}, one of the first computationally efficient algorithms for optimal robust mean estimation \cite{diakonikolas2019robust}) takes this approach---but rigorous analyses of filter-style algorithms remain \emph{ad hoc}.
Other iterative methods, such as gradient descent, suffer the same fate: they are simple-to-describe algorithms which require significant creativity to analyze \cite{cheng2020high}.
We ask:
\begin{quote}
  \emph{Question 2: Is there a simple and principled approach to rigorously analyze iterative algorithms for robust and heavy-tailed mean estimation?}
\end{quote}

\subsection{Our Results}
Our main contribution in this work is a simple and unified treatment of iterative methods for robust and heavy-tailed mean estimation.

Addressing Question 1, we begin by distilling a simple meta-problem, which we call \emph{spectral sample reweighing}.
While several variants of spectral sample reweighing are implicit in recent algorithmic robust statistics literature, our work is the first to separate the problem from the context of robust mean estimation and show the reduction from heavy-tailed mean estimation.
The goal in spectral sample reweighing is to take a dataset $\{x_i\}_{i \in [n]} \subseteq \mathbb{R}^d$, reweigh the vectors $x_i$ according to some weights $w_i \in [0,1]$, and find a center $\nu \in \mathbb{R}^d$ such that after reweighing the maximum eigenvalue of the covariance $\sum_{i \leq n} w_i (x_i - \nu)(x_i - \nu)^\top$ is as small as possible. 

\begin{definition}[$(\alpha,\epsilon)$ spectral sample reweighing, informal, see ~\autoref{def:ssr}]
For $\epsilon \in (0,1/2)$, let $\mathcal{W}_{n,\epsilon} = \{w \in \Delta_n \, : \, \|w\|_\infty \leq \tfrac 1 {(1-\epsilon)n} \}$ be the set of probability distributions on $[n]$ with bounded $\ell_\infty$ norm.
Let $\alpha \geq 1$.
Given $\{x_i\}_{i=1}^n$ in $\mathbb{R}^d$, an $\alpha$-approximate spectral sample reweighing algorithm returns a probability distribution $w \in \mathcal{W}_{n, 3 \epsilon}$ and a \emph{spectral center} $\nu \in \mathbb{R}^d$ such that
\[
  \left \| \sum_{i \leq n} w_i (x_i - \nu)(x_i - \nu)^\top \right \| \leq \alpha \cdot \min_{w' \in \mathcal{W}_{n,\epsilon}, \nu' \in \mathbb{R}^d} \left \| \sum_{i \leq n} w_i' (x_i - \nu')(x_i - \nu')^\top \right \| \, ,
\]
where $\|\cdot\|$ denotes the spectral norm, or maximum eigenvalue.
\end{definition}

Note that that spectral sample reweighing is a \emph{worst-case} computational problem.
The basic optimization task underlying spectral sample reweighing is to find weights $w \in \mathcal{W}_{n,\epsilon}$ minimizing the spectral norm of the weighted second moment of $\{x_i - \nu\}_{i \in [n]}$---an $\alpha$-approximation is allowed to output instead $w$ in the slighly larger set $\mathcal{W}_{n, 3\epsilon}$ and may only minimize the spectral norm up to a multiplicative factor $\alpha$.
The parameter $\epsilon$ should be interpreted as the degree to which $w \in \mathcal{W}_{n, \epsilon}$ may deviate from the uniform distribution.

Our first result shows that robust and heavy-tailed mean estimation both reduce to spectral sample reweighing.

\begin{theorem}[Informal, see  \autoref{thm:robust}, \autoref{thm:reweight-heavy}]
  Robust and heavy-tailed mean estimation can both be solved with information-theoretically optimal error rates (up to constant factors) by algorithms which make one call to an oracle providing a constant-factor approximation to spectral sample reweighing (with $\epsilon = \epsilon_0$ a small universal constant) and run in additional time $\widetilde O(nd)$.
\end{theorem}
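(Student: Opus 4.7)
The plan is to reduce both problems to one call of the SSR oracle with $\epsilon_0$ a small universal constant, via a unified template: feed an appropriately chosen dataset $\{x_i\}$ into the oracle to obtain $(w, \nu)$ with $w \in \mathcal{W}_{n, 3\epsilon_0}$, then return $\nu$. Correctness will rest on two ingredients. First, for each problem I exhibit a \emph{reference} weighting $w^\star \in \mathcal{W}_{n,\epsilon_0}$, supported on ``good'' points close to $\mu$, whose own weighted covariance is $O(\sigma^2)$ for a problem-dependent scale $\sigma$; this certifies that the optimum of the SSR problem is $O(\sigma^2)$, so the $\alpha$-approximate output also achieves weighted spectral norm $O(\sigma^2)$. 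Second, a stability lemma promotes this spectral control into proximity of the corresponding weighted means $\nu$ (under $w$) and $\nu^\star$ (under $w^\star$).

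The main technical ingredient---and the place where I expect the real work---is the stability lemma: if $w, w^\star \in \mathcal{W}_{n, O(\epsilon_0)}$ and both weighted second-moment matrices around their own means $\nu, \nu^\star$ have spectral norm at most $\sigma^2$, then $\|\nu-\nu^\star\|\leq O(\sqrt{\epsilon_0}\,\sigma)$. Since both are near-uniform, I first observe $\|w-w^\star\|_1 = 2-2\sum_i \min(w_i, w^\star_i) \leq O(\epsilon_0)$. Then for a unit vector $v$ attaining the maximum of $\Delta := v^\top(\nu-\nu^\star)$, I expand $\Delta = \sum_i (w_i-w^\star_i)\, v^\top(x_i-\nu)$ (using $\sum_i w_i = \sum_i w^\star_i = 1$) and apply Cauchy--Schwarz together with $|w_i - w^\star_i| \leq w_i + w^\star_i$ to get
\[
  \Delta^2 \;\leq\; \|w-w^\star\|_1 \cdot \Bigl(\, v^\top \mathrm{Cov}_w v + \sum_i w^\star_i (v^\top(x_i-\nu))^2 \Bigr) \;\leq\; O(\epsilon_0)\bigl(2\sigma^2 + \Delta^2\bigr),
\]
where the last step re-centers the $w^\star$-sum around $\nu^\star$ (the cross-term vanishes since $\nu^\star$ is the $w^\star$-mean, producing a $\Delta^2$ remainder). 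For $\epsilon_0$ a small enough constant, rearranging gives $\Delta^2 \leq O(\epsilon_0 \sigma^2)$, hence the bound upon maximizing over $v$.

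Granted the stability lemma, instantiation is short. For robust estimation, I set $x_i = Y_i$ and take $w^\star$ uniform on the uncorrupted samples; standard stability/concentration of the clean sample (finite-covariance suffices, via bucketing if needed) ensures $\|\mathrm{Cov}_{w^\star}\| = O(\|\Sigma\|)$ and $\|\nu^\star-\mu\| = O(\sqrt{\epsilon_0 \|\Sigma\|})$, so stability yields $\|\nu-\mu\| = O(\sqrt{\epsilon_0\|\Sigma\|})$, matching the information-theoretic optimum. For heavy-tailed estimation at confidence $1-\delta$, I partition into $k=\Theta(\log(1/\delta))$ buckets of size $n/k$, form bucket means $z_j$ (covariance $\Sigma k/n$), and feed $\{z_j\}$ into SSR. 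A Markov bound on $\mathbb{E}\|z_j-\mu\|^2 \leq \mathrm{tr}(\Sigma)\, k/n$ plus a Chernoff bound on bucket good-indicators shows that with probability $\geq 1-\delta$ at least $(1-\epsilon_0)k$ of the $z_j$'s lie within $O(\sqrt{\|\Sigma\|\log(1/\delta)/n})$ of $\mu$, yielding the reference $w^\star$; stability then delivers the sub-Gaussian rate. Bucketing and pre/post-processing take $O(nd)$, so the additional runtime beyond the one oracle call is $\widetilde O(nd)$ as claimed.

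The principal obstacle is executing the stability lemma cleanly---specifically, verifying $\|w - w^\star\|_1 = O(\epsilon_0)$ for arbitrary $w \in \mathcal{W}_{n,3\epsilon_0}$ (which uses that most of $w$'s mass must fall on the support of $w^\star$, since $w$ has $\ell_\infty$ norm bounded by roughly $1/n$) and calibrating constants so that the $\Delta^2$ term on the right-hand side can be absorbed. The remaining reductions are then essentially bookkeeping on top of this core inequality.
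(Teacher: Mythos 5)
Your stability lemma is sound (it is a special case of the paper's spectral signature Lemma~\ref{lem:ssw}, specialized to the case where both $\nu$ and $\nu^\star$ are the weighted means), and the robust-mean-estimation instantiation via uniform reference weights on the clean samples matches the paper's argument for Theorem~\ref{thm:robust}. The heavy-tailed instantiation, however, has a genuine gap: you claim that with probability $\geq 1-\delta$ at least $(1-\epsilon_0)k$ of the bucket means $z_j$ lie within $O\bigl(\sqrt{\|\Sigma\|\log(1/\delta)/n}\bigr)$ of $\mu$ in Euclidean norm, and you certify the SSR promise using a reference $w^\star$ supported on those buckets. This is false in high dimensions: $\mathbb{E}\|z_j - \mu\|^2 = \Tr(\Sigma)\,k/n$, so Markov only gives $\|z_j-\mu\| \lesssim \sqrt{\Tr(\Sigma)\log(1/\delta)/n}$ for a constant fraction of $j$, which is a factor $\sqrt{\min(d,\,\log(1/\delta))}$ larger than $r_\delta$ whenever $\Tr(\Sigma) \gg \|\Sigma\|$. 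Indeed, typically \emph{every} bucket mean sits at distance $\Theta(\sqrt{\Tr(\Sigma)/n})$ from $\mu$, so no subset of $z_j$'s can serve as the centered cluster you want, and your proposed $w^\star$ simply does not exist.

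This is precisely the obstruction that makes high-dimensional heavy-tailed estimation subtle and that the Lugosi--Mendelson structural lemma (Lemma~\ref{lem:lm}) is designed to circumvent: what is true is the \emph{directional} statement that for each unit $v$, at least $0.99k$ of the $z_j$ satisfy $|\langle z_j - \mu, v\rangle| \leq O(r_\delta)$ --- a far weaker condition than $\|z_j-\mu\| \leq O(r_\delta)$. Converting this directional clustering (combinatorial centrality, Definition~\ref{def:comb-center}) into the SSR promise (spectral centrality, Definition~\ref{def:spectral-center}) with $\lambda = O(r_\delta^2)$ is exactly what the duality argument in Propositions~\ref{prop:p1} and~\ref{prop:2} accomplishes, via SDP strong duality plus Gaussian rounding; it is the missing ingredient you would need. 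Once that is in hand, the rest of your template (stability lemma applied to the oracle output versus the dual-optimal $w^\star$, plus a Jensen bound to show $\|\nu^\star-\mu\| = O(r_\delta)$) does go through, and is an alternative to the paper's post-processing route, which instead converts the oracle output back to a combinatorial center (Proposition~\ref{prop:p1}) and invokes the pigeonhole argument of Lemma~\ref{lem:center-error}. But as written, the certification of the promise is the real work, and your Markov-bound argument does not do it.
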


For robust mean estimation this reduction is implicit in \cite{diakonikolas2019robust} and others (see e.g. \cite{dong2019quantum}).
For heavy-tailed mean estimation the reduction was not previously known: we analyze it by a simple convex duality argument (borrowing techniques from \cite{cheng2019high,lecue2019robust}).
Our argument gives a new equivalence between two notions of a \emph{center} for a set of high-dimensional vectors---the spectral center considered in spectral sample reweighing and a more combinatorial notion developed by Lugosi and Mendelson in the context of heavy-tailed mean estimation \cite{lugosi2019sub}.
We believe this equivalence is of interest in its own right---see~\autoref{prop:p1} and \autoref{prop:2}.

We now turn our attention to Question 2.
We offer a unified approach to rigorously analyzing several well-studied algorithms by observing that each in fact instantiates a common strategy for \emph{online convex optimization}, and hence can be analyzed by applying a standard regret bound.
This leads to the following three theorems.
We first demonstrate that the filter, one of the first algorithms proposed for efficient robust mean estimation~\cite{diakonikolas2019robust,li2018principled,diakonikolas2017being,diakonikolas2019recent}, can be analyzed in this framework.
Specifically, we show:
\begin{theorem}[\cite{diakonikolas2019robust}, Informal, see \autoref{thm:filter}]
\label{thm:filter-intro}
There is an algorithm, \textsc{filter}, based on multiplicative weights, which gives a constant-factor approximation to spectral sample reweighing for  sufficiently small $\epsilon$, in time $\widetilde{O}(nd^2)$\footnote{We use $\widetilde O, \widetilde \Omega$ notation to hide polylogarithmic factors. Also, we remark that  a variant of our main algorithm achieves the optimal breakdown point of $1/2$; see \autoref{sec:breakdown}. }.
\end{theorem}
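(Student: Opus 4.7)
The plan is to recast the \textsc{filter} as an instance of the multiplicative weights update (MW) method, where the ``experts'' are the $n$ data points and the loss of sample $i$ at round $t$ is proportional to $\langle v^{(t)}, x_i - \nu^{(t)} \rangle^2$, the squared projection of the centered point onto the top eigenvector $v^{(t)}$ of the current weighted second-moment matrix. The MW multiplicative update is then exactly the soft downweighting used by the filter: outliers, which by design have large projections along $v^{(t)}$, get downweighted more aggressively than inliers. Concretely, initialize $w_i^{(0)} = 1/n$ and at each round $t$ compute the weighted mean $\nu^{(t)} = \sum_i w_i^{(t)} x_i$ and the top eigenpair $(\lambda^{(t)}, v^{(t)})$ of $M^{(t)} = \sum_i w_i^{(t)}(x_i - \nu^{(t)})(x_i - \nu^{(t)})^\top$. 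Let $\sigma^\ast$ denote the OPT value, guessed by an outer doubling search. If $\lambda^{(t)} \leq C\sigma^\ast$ for a suitable constant $C$, halt and output $(w^{(t)}, \nu^{(t)})$; otherwise set $\ell_i^{(t)} = \langle v^{(t)}, x_i - \nu^{(t)}\rangle^2 / B^{(t)}$ with threshold $B^{(t)} = \Theta(\sigma^\ast/\epsilon)$ (hard-removing any sample exceeding $B^{(t)}$ so that $\ell_i^{(t)} \in [0,1]$), and update $w_i^{(t+1)} \propto w_i^{(t)}(1 - \eta \ell_i^{(t)})$ for a small step $\eta$.

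The analysis invokes the classical MW regret bound against an optimal comparator $(w^\ast, \nu^\ast)$, namely
\[
\sum_{t=1}^T \langle w^{(t)}, \ell^{(t)}\rangle \;\leq\; (1+\eta) \sum_{t=1}^T \langle w^\ast, \ell^{(t)}\rangle \;+\; \frac{D_{\mathrm{KL}}(w^\ast \| w^{(0)})}{\eta}.
\]
Since $\|w^\ast\|_\infty \leq 1/((1-\epsilon)n)$ and $w^{(0)}$ is uniform, $D_{\mathrm{KL}}(w^\ast \| w^{(0)}) \leq \log(1/(1-\epsilon)) = O(\epsilon)$. Using the identity $\sum_i w_i^\ast(x_i-\nu^{(t)})(x_i-\nu^{(t)})^\top = \sum_i w_i^\ast(x_i - \nu^\ast)(x_i-\nu^\ast)^\top + (\nu^{(t)} - \nu^\ast)(\nu^{(t)} - \nu^\ast)^\top$, each per-round comparator loss is $\langle w^\ast, \ell^{(t)}\rangle \leq (\sigma^\ast + \|\nu^{(t)} - \nu^\ast\|^2)/B^{(t)}$, which reduces to $(1+o(1))\sigma^\ast/B^{(t)}$ once the center shift is controlled. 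On non-halting rounds, $\langle w^{(t)}, \ell^{(t)}\rangle = \lambda^{(t)}/B^{(t)} \geq C\sigma^\ast/B^{(t)}$, so each contributes a positive constant to the gap between the two sides of the regret inequality, capping the number of non-halting rounds at $\mathrm{polylog}(n,d)$. A separate short argument shows that the output weights lie in $\mathcal{W}_{n, 3\epsilon}$ rather than just the simplex, by using the $O(\epsilon)$ KL budget to bound the maximum inflation of any coordinate relative to uniform.

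The main technical obstacle is handling the interaction between the moving center $\nu^{(t)}$ and the fixed comparator $\nu^\ast$: we need $\|\nu^{(t)} - \nu^\ast\|^2 = O(\sigma^\ast)$ for the comparator bound to close. This is handled by the minimizing-center identity (so that $M^{(t)}$ is already the smallest-spectral-norm choice given $w^{(t)}$) together with the Jensen-type bound $\|\nu^{(t)} - \nu^\ast\|^2 \leq \|\sum_i w_i^{(t)}(x_i - \nu^\ast)(x_i - \nu^\ast)^\top\|$, which feeds back into the same spectral quantity controlled by the induction. Each iteration is dominated by an approximate top-eigenvector computation in $\widetilde O(nd)$ time via power iteration; combining $\mathrm{polylog}(n,d)$ MW rounds, per-round cost, and a constant-size doubling search over $\sigma^\ast$, the overall runtime is $\widetilde O(nd^2)$.
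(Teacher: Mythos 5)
Your overall plan---cast the filter as multiplicative weights, invoke the standard regret bound, and then control the moving-center term to close the inequality---is the same high-level strategy the paper uses. However, the technical heart of that closing step is wrong as you describe it, and that step is precisely where the paper has to do real work.

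You propose controlling $\|\nu^{(t)} - \nu^*\|^2$ by the ``minimizing-center identity'' plus the Jensen-type bound $\|\nu^{(t)} - \nu^*\|^2 \leq \bigl\|\sum_i w_i^{(t)}(x_i - \nu^*)(x_i - \nu^*)^\top\bigr\|$. Unfortunately, since $\sum_i w_i^{(t)}(x_i-\nu^*)(x_i-\nu^*)^\top = M^{(t)} + (\nu^{(t)}-\nu^*)(\nu^{(t)}-\nu^*)^\top$, this chain gives $\|\nu^{(t)}-\nu^*\|^2 \leq \|M^{(t)}\| + \|\nu^{(t)}-\nu^*\|^2$, which is vacuous. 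In particular there is no factor of $\epsilon$ in front of $\|M^{(t)}\|$, and without such a factor the feedback does not shrink: in the final regret inequality the weighted covariance term on the right would appear with a constant coefficient (rather than $O(\epsilon)$), so it cannot be moved to the left and absorbed. The paper's replacement for this step is the spectral signature lemma (\autoref{lem:ssw}), which crucially exploits that the comparator weights $w^*$ and the iterate $w^{(t)}$ are both in $\mathcal{W}_{n,\epsilon}$ and hence close in $\ell_1$. Splitting the sum into the good set $G$ and bad set $B$ and using $\|w^{(t)}-w^*\|_1 = O(\epsilon)$ yields $\|\nu^* - \nu^{(t)}\| \lesssim \sqrt{\lambda} + \sqrt{\epsilon\,\|M^{(t)}\|}$; the $\sqrt{\epsilon}$ in front of $\sqrt{\|M^{(t)}\|}$ is exactly what is needed for the bootstrap to close. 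Your sketch omits this entirely.

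Two further issues. First, your scheme never projects $w^{(t)}$ onto $\mathcal{W}_{n,\epsilon}$; the regret bound alone controls the comparator, not the iterate, and the separate ``KL budget'' argument you gesture at does not bound $\|w^{(t)}\|_\infty$. The paper handles this directly with a KL projection onto $\mathcal{W}_{n,\epsilon}$ each round, which both guarantees the output lies in the feasible set and is exactly what makes the spectral signature lemma applicable to $w^{(t)}$. Second, your loss normalization $B^{(t)} = \Theta(\sigma^*/\epsilon)$ combined with hard removal of samples exceeding it is not obviously sound: the standard MW regret bound assumes a fixed expert set, and hard removal can discard inliers. The paper instead sets the width $\rho$ to an a priori bound on all scores (squared diameter after a one-shot pruning step), getting $\rho = O(d\lambda/\epsilon)$ and hence $T = O(\rho\epsilon/\lambda) = O(d)$ iterations at $\widetilde O(nd)$ each, which is where the $\widetilde O(nd^2)$ comes from. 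Your claimed $\mathrm{polylog}(n,d)$ iteration count is not justified by the argument given; had it been, the total runtime would be $\widetilde O(nd)$, not $\widetilde O(nd^2)$, which should itself have been a red flag.
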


Previous approaches to analyzing the filter required  by-hand construction of potential functions to track the progress of the algorithm.
Our novel strategy to prove~\autoref{thm:filter-intro} demystifies the analysis of the filter by applying an out-of-the-box regret bound: the result is a significantly simpler proof than in prior work.  It allows us to capture robust mean estimation in both bounded covariance and sub-gaussian setting.

Moving on, we also analyze gradient descent, giving the following new result, which we also prove by applying an out-of-the-box regret bound. Although it gives weaker running-time bound than we prove for \textsc{filter}, the advantage is that the algorithm is vanilla gradient descent. (By comparison, the multiplicative weights algorithm of~\autoref{thm:filter-intro} can be viewed as a more exotic mirror-descent method.)
\begin{theorem}[Informal, see \autoref{thm:gdbound}]
\label{thm:gradient-intro}
  There is a gradient-descent based algorithm for spectral sample reweighing which gives a constant-factor approximation to spectral sample reweighing in ${O}(nd^2/\epsilon^2)$ iterations and $\widetilde O(n^2 d^3/\epsilon^2)$ time.
\end{theorem}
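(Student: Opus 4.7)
The plan is to cast spectral sample reweighing as a non-convex minimization over the weights alone and then to analyze projected gradient descent via Zinkevich's regret bound applied to a sequence of linear surrogates that coincide with the true objective at each iterate. As a first step, I would observe that for any fixed $w$, the map $\nu \mapsto \bigl\|\sum_i w_i (x_i - \nu)(x_i - \nu)^\top\bigr\|$ is minimized at the weighted mean $\bar x(w) := \sum_i w_i x_i$, since $M_\nu(w) = M_{\bar x(w)}(w) + (\bar x(w) - \nu)(\bar x(w) - \nu)^\top \succeq M_{\bar x(w)}(w)$. Thus the problem reduces to minimizing $f(w) := \|M_{\bar x(w)}(w)\|$ over $w \in \mathcal{W}_{n,3\epsilon}$, with the algorithm returning the pair $(w, \bar x(w))$.

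The algorithm itself is projected gradient descent on the capped simplex. At iteration $t$, set $\nu^{(t)} = \bar x(w^{(t)})$, extract a top eigenvector $v^{(t)}$ of $M_{\nu^{(t)}}(w^{(t)})$ by power iteration in $\widetilde O(nd)$ time, form the score vector $g^{(t)}_i := \bigl(v^{(t)\top}(x_i - \nu^{(t)})\bigr)^2$, and update $w^{(t+1)} = \Pi_{\mathcal{W}_{n, 3\epsilon}}(w^{(t)} - \eta\, g^{(t)})$. The linear surrogate $\ell_t(w) := \langle g^{(t)}, w \rangle$ satisfies $\ell_t(w^{(t)}) = v^{(t)\top} M_{\nu^{(t)}}(w^{(t)}) v^{(t)} = f(w^{(t)})$ by construction, so the standard OGD regret bound yields, for any $w^* \in \mathcal{W}_{n,\epsilon}$,
\[
\sum_{t=1}^T f(w^{(t)}) - \sum_{t=1}^T \ell_t(w^*) \le \frac{\|w^{(1)} - w^*\|_2^2}{2\eta} + \frac{\eta}{2} \sum_{t=1}^T \|g^{(t)}\|_2^2.
\]

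To close the argument, I would bound $\ell_t(w^*) = v^{(t)\top} M_{\nu^{(t)}}(w^*) v^{(t)}$ using the decomposition $M_{\nu^{(t)}}(w^*) = M_{\bar x(w^*)}(w^*) + (\bar x(w^*) - \bar x(w^{(t)}))(\bar x(w^*) - \bar x(w^{(t)}))^\top$, which gives $\ell_t(w^*) \le F^* + \|\bar x(w^*) - \bar x(w^{(t)})\|^2$, and then invoke a stability lemma of the form $\|\bar x(w) - \bar x(w')\|^2 \le O(\epsilon)\bigl(f(w) + f(w')\bigr)$ for all $w, w' \in \mathcal{W}_{n, O(\epsilon)}$ with sufficiently small $\epsilon$, standard in the robust mean estimation literature. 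Plugging in gives $\ell_t(w^*) \le (1+O(\epsilon)) F^* + O(\epsilon) f(w^{(t)})$; absorbing the $O(\epsilon) f(w^{(t)})$ into the left side of the regret inequality and dividing by $T$ yields $\min_t f(w^{(t)}) \le O(F^*) + O(\mathrm{Reg}(T)/T)$.

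Combining the diameter estimate $\|w^{(1)} - w^*\|_2 = O(1/\sqrt n)$ with a worst-case bound on $\|g^{(t)}\|_2$ from an $\ell_\infty$ cap on the scores and optimizing $\eta$, one finds that $T = O(nd^2/\epsilon^2)$ iterations suffice to drive $\mathrm{Reg}(T)/T$ below $O(F^*)$, delivering the claimed constant-factor approximation. Each iteration costs $\widetilde O(nd)$ for the eigenvector step and $\widetilde O(n)$ for the projection onto the capped simplex, for a total running time of $\widetilde O(n^2 d^3/\epsilon^2)$. The main obstacle is the stability lemma: it can fail outright at large $\epsilon$ (weights supported on two disjoint clusters of identical points achieve $f(w) = f(w') = 0$ yet have arbitrarily separated centers), so for small $\epsilon$ one must give a weight-overlap argument that genuinely uses the $\ell_\infty$ cap $\|w\|_\infty \le \frac{1}{(1-\epsilon)n}$ rather than a one-line Cauchy--Schwarz.
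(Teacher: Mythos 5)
Your proposal matches the paper's proof of \autoref{thm:gdbound} in all essential respects: you apply Zinkevich's online-gradient-descent regret bound (the paper's \autoref{lem:gd-regret}) to the linear surrogates $\ell_t(w)=\langle g^{(t)},w\rangle$, use $\ell_t(w^{(t)})=\|M^{(t)}\|$, decompose $\ell_t(w^*)$ via $M_{\nu^{(t)}}(w^*) = M_{\bar x(w^*)}(w^*) + (\bar x(w^*)-\nu^{(t)})(\bar x(w^*)-\nu^{(t)})^\top$, and control the cross term by a spectral-signature inequality. Your ``stability lemma'' is precisely the paper's \autoref{lem:ss}/\autoref{lem:ssw}, and your concern about it at large $\epsilon$ is well placed; that lemma is exactly the weight-overlap argument you flag as necessary, and it is what the paper plugs into display~\eqref{eq:right-2}. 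A few observations worth recording.

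First, the stability lemma you state, $\|\bar x(w)-\bar x(w')\|^2\le O(\epsilon)\,(f(w)+f(w'))$, is slightly stronger than what actually holds: the spectral-signature bound of \autoref{lem:ssw} gives a $\sqrt{\lambda}$ (not $\sqrt{\epsilon\lambda}$) term because $\nu$ is an arbitrary spectral center rather than the weighted mean of a good subset. The correct form is $\|\nu-\nu(w')\|^2\le O(\lambda)+O(\epsilon)\|M(w')\|$. This weakening is harmless---you still absorb the $O(\epsilon)\|M^{(t)}\|$ piece on the left and retain a constant-factor approximation---but as stated the lemma is false even for small $\epsilon$, so the proof needs the weaker version.

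Second, you never explicitly account for the width parameter $\rho$: the $\ell_\infty$ cap $\tau^{(t)}_i\le\rho$ is not free but requires a pruning preprocessing step (\autoref{lem:diameter} together with \autoref{lem:naive-prune}) to ensure $\rho=O(d\lambda/\epsilon)$. Without this, $L=\sqrt{n}\rho$ could be arbitrarily large and the iteration count uncontrolled. The paper runs \textsc{Prune} first for exactly this reason.

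Third, and most interestingly, your diameter estimate $\|w^{(1)}-w^*\|_2=O(1/\sqrt n)$ (in fact $O(\sqrt{\epsilon/n})$) is correct and strictly sharper than the trivial $R=\sqrt 2$ used in the paper. Plugging $R=O(1/\sqrt n)$ and $L=\sqrt n\rho$ into the regret bound gives average regret $O(\rho/\sqrt T)$, so $T=O(\rho^2/\lambda^2)=O(d^2/\epsilon^2)$ iterations suffice---a factor of $n$ fewer than the $O(nd^2/\epsilon^2)$ you (and the paper) state. Your final claim therefore does not match your own intermediate estimate; either you silently reverted to $R=O(1)$ or there is an arithmetic slip. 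If you carry your tighter $R$ through, you improve the paper's iteration count and total running time by a factor of $n$, which would be a genuine refinement worth highlighting rather than discarding.

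Finally, $\ell_t(w^{(t)})=f(w^{(t)})$ exactly requires an exact top eigenvector; with power iteration you only get a $(7/8)$-approximation, so the identity becomes an inequality $\ell_t(w^{(t)})\ge\tfrac 78 f(w^{(t)})$. This is a benign constant-factor degradation and the paper notes it, but your writeup should acknowledge it since you quote the power-method cost.
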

Prior work analyzing gradient descent for robust mean estimation required sophisticated tools for studying non-convex iterative methods \cite{cheng2020high}. 
Our regret-bound strategy shows for the first time that gradient descent solves heavy-tailed mean estimation, and that it solves robust mean estimation in significantly fewer iterations than previously known (prior work shows a bound of $\widetilde{O}(n^2 d^4)$ iterations in the robust mean estimation setting, where our bound gives ${O}(n d^2)$ iterations \cite{cheng2020high}). 

Finally, we demonstrate that the nearly-linear time algorithm for robust mean estimation in~\cite{dong2019quantum} fits into this framework as well. 
Thus, this framework captures state-of-the-art algorithms for robust mean estimation.
\begin{theorem}[\cite{dong2019quantum}, Informal, see \autoref{thm:mmw}]
\label{thm:que-intro}
There is an algorithm based on matrix multiplicative weights which gives a constant-factor approximation to spectral sample reweighing for sufficiently small $\epsilon$, in time $\widetilde{O}(nd \log(1/\epsilon))$.
\end{theorem}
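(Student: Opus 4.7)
The plan is to prove the claimed bound by instantiating the quantum entropy scoring algorithm of \cite{dong2019quantum} as an application of matrix multiplicative weights (MMW). Whereas the filter analysis of \autoref{thm:filter-intro} tracks only the top eigendirection of the weighted second moment $M_w := \sum_i w_i (x_i - \nu)(x_i - \nu)^\top$ at each step, MMW maintains a density matrix $U_t \succeq 0$ with $\mathrm{tr}(U_t) = 1$ that softly hedges over every direction at once. The scores $\tau_i^{(t)} := \langle U_t, (x_i - \nu)(x_i - \nu)^\top\rangle$ then summarize how each sample loads onto the currently worst directions jointly, letting us downweight many outliers in a single sweep rather than one direction at a time.

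Concretely, I would fix $\nu$ via a cheap robust preprocessing step (e.g., a coordinate-wise median, computable in $O(nd\log n)$ time), initialize $w^{(0)}$ uniform and $U_0 = I/d$, and at each round $t$: (i) compute $\tau^{(t)}$, (ii) remove a constant fraction of weight from the top-scoring samples exactly as in \textsc{filter}, and (iii) update $U_{t+1} \propto \exp(\eta S_t)$ with $S_t = \sum_{s \leq t} M_{w^{(s)}}$ and a small constant step size $\eta$. Correctness is the standard MMW regret bound: after $T = O(\log(1/\epsilon))$ outer rounds the average of $\langle U_t, M_{w^{(t)}}\rangle$ is within $(1+O(\eta))\max_{\|v\|=1}\frac{1}{T}\sum_t v^\top M_{w^{(t)}} v$ up to an $O(\log d/(\eta T))$ additive slack, so the final iterate certifies a constant-factor bound on $\|M_w\|$ relative to $\min_{w' \in \mathcal{W}_{n,\epsilon}} \|M_{w'}\|$. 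As with \textsc{filter}, the heart of the analysis is showing that downweighting charges primarily to genuine outliers so that $w$ stays in $\mathcal{W}_{n,3\epsilon}$; here this becomes a bound on $U_t$-weighted variance of the good samples rather than the variance along a single direction.

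For the running time, the critical point is to avoid ever forming $U_t$ explicitly, which would already cost $\Omega(d^2)$. The plan is to approximate $\exp(\eta S_t/2)v$ by a truncated Chebyshev series requiring only $O(\mathrm{polylog})$ matrix--vector products $v \mapsto S_t v$; each such product costs $O(nd)$ since $S_t$ is a weighted sum of $n$ rank-one outer products. Applying a Johnson--Lindenstrauss sketch $\Pi \in \mathbb{R}^{k\times d}$ with $k = O(\log n)$, I would evaluate all scores simultaneously via $\tau_i^{(t)} \approx \|\Pi\exp(\eta S_t/2)(x_i - \nu)\|^2$ in total $\widetilde O(nd)$ time per round; the outer loop of $O(\log(1/\epsilon))$ rounds then yields the claimed $\widetilde O(nd\log(1/\epsilon))$ total running time.

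The main obstacle is the intertwined choice of step size $\eta$, Chebyshev truncation depth, and JL sketch dimension: I would need to show that the multiplicative and additive errors introduced by the sketched matrix exponential do not spoil either the MMW regret bound or the $\mathcal{W}_{n,3\epsilon}$ membership of the returned weights. Once these error budgets are calibrated---so that per-round sketch error is $o(\mathrm{OPT})$ and the filter threshold is set slightly above this noise floor---the remaining steps are bookkeeping on top of the \textsc{filter} argument.
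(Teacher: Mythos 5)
Your high-level instinct---run matrix multiplicative weights on the density-matrix side so that a single score vector $\tau^{(t)}$ hedges over all bad directions at once, then downweight as in \textsc{Filter}, and implement the matrix exponential via low-degree polynomial approximation plus JL sketching---is the right framework, and the running-time plan per iteration ($\widetilde O(nd)$ by applying $\exp(\eta S_t/2)$ approximately to a sketch) matches what the paper (citing~\cite{dong2019quantum}) does. But two pieces of your plan are substantively wrong and the argument would not go through as stated.

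First, fixing the center $\nu$ once via a coordinate-wise median and scoring against that fixed $\nu$ forever cannot give a constant-factor approximation to spectral sample reweighing. A coordinate-wise median can be $\Omega(\sqrt{d\lambda})$ away from the optimal center, and since $\sum_i w_i (x_i-\nu)(x_i-\nu)^\top = M(w) + \bigl(\mu(w)-\nu\bigr)\bigl(\mu(w)-\nu\bigr)^\top$ whenever $\mu(w)=\sum_i w_i x_i$, a fixed $\nu$ off by $\sqrt{d\lambda}$ immediately makes the weighted second moment $\Omega(d\lambda)$ in spectral norm, which is only a $d$-approximation. The paper's \autoref{alg:mmwu} (like the filter) re-centers every iteration, setting $\nu^{(t)} = \mu(w^{(t)})$, and the returned center is the final weighted mean; the dynamic recentering is not a convenience but the mechanism by which the output center improves as bad weight is removed.

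Second, your iteration count and step-size choice do not close the MMW regret argument. You correctly note the $O(\log d/(\eta T))$ additive slack, but with a constant $\eta$ and $T = O(\log(1/\epsilon))$ that slack is $O(\log d/\log(1/\epsilon))$, which is not comparable to $\lambda$ without an unjustified relationship between $\epsilon$ and $d$. The paper controls this by running in \emph{epochs}: within an epoch the MMW learning rate is normalized by $1/\|M(w_0^{(s)})\|$ (the spectral norm at the start of that epoch), so after $O(\log d)$ inner rounds the regret term becomes a constant fraction of $\|M(w_0^{(s)})\|$, and the 1D-filter step (\autoref{lem:1dfilter}) guarantees each inner round strictly decreases $\langle M(w), U\rangle$ by a factor of $4$, hence each epoch reduces $\|M(w)\|$ by a constant factor. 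Geometrically decreasing from the pruned squared diameter $\rho$ down to $O(\lambda)$ then takes $O(\log(\rho/\lambda))$ epochs, giving $\widetilde O(\log(\rho/\lambda))$ total iterations; the $\widetilde O(nd\log(1/\epsilon))$ runtime in the informal statement comes from $\rho/\lambda = O(d/\epsilon)$ after pruning, with $\widetilde O$ absorbing $\log d$. A single flat MMW loop with $O(\log(1/\epsilon))$ rounds and a constant step size does not have a mechanism for the spectral norm to decrease geometrically nor for the regret to be properly normalized, so this part of the proposal needs the epoch structure and the spectral-signature argument (\autoref{lem:sscc}) that charges the $G$-mass of $\tau$ to $\lambda$ plus the $\|\mu(w)-\nu\|^2$ error.

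A smaller point: ``downweighting charges primarily to genuine outliers'' is the right goal, but you need to say what stops the downweighting; the paper's \textsc{1DFilter} has an explicit stopping rule that both preserves the invariant that more mass is removed from bad points than good points (keeping the weights in the set $\mathcal{C}_{n,\epsilon}$) and certifies the multiplicative decrease of the $U$-weighted variance. Without that you have no bound on how far the weights drift from uniform, hence no membership in $\mathcal{W}_{n,3\epsilon}$ at termination.
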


\subsection{Related work}
 For robust mean estimation,~\cite{diakonikolas2019robust,lai2016agnostic}  give the first polynomial-time algorithm with optimal (dimension-independent) error rates. Their results have been further improved and generalized by a number of works~\cite{balakrishnan2017computationally, diakonikolas2017being, diakonikolas2018robustly, diakonikolas2018list, hopkins2018mixture, steinhardt2018resilience, dong2019quantum, diakonikolas2019sever, cheng2019faster, diakonikolas2019efficient}. See~\cite{diakonikolas2019recent} for a complete survey.

The first (computationally inefficient) estimator to obtain optimal confidence intervals for heavy-tailed distributions in high dimensions is given by~\cite{lugosi2019sub}; this construction was first made algorithmic by~\cite{hopkins2018mean}, using the Sum-of-Squares method. Later works~\cite{cherapanamjeri2019fast,lecue2019robust, lei2019fast} significantly improve the run-time: the fastest known algorithm runs in time $\widetilde O(n^2d)$.

Analyses of the \textsc{Filter} algorithm are scattered around the literature~\cite{diakonikolas2019robust,li2018principled,diakonikolas2017being,diakonikolas2019recent}.  The variant of \textsc{Filter} we present here is based on a soft downweighting procedure first proposed by~\cite{steinhardt2018robust}. However, no prior work analyzes \textsc{Filter} through the lens of regret minimization or points out a connection with the heavy-tailed setting.

Prior works~\cite{lecue2019robust, prasad2019unified, lugosi2019robust} have proposed unified approaches to heavy-tailed and robust mean estimation. In particular, \cite{lecue2019robust} observes a robustness guarantee of~\cite{lugosi2019sub}, originally designed for the heavy-tailed setting. However, these works do not distill a meta-problem or obtain the analysis via duality. In addition, although it matches the fastest-known running time in theory, the algorithm of~\cite{lecue2019robust} is based on semidefinite programming, rendering it relatively impractical.
Some constructions from~\cite{prasad2019unified, lugosi2019robust} are not known to be computationally tractable.

In a concurrent and independent work,~\cite{zhu2020robust} also studies the spectral sample reweighing problem (in the context of robust mean estimation), and provides an analysis of filter-type algorithms based on a regret bounds.
The argument of~\cite{zhu2020robust} relies on a technical optimization landscape analysis, which our arguments avoid. 
The framework of~\cite{zhu2020robust} can be extended to robust linear regression and covariance estimation; it is unclear our techniques extend similarly.
Their work  also proves  an optimal breakdown point analysis of filter-type algorithm for robust mean estimation. We obtain the same result (see \autoref{sec:breakdown}) with an arguably less sophisticated proof.
Lastly,~\cite{zhu2020robust} does not discuss the heavy-tailed setting.

Another concurrent and independent work,~\cite{diakonikolas2020outlier}, also shows that the filter and non-convex gradient descent obtain optimal rates in the robust and heavy-tailed settings.
The authors also identify a general stability-based condition under which robust mean estimation algorithms achieve optimal rates in the heavy-tailed setting.

\subsection{Organization}
We formally introduce the spectral sample reweighing problem and analyze an algorithm based on the \textsc{Filter} algorithm in Section \ref{sec:mwu}. We show how this primitive can be immediately used to solve the robust mean estimation problem in Section~\ref{sec:robust}. Then in Section~\ref{sec:centrality} we introduce the duality theorem that connect two notions of centrality. The result is used further in Section~\ref{sec:heavy-tail}, where we show how to leverage  the duality for heavy-tailed mean estimation.

\section{Preliminaries} 
For a set of $n$ real values $\alpha_i$, we let $\textsf{median} \left( \{ \alpha_i\}_{i=1}^n \right)$ to denote its median.
For a matrix $A$, we use $\|A\|,\|A\|_2$ to denote the spectral norm of $A$ and $\Tr(A)$ its trace. For a vector $v$, $\|v\|_p$ denotes the $\ell_p$ norm. 
We denote the all-one vector of dimension $k$ by $\I_k$. 
For vectors $u,v$, we denote the entrywise product by $u\odot v$; that is, the vector such that $w_i = u_i \cdot v_i$ for each $i$.
For PSD matrices $A,B$, we write $A \preceq B$ if $B -A$ is PSD. Density matrices refer to the set of PSD matrices with unit trace. For any symmetric
matrix $A \in \Real^{d\times d}$, let exp$(A)$ denote the   matrix exponential of $A$. For a weight vector $w$ such that $0 \leq w_i \leq 1$ and point set $\{x_i\}_{i=1}^n$, we define $\mu(w) = \sum_{i=1}^n w_i x_i$ and $M(w) = \Sigma_w = \sum_{i=1}^n w_i (x_i - \mu(w) ) (x- \mu(w))^\top$. 

\begin{definition}[approximate top eigenvector]
For any PSD matrix $M$ and $c \in (0,1)$, we say that a unit vector $v$ is a $c$-approximate largest eigenvector of $M$ if $v^T Mv \geq c \|M\|_2$. 
\end{definition}
For a PSD matrix $M$, we let \textsc{ApproxTopEigenvector}$(M, c, \alpha)$ to denote an approximation scheme that outputs a (unit-norm) $c$-approximate largest eigenvector of $M$ with a failure probability of at most $\alpha$. The classic power method achieves such guarantee with a run-time of $O\left(\tfrac{1}{1-c} nd \log(1/\alpha)\right)$, when $M$ is given in a factored form $M= X^\top X$, for $X\in \Real^{n\times d}$.

\begin{definition}[Kullback–Leibler divergence]
For probability distributions $p,q$ over $[n]$, the KL divergence from $q$ to $p$ is defined as $\text{KL}(p || q) =\sum_{i=1}^n p(i) \log \tfrac{p(i)}{q(i)}$.
\end{definition}
\begin{definition}[total variation distance]
For probability distributions $p,q$, the total variation distance is defined as $\text{TV}(p,q) = \sup_E |p(E) - q(E)| =\frac12 \|p-q\|_1$, where the supremum is over the set of measurable events.
\end{definition}
 
We use $\Delta_n$ to denote the set of probability distributions over $[n]$ and write $\mathcal U_n$ for  the uniform distribution over $[n]$. We use $i\sim I$ to denote $i$ drawn uniformly from an index set $I \subseteq [n]$. Throughout, we define  $\mathcal W_{n,\epsilon} =\{ w\in \Delta_n : w_i \leq \frac{1}{(1-\epsilon)n}\}$ to be the discrete distributions over $[n]$ with bounded $\ell_\infty$ norm; we call the set \textit{good weights}.
\section{The Meta-Problem and a Meta-Algorithm}
\label{sec:mwu}
We now define the meta-problem, which we call \textit{spectral sample reweighing}, that underlies both the adversarial and heavy-tailed models.
We put it as a promise problem.
\begin{definition}[$(\alpha,\epsilon)$-spectral sample reweighing] \label{def:ssr}
Let $\epsilon \in (0,1/2)$. The spectral sample reweighing problem is specified by the following.
\begin{itemize}
     \item \textit{Input}: $n$ points $\{x_i\}_{i=1}^n$ in $\mathbb R^d$ and $\lambda \in \mathbb R$.
     \item \textit{Promise}: There exists a $\nu\in \mathbb R^d$ and  a set of good weights $w \in \mathcal{W}_{n
    ,\epsilon}$ such that 
\begin{equation}\label{eq:p} \tag{$\dagger$}
    \sum_{i=1}^n w_{i}\left(x_{i}- \nu\right)\left(x_{i}- \nu\right)^{\top}\preceq \lambda I.
\end{equation} 
\item \textit{Output}:   A set of good weights $w'\in \mathcal W_{n,3 \epsilon}$ and $\nu' \in \mathbb R^d$ that satisfies the   condition above, up to the factor of $\alpha \geq 1$:
\begin{equation}\label{eq:goal}
    \sum_{i=1}^n w'_{i}\left(x_{i}- \nu'\right)\left(x_{i}- \nu'\right)^{\top}  \preceq  \alpha \lambda I.
\end{equation}
 \end{itemize}
\end{definition}
To provide some   intuition, the goal here is to find a set of weights $\{w_i\}_{i=1}^n$, close to the uniform distribution on $[n]$, and a center $\nu$ such that by weighting by $w$ and centering by $\nu$, the covariance  is bounded, under the promise that such a set of weights exists. We will refer to our promise~\eqref{eq:p} as a \textit{spectral centrality} assumption.  
\paragraph{Solving spectral sample reweighing} The main result of this section is an efficient algorithm that achieves a constant factor approximation for the spectral sample reweighing problem. 
\begin{theorem}[\cite{diakonikolas2019robust} spectral sample reweighing via filter]\label{thm:filter}
Let  $\{x_i\}_{i=1}^n$ be $n$ points in $\mathbb R^d$ and $\epsilon \in (0,1/10]$.
Suppose there exists $\nu\in \mathbb R^d$ and  $w \in \mathcal{W}_{n
    ,\epsilon}$ such that 
    \[
    \sum_{i=1}^n w_{i}\left(x_{i}- \nu\right)\left(x_{i}- \nu\right)^{\top}\preceq \lambda I
    \]
for some   $\lambda  > 0$.   
Then, given $\{x_i\}_{i=1}^n , \lambda, \epsilon$ and a failure rate $\delta$, there is an algorithm that finds   $w'\in \mathcal W_{n,3\epsilon}$ and   $\nu' \in \mathbb R^d$ such that $$
    \sum_{i=1}^n w'_{i}\left(x_{i}- \nu'\right)\left(x_{i}- \nu'\right)^{\top}  \preceq 60\lambda I,$$ with probability at least $1-\delta$. 
    
    The algorithm runs in $O(d)$ iterations and $\widetilde O\left(nd^2\log (1/\delta)\right)$ time in total.
\end{theorem}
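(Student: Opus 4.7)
The plan is to design a filter algorithm that maintains a non-negative weight vector $w^{(t)}$ initialized at the uniform distribution $w^{(0)} = \mathcal U_n$ and performs a soft downweighting step per iteration. At iteration $t$, compute $\mu^{(t)} = \mu(w^{(t)})$ and $M^{(t)} = M(w^{(t)})$. If $\|M^{(t)}\|_2 \leq 60\lambda$, terminate and output $(w^{(t)}, \mu^{(t)})$ after renormalizing $w^{(t)}$. Otherwise, call \textsc{ApproxTopEigenvector} on $M^{(t)}$ to get a $(1-o(1))$-approximate top eigenvector $v_t$ in $\widetilde O(nd\log(1/\delta'))$ time, form scores $\tau_i^{(t)} = \langle v_t, x_i - \mu^{(t)}\rangle^2$, and apply the multiplicative update $w_i^{(t+1)} = w_i^{(t)}(1 - \tau_i^{(t)}/\tau_{\max}^{(t)})$.

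The analysis rests on three ingredients. First I would upper-bound the loss of the comparator $w^\ast$: since $M(w^\ast) \preceq \sum_i w^\ast_i(x_i-\nu)(x_i-\nu)^\top \preceq \lambda I$, expanding $\tau_i^{(t)}$ around $\mu(w^\ast)$ rather than $\mu^{(t)}$ gives
\[
  \langle w^\ast, \tau^{(t)}\rangle \;\leq\; \lambda + \langle v_t, \mu(w^\ast) - \mu^{(t)}\rangle^2 \;\leq\; \lambda + \|\mu(w^\ast) - \mu^{(t)}\|^2.
\]
A standard resilience-style argument, decomposing $w^\ast$ and $w^{(t)}$ into their coordinate-wise minimum plus two residual distributions of total mass $O(\epsilon)$, and applying Cauchy--Schwarz against the quadratic form of $M^{(t)}$, yields $\|\mu(w^\ast)-\mu^{(t)}\|^2 \leq O(\epsilon)\|M^{(t)}\| + O(\lambda)$. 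For $\epsilon \leq 1/10$, this is at most a small constant fraction of $\|M^{(t)}\|$ whenever the algorithm has not yet terminated.

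Second, since $v_t$ is a near-top eigenvector, $\langle w^{(t)},\tau^{(t)}\rangle = v_t^\top M^{(t)} v_t \geq (1-o(1))\|M^{(t)}\|$. Combined with the first step, this produces a multiplicative gap $\langle w^{(t)} - w^\ast, \tau^{(t)}\rangle \geq \Omega(\|M^{(t)}\|) \geq \Omega(\lambda)$ at every non-terminating round. Third, the classical MWU regret bound, applied to the normalized losses $m_i^{(t)} = \tau_i^{(t)}/\tau_{\max}^{(t)} \in [0,1]$, gives
\[
  \textstyle\sum_t \langle w^{(t)} - w^\ast, m^{(t)}\rangle \;\leq\; \eta T + \tfrac{\log n}{\eta}.
\]
Reading this inequality \emph{in the direction of the good set} translates the per-round gap into the filter invariant that the mass removed from $\mathrm{supp}(w^\ast)$ never exceeds the mass removed from its complement; since the complement has mass at most $\epsilon$, the renormalized $w^{(t)}$ stays in $\mathcal W_{n,3\epsilon}$ throughout, which is what is required for the output to be feasible. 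Reading it in the direction of the spectral norm converts the constant gap into the iteration bound $T = O(d)$. Multiplying by the $\widetilde O(nd\log(1/\delta))$ per-iteration cost of the power method (with $\delta$ rescaled by $T$ by a union bound) delivers the claimed $\widetilde O(nd^2\log(1/\delta))$ runtime.

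The main obstacle, I expect, is step one: bounding $\|\mu(w^\ast) - \mu^{(t)}\|^2$ by something sublinear in $\|M^{(t)}\|$ using only that both $w^\ast$ and $w^{(t)}$ are $\ell_\infty$-close to uniform and that the promise controls only $\tilde M(w^\ast) \preceq \lambda I$. Without this control the cross-term could swamp the algorithm loss and the gap in step two would collapse. Everything else is packaging: the MWU update is chosen precisely so that the invariant on removed good-set mass and the iteration bound read off the same regret inequality, which is the conceptual payoff of viewing \textsc{filter} as online learning.
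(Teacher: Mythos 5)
Your proposal has the right skeleton — MWU-style soft downweighting, a regret bound, and a spectral-signature step to control the cross term $\|\mu(w^\ast)-\mu^{(t)}\|^2$ — and the first two of your three ingredients broadly match the paper's Lemma~\ref{lem:regret} and Lemma~\ref{lem:ss}. But there is a concrete gap that breaks your iteration count, and a second, less severe one in how you maintain the weight constraint.

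\paragraph{The iteration bound needs a diameter bound.} You normalize the losses by $\tau_{\max}^{(t)}$, which puts $m_i^{(t)}\in[0,1]$ and validates the MWU regret bound $\sum_t\langle w^{(t)}-w^\ast,m^{(t)}\rangle \le \eta T + (\log n)/\eta$. But the per-round gap you establish is $\langle w^{(t)}-w^\ast,\tau^{(t)}\rangle \geq \Omega(\lambda)$ in \emph{unnormalized} units; dividing by $\tau_{\max}^{(t)}$ gives $\langle w^{(t)}-w^\ast,m^{(t)}\rangle\geq\Omega(\lambda/\tau_{\max}^{(t)})$, and this is what must accumulate to $\Omega(T)$ on the left to force termination. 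If nothing bounds $\tau_{\max}^{(t)}$, this ratio can be arbitrarily small, and you cannot read off $T=O(d)$: you would get $T = O\bigl((\max_t\tau_{\max}^{(t)})\,\epsilon / \lambda\bigr)$, which is vacuous a priori. The paper closes this precisely with a preprocessing step you omit: \autoref{lem:diameter} shows the centrality promise implies $(1-2\epsilon)n$ points lie in a ball of radius $\sqrt{d\lambda/\epsilon}$, and \autoref{lem:naive-prune} extracts such a set so that $\tau_{\max}^{(t)}\le\rho = 16d\lambda/\epsilon$ uniformly. Only then does $T=O(\rho\epsilon/\lambda)=O(d)$ follow. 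Without the pruning step your argument cannot deliver the claimed iteration count, and therefore cannot deliver the claimed $\widetilde O(nd^2\log(1/\delta))$ runtime.

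\paragraph{Maintaining the constraint $w^{(t)}\in\mathcal W_{n,3\epsilon}$.} The paper enforces $w^{(t)}\in\mathcal W_{n,\epsilon}$ by an explicit Bregman (KL) projection onto the constraint set after each multiplicative step, which is exactly what lets it invoke the projected-MWU regret bound \autoref{lem:regret} against any comparator $w^\ast\in\mathcal W_{n,\epsilon}$. You instead try to argue an invariant (mass removed from $\mathrm{supp}(w^\ast)$ never exceeds mass removed from its complement), reading it off the same regret inequality. These are genuinely different maneuvers. The invariant-based argument can be made to work (the paper does something like this in \autoref{sec:breakdown} for the breakdown-point result), but it does not drop out of the regret inequality for free: you need a per-round comparison of $\sum_{i\in G}w_i\tau_i$ against $\sum_i w_i\tau_i$, which requires a pointwise spectral-signature calculation, not just the aggregate regret bound. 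As written, the step "reading this inequality in the direction of the good set translates the per-round gap into the filter invariant" is an assertion, not an argument.

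In short: the conceptual framing (filter as online learning, regret + spectral signature) is correct and is the same as the paper's, but the proposal is missing the diameter-bound-plus-pruning step, without which the width parameter, the iteration count, and the runtime all fail to close.
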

Our algorithm is  a a multiplicative weights-style
procedure.  
In particular, the output center $\nu'$ will be a weighted average of  the points $\{x_i\}_{i=1}^n$. 
The algorithm starts with the uniform weighting and iteratively downweights points which are causing the empirical covariance to have a large  eigenvalue. To ensure that we always maintain a set of good weights, we project the weights onto the   set $\mathcal W_{n,\epsilon}$ at the end of each iteration, according to KL divergence.  
For technical reasons, the algorithm  also requires a \textit{width} parameter $\rho$. It suffices to set it as the squared diameter  of the input points $\{x_i\}_{i=1}^n$, and it can be bounded by $O(d\lambda/\epsilon)$ by a simple pruning argument~(\autoref{lem:diameter} and \autoref{lem:naive-prune}).

The algorithm should be seen as a variant of the \textsc{Filter} algorithm, due to Diakonikolas, Kamath, Kane,  Li,  Moitra, and Stewart~\cite{diakonikolas2019robust}. The procedure we present here most resembles a  more streamlined version later by Steinhart~\cite{steinhardt2018robust}.  However, neither   formulated the problem quite this way or gave this analysis. 
Instead, we will re-analyze the algorithm  for the purpose of spectral sample reweighing and in a different manner than previously done in the literature.

\begin{algorithm}[ht!] \label{alg:filter}
    \caption{Multiplicative weights   for spectral sample reweighing~(\autoref{def:ssr})}
    \KwIn{A set of   points $\{x_i\}_{i=1}^n$, an iteration count $T$, and parameter $\rho,\delta$}
\KwOut{A point $\nu\in \mathbb R^d$  and weights $w\in \mathcal{W}_{n,\epsilon}$.}
\vspace{10pt}
Let  $w^{(1)} = \frac{1}{n} \I_n$ and $\eta = 1/2$. \\
\For{$t$ from $1$ to $T$}{
    Let $\nu^{(t)} = \sum_i w^{(t)}_i x_i$, $M^{(t)} = \sum_i w^{(t)}_i
(x_i- \nu^{(t)})(x_i - \nu^{(t)})^T$.\\
Compute $v^{(t)} = \textsc{ApproxTopEigenvector}(M^{(t)}, 7/8, \delta/T)$.\\
Compute $\tau^{(t)}_i =\left \langle v^{(t)}, x_i- \nu^{(t)} \right\rangle^2$ for each $i$.\\
    Set $w_i^{(t+1)} \leftarrow w_i^{(t)}\left(1-\eta  \tau^{(t)}_i/\rho\right)$ for each $i$.\\
    Project $w^{(t+1)}$ onto the set of good weights $\mathcal W_{n,\epsilon}$ (under KL divergence):
    \[
    w^{(t+1)} \leftarrow \argmin_{w\in \mathcal W_{n,\epsilon}}\,\, \text{KL}\left(w|| w^{(t)}\right).
    \]
}
\Return $\nu^{(t^*)}, w^{(t^*)}$, where $t^* = \argmin_t \|M^{(t)}\|$.
\end{algorithm}

\begin{lemma}[analysis of filter]\label{lem:filter}
Let  $\epsilon \in (0, 1/10]$ and $\{x_i\}_{i=1}^n$ be $n$ points in $\mathbb R^d$.
Suppose there exists $\nu\in \mathbb R^d$ and  $w \in \mathcal{W}_{n
    ,\epsilon}$ such that $$
    \sum_{i=1}^n w_{i}\left(x_{i}- \nu\right)\left(x_{i}- \nu\right)^{\top}\preceq \lambda I$$
for some   $\lambda  > 0$.   
Then, given $\{x_i\}_{i=1}^n$, a failure rate $\delta$ and $\rho$ such that $\rho \geq \tau_i^{(t)}$ for all $i$ and $t$, Algorithm \ref{alg:filter} finds   $w'\in \mathcal W_{n,\epsilon}$ and   $\nu' \in \mathbb R^d$ such that 
\begin{equation}\label{eqn:lem-goal}
    \sum_{i=1}^n w'_{i}\left(x_{i}- \nu'\right)\left(x_{i}- \nu'\right)^{\top}  \preceq 60\lambda I,
\end{equation} with probability at least $1-\delta$. 
    
    The algorithm terminates in   $T = O(\rho \epsilon / \lambda)$ iterations. Further, if $T =O(\text{poly}(n,d))$, then each iteration takes $\widetilde O(nd\log (1/\delta))$ time.
\end{lemma}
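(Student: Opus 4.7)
The plan is to interpret the filter as standard multiplicative weights (MWU) on the loss sequence $\ell^{(t)} = \tau^{(t)}/\rho \in [0,1]^n$ over the convex set $\mathcal{W}_{n,\epsilon}$, with the witness weights $w^*$ from the promise playing the role of the comparator. Three ingredients drive the proof: (i) the MWU regret bound with KL projection; (ii) an upper bound on the cumulative comparator loss $\sum_t \langle w^*, \tau^{(t)}\rangle$ coming from the spectral centrality promise; and (iii) the $(7/8)$-approximate top eigenvector guarantee, which converts each per-iteration quadratic form $\langle w^{(t)}, \tau^{(t)}\rangle = v^{(t)\top} M^{(t)} v^{(t)}$ into a spectral-norm bound $\|M^{(t^*)}\| \leq (8/7)\langle w^{(t^*)}, \tau^{(t^*)}\rangle$.

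For ingredient (i), with $\eta = 1/2$ and using $(\ell^{(t)}_i)^2 \leq \ell^{(t)}_i$, the standard $\log(1-y)$ potential calculation combined with the Bregman Pythagorean property of the KL projection yields
\[
\sum_{t=1}^T \langle w^{(t)}, \tau^{(t)}\rangle \;\leq\; 2\sum_{t=1}^T \langle w^*, \tau^{(t)}\rangle \;+\; O\!\bigl(\rho \cdot \text{KL}(w^* \| \mathcal{U}_n)\bigr).
\]
Because $\|w^*\|_\infty \leq 1/((1-\epsilon) n)$, we have $\text{KL}(w^* \| \mathcal{U}_n) \leq \log(1/(1-\epsilon)) = O(\epsilon)$, so the additive slack is $O(\epsilon \rho)$.

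For ingredient (ii), set $\mu^* = \sum_i w^*_i x_i$ and apply the identity
\[
\sum_i w^*_i (x_i - \nu^{(t)})(x_i - \nu^{(t)})^\top \;=\; \sum_i w^*_i (x_i - \mu^*)(x_i - \mu^*)^\top \;+\; (\mu^* - \nu^{(t)})(\mu^* - \nu^{(t)})^\top,
\]
which, together with the promise and the mean-minimizing property of $\mu^*$, bounds the first term by $\lambda I$ and hence gives $\langle w^*, \tau^{(t)}\rangle \leq \lambda + \langle v^{(t)}, \mu^* - \nu^{(t)}\rangle^2$. The main obstacle is controlling this last term: the naive Cauchy--Schwarz $(\sum_i w^*_i a_i)^2 \leq \langle w^*, \tau^{(t)}\rangle$ (with $a_i := v^{(t)\top}(x_i - \nu^{(t)})$) only gives a tautology when combined with the display above. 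The resolution is to rewrite $\mu^* - \nu^{(t)} = \sum_i (w^*_i - w^{(t)}_i)(x_i - \nu^{(t)})$, split the sum by the sign of $w^*_i - w^{(t)}_i$, apply Cauchy--Schwarz to each piece separately, and invoke $\text{TV}(w^*, w^{(t)}) \leq 2\epsilon/(1-\epsilon) = O(\epsilon)$ (triangle inequality through $\mathcal{U}_n$, valid since both weights lie in $\mathcal{W}_{n,\epsilon}$) to get
\[
\langle v^{(t)}, \mu^* - \nu^{(t)}\rangle^2 \;\leq\; O(\epsilon)\bigl(\langle w^*, \tau^{(t)}\rangle + \langle w^{(t)}, \tau^{(t)}\rangle\bigr).
\]
Rearranging (which closes thanks to $\epsilon \leq 1/10$ and a careful tracking of constants) yields $\langle w^*, \tau^{(t)}\rangle \leq O(\lambda) + O(\epsilon)\langle w^{(t)}, \tau^{(t)}\rangle$.

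Plugging this into the regret bound and absorbing the $O(\epsilon)$ coefficient on $\sum_t \langle w^{(t)}, \tau^{(t)}\rangle$ gives $\sum_t \langle w^{(t)}, \tau^{(t)}\rangle \leq O(\lambda T) + O(\epsilon \rho)$. Choosing $T = \Theta(\epsilon \rho/\lambda)$ forces the averaged quadratic form to be $O(\lambda)$, so selecting the iterate $t^*$ minimizing $\|M^{(t)}\|$ and invoking ingredient (iii) yields $\|M^{(t^*)}\| \leq 60\lambda$ for a suitable absolute constant. Each iteration is dominated by the power method applied to $M^{(t)}$ in its factored form, costing $\widetilde{O}(nd \log(T/\delta))$ time; the KL projection onto $\mathcal{W}_{n,\epsilon}$ reduces to a one-dimensional water-filling problem solvable in $O(n \log n)$; and a union bound over the $T$ calls to \textsc{ApproxTopEigenvector} controls the failure probability.
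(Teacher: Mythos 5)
Your overall architecture is exactly the paper's: interpret Algorithm~\ref{alg:filter} as MWU, apply the regret bound over the comparator $w^*$ from the promise, bound the cumulative comparator loss via spectral centrality, and choose $T = \Theta(\epsilon\rho/\lambda)$. The divergence is in how you control $\langle w^*, \tau^{(t)}\rangle$: the paper routes through the spectral-signature lemma~(\autoref{lem:ss}), whose proof decomposes the comparator weights by membership in the ``good'' set $G$ from~\autoref{lem:good}, whereas you split $w^* - w^{(t)}$ by \emph{sign} and apply Cauchy--Schwarz plus a TV bound. The two routes look cosmetically similar, but your version is quantitatively weaker, and the weakness matters at the stated endpoint $\epsilon = 1/10$.

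Concretely, your sign-split gives
\[
\bigl\langle v^{(t)}, \mu^* - \nu^{(t)}\bigr\rangle^2 \;\le\; 2\,\mathrm{TV}(w^*,w^{(t)})\,\bigl(\langle w^*,\tau^{(t)}\rangle + \langle w^{(t)},\tau^{(t)}\rangle\bigr)\;\le\; 4\epsilon\,\bigl(\langle w^*,\tau^{(t)}\rangle + \langle w^{(t)},\tau^{(t)}\rangle\bigr),
\]
(using the correct $\mathrm{TV}(w^*,w^{(t)}) \le 2\epsilon$; your claimed $2\epsilon/(1-\epsilon)$ is both looser and not quite right). Feeding this into $\langle w^*,\tau^{(t)}\rangle \le \lambda + \langle v^{(t)},\mu^*-\nu^{(t)}\rangle^2$ and solving the self-bound gives
$\langle w^*,\tau^{(t)}\rangle \le \tfrac{\lambda}{1-4\epsilon} + \tfrac{4\epsilon}{1-4\epsilon}\langle w^{(t)},\tau^{(t)}\rangle$.
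After the regret bound multiplies by $(1+\eta)=3/2$, the coefficient in front of $\sum_t\langle w^{(t)},\tau^{(t)}\rangle$ on the right-hand side is $\tfrac{6\epsilon}{1-4\epsilon}$, which equals exactly $1$ at $\epsilon=1/10$ (and exceeds $7/8$ already at $\epsilon \approx 0.092$). The rearrangement therefore does not close for $\epsilon$ at or near the top of the allowed range, and the resulting ``constant'' diverges as $\epsilon\to 1/10$. The paper avoids this because its good/bad decomposition confines the $\sqrt{\epsilon\|M^{(t)}\|}$ contribution to the $\epsilon$-mass bad set alone, producing a coefficient of roughly $\tfrac{3\epsilon}{(1-2\epsilon)^2} \approx 0.47 < 7/8$ at $\epsilon=1/10$, with comfortable slack. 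To repair your argument you would either need a smaller step size $\eta$ (changing the algorithm), a tighter range on $\epsilon$, or --- most directly --- the paper's structured decomposition into $G$ and $B$ rather than a symmetric sign split. The remaining pieces (KL bound $\le 5\epsilon$, power-method cost, KL projection in $\widetilde O(n)$, and the union bound) match the paper.
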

We first see how to prove~\autoref{thm:filter} via~\autoref{lem:filter}. Note that it requires   to bound the width parameter $\rho$. 
To ensure the condition $\rho \geq \tau^{(t)}_i$ for all $i$ and $t$, observe    that as   $\|v^{(t)}\|=1$, we have $$\tau_i^{(t)} =  \left \langle v^{(t)}, x_i- \nu^{(t)} \right\rangle^2   \leq \| x_i - \nu^{(t)}\|^2.$$
Also,  since $\nu^{(t)}$ is a convex combination of $\{x_i\}_{i=1}^n$, we can set $\rho$ to be the squared diameter of  the input data $\{x_i\}_{i=1}^n$.  
As the first step, we  show that a $(1-2\epsilon)$ fraction of the points lie within a ball of radius $\sqrt{d\lambda/\epsilon}$ under the spectral centrality condition. Then a (folklore) pruning procedure can be used to extract such set. 

\begin{lemma}[diameter bound]\label{lem:diameter}
Let  $\{x_i\}_{i=1}^n$ be $n$ points in $\mathbb R^d$.
Suppose there exists $\nu\in \mathbb R^d$ and  $w \in \mathcal{W}_{n
    ,\epsilon}$ such that 
    $
    \sum_{i=1}^n w_{i}\left(x_{i}- \nu\right)\left(x_{i}- \nu\right)^{\top}\preceq \lambda I
    $
for some   $\lambda  > 0$ and  $\epsilon \in (0,1/2)$. Then there  exists a ball of radius $\sqrt{d\lambda/\epsilon}$ that contains at least $r=(1-2\epsilon)n$ points of $\{x_i\}_{i=1}^n$.
\end{lemma}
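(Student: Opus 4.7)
The plan is to take the trace of the spectral centrality condition to turn the matrix inequality into a scalar weighted second-moment bound, then apply Markov's inequality under the probability measure induced by $w$, and finally convert a bound on $w$-mass into a bound on set cardinality using the $\ell_\infty$ constraint defining $\mathcal{W}_{n,\epsilon}$. The candidate ball will be centered at $\nu$ itself.

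More concretely, first I would take $\Tr(\cdot)$ of both sides of $\sum_i w_i (x_i-\nu)(x_i-\nu)^\top \preceq \lambda I$ to obtain
\[
\sum_{i=1}^n w_i \|x_i - \nu\|^2 \;\leq\; d\lambda.
\]
Then, viewing $w$ as a probability distribution on $[n]$, Markov's inequality applied to the non-negative random variable $i \mapsto \|x_i - \nu\|^2$ yields
\[
\sum_{i \in S} w_i \;\leq\; \epsilon, \qquad \text{where } S := \bigl\{\, i : \|x_i - \nu\|^2 > d\lambda/\epsilon \,\bigr\}.
\]
Equivalently, $\sum_{i \notin S} w_i \geq 1 - \epsilon$.

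The second step converts this $w$-mass bound into a cardinality bound. Using $w_i \leq \tfrac{1}{(1-\epsilon)n}$ for every $i$, we have
\[
1 - \epsilon \;\leq\; \sum_{i \notin S} w_i \;\leq\; \frac{|S^c|}{(1-\epsilon)n},
\]
so $|S^c| \geq (1-\epsilon)^2 n \geq (1 - 2\epsilon)n$. Thus the Euclidean ball of radius $\sqrt{d\lambda/\epsilon}$ centered at $\nu$ contains at least $(1 - 2\epsilon)n$ of the points $x_i$, which is exactly the claim.

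There is no real obstacle here; the only mildly non-routine aspect is the combination of the weighted Markov bound with the $\ell_\infty$-on-$w$ constraint to pass from ``at most $\epsilon$ fraction of $w$-mass is far from $\nu$'' to ``at most $2\epsilon$ fraction of \emph{points} is far from $\nu$''. The factor-of-two loss comes precisely from the slack $(1-\epsilon)^2 \geq 1-2\epsilon$, which is harmless for $\epsilon \in (0,1/2)$. The center $\nu$ given by the promise serves as the ball's center, so no further construction is needed.
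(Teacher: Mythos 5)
Your proof is correct and shares the paper's core strategy: take the trace to get $\sum_i w_i\|x_i-\nu\|^2 \leq d\lambda$ and apply Markov's inequality under the measure $w$. The only difference is in the final conversion from $w$-mass to cardinality: the paper argues via total variation, noting that $\mathcal{W}_{n,\epsilon}$ is the convex hull of uniform distributions on $(1-\epsilon)n$-subsets so $\mathrm{TV}(w,\mathcal{U}_n)\leq\epsilon$, whereas you use the pointwise bound $w_i \leq \tfrac{1}{(1-\epsilon)n}$ directly together with $(1-\epsilon)^2 \geq 1-2\epsilon$; both are routine and give the same constant, and your route is arguably slightly more elementary since it avoids the convex-hull characterization of $\mathcal{W}_{n,\epsilon}$.
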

The proof of the lemma can be found in Appendix~\ref{sec:diameter}

\begin{lemma}[folklore; see~\cite{dong2019quantum}]
\label{lem:naive-prune}
Let $\epsilon < 1/2$ and $\delta > 0$.
Let $S \subset \Real^d$ be a set of $n$ points. Assume there exists a ball $B$ of radius $r$ and a subset $S' \subseteq S$ such that $|S'| \geq (1 - \epsilon)n$  and $S' \subset B$.
Then there is an algorithm $\textsc{Prune}(S, r, \delta)$ that runs in time $O(n d \log 1 / \delta)$ and with probability $1 - \delta$ outputs a set  $R \subseteq S$ so that $S' \subseteq R$, and $R$ is contained in a ball of radius $4r$.
\end{lemma}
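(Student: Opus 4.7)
My plan is to prove this by a simple random-sampling argument that exploits the triangle inequality. The key geometric observation is that any two points in the good set $S'$ are at distance at most $2r$ (since both lie in $B$ of radius $r$). So if we sample a point $x \in S'$, then $S' \subseteq B(x, 2r)$, and therefore the ball $B(x, 2r)$ will contain at least $(1-\epsilon)n$ points of $S$. More importantly, the converse holds approximately: if a point $x \in S$ (not necessarily in $S'$) satisfies $|S \cap B(x, 2r)| \geq (1-\epsilon)n$, then by pigeonhole this set intersects $S'$ in at least $(1-2\epsilon)n > 0$ points (since $\epsilon < 1/2$). Picking any witness $y$ in this intersection and applying the triangle inequality $\|z - x\| \leq \|z - y\| + \|y - x\| \leq 2r + 2r = 4r$ for every $z \in S'$ yields $S' \subseteq B(x, 4r)$.

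The algorithm is then straightforward. Sample $k = \lceil \log_2(1/\delta) \rceil$ candidates $x_1, \ldots, x_k$ uniformly at random from $S$; for each $j$, compute $N_j = |S \cap B(x_j, 2r)|$ by a single $O(nd)$ pass over $S$; let $j^\star$ be any index with $N_{j^\star} \geq (1-\epsilon)n$; and output $R = S \cap B(x_{j^\star}, 4r)$. By construction $R \subseteq B(x_{j^\star}, 4r)$, and by the pigeonhole argument above we get $S' \subseteq B(x_{j^\star}, 4r) \cap S = R$.

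For correctness, the only failure mode is that no sampled $x_j$ satisfies the threshold $N_j \geq (1-\epsilon)n$. But each $x_j \in S'$ trivially satisfies this (by the first direction of the observation), so failure occurs only if every sample lies outside $S'$. Since $|S'|/|S| \geq 1-\epsilon > 1/2$, each sample independently lies in $S'$ with probability at least $1/2$, so the failure probability is at most $(1/2)^k \leq \delta$ by the choice of $k$. The runtime is $k$ distance computations against $S$, which costs $O(nd \log(1/\delta))$ in total, matching the claim.

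There is no real obstacle here; the only thing to be a little careful about is distinguishing the radius-$2r$ ball used for \emph{detecting} a good center from the radius-$4r$ ball used for \emph{defining} $R$ — the factor-of-two slack is exactly what lets us certify containment of $S'$ without knowing whether our sampled center is itself in $S'$.
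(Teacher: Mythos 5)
Your proof is correct and follows the standard folklore argument that the paper cites from \cite{dong2019quantum} without reproducing: sample a few candidate centers, use a counting test (radius-$2r$ ball) to certify one, and apply the triangle inequality to conclude containment of $S'$ in the radius-$4r$ ball. Since the paper gives no proof of its own for this lemma, there is nothing beyond this to compare; the pigeonhole step, the $(1/2)^k \le \delta$ failure bound, and the $O(nd\log(1/\delta))$ runtime accounting are all sound.
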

Using the lemmas above, we can immediately prove the main theorem.
\begin{proof}[Proof of~\autoref{thm:filter}]
Given $S= \{x_i\}_{i=1}^n, \lambda$ and $\epsilon$, we first run the \textsc{Prune}($S, r, \delta/2$) algorithm, with $r= \sqrt{d\lambda /\epsilon}$.  By~\autoref{lem:diameter}, the spectral centrality condition~\eqref{eq:p} implies  there  exists a ball of radius $r$   containing at least $(1-2\epsilon)n$ points of $S$. 
Therefore, \autoref{lem:naive-prune} guarantees that it will return a set $R \subseteq S$ of  at least $(1-2\epsilon)n$ points contained in a ball of radius $4r$.  
Hence by \autoref{lem:filter}, given $R$,  $\rho = 16d\lambda/\epsilon$ and failure rate $\delta / 2$, \autoref{alg:filter} finds  $w'\in \mathcal W_{|R|,\epsilon}$ and   $\nu' \in \mathbb R^d$ such that $$
    \sum_{i\in R} w'_{i}\left(x_{i}- \nu'\right)\left(x_{i}- \nu'\right)^{\top}  \preceq 60\lambda I,$$ with probability at least $1-\delta/2$.  Let $w_i''= w_i'$ if $i\in R$ and $w_i'' =0$ otherwise. since $\frac{1}{(1-\epsilon)(1-2\epsilon) }\leq  \frac{1}{1-3\epsilon}$ for $\epsilon < 1/3$, 
     we have $w'' \in \mathcal{W}_{n,3\epsilon}$
    Moreover, $\sum_{i=1}^n w''_{i}\left(x_{i}- \nu'\right)\left(x_{i}- \nu'\right)^{\top}  \preceq 60\lambda I$, as desired.
    
    The overall procedure succeeds with probability at least $1-\delta$ by a union bound, since \autoref{alg:filter} and \textsc{Prune} are both set up to have a failure rate at most $\delta/2$. Now for the run-time, \textsc{Prune}($S,r,\delta$) takes $O(nd\log(1/\delta))$ by~\autoref{lem:naive-prune}. Moreover, by \autoref{lem:filter}, \autoref{alg:filter} runs in time $\widetilde O(nd \log(1/\delta) \cdot T)$ time, with $T= O(\rho \epsilon  / \lambda)$ being the iteration count. Since $\rho = 16d\lambda/ \epsilon$, we have $T= O(d)$, and this immediately yields the desired runtime.
\end{proof}

\paragraph{Analysis via regret minimization} Now it remains to analyze~\autoref{alg:filter}, proving~\autoref{lem:filter}. We will cast the algorithm under the framework of regret minimization using multiplicative weights update (MWU). To see that, we consider $\{x_i\}_{i=1}^n$ as the set of actions, $w^{(t)}$ as our probability distribution over the actions at time $t$, and we receive a loss vector $\tau^{(t)}$ each round. The weights are updated in a standard fashion. Then, to ensure that the weights lie in the constraint set $\mathcal{W}_{n,\epsilon}$, we perform a projection step. (Note that the algorithm is implementing both the player and the adversary.)
The following is a classic regret bound of MWU for the online linear optimization problem.

\begin{lemma}[regret bound \cite{arora2012multiplicative}]\label{lem:regret}
Suppose $\rho \geq \tau^{(t)}_i$ for every $t$ and $i$. Then for \textit{any}  weight $w\in \mathcal{W}_{n,\epsilon}$, \autoref{alg:filter} satisfies that 
    \begin{align}
        \frac{1}{T}\sum_{t=1}^T \left\langle w^{(t)}, \tau^{(t)} \right\rangle \leq\frac{1}{T}(1+\eta)
        \sum_{t=1}^T \left\langle w,
        \tau^{(t)}\right \rangle  + \frac{\rho \cdot \text{KL}(w|| w^{(1)})}{T \eta },
    \end{align} 
    for any choice of step size $\eta \leq  1/2$.
    \end{lemma}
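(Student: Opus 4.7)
My plan is to use the classical mirror-descent potential argument with $\Phi_t := \text{KL}(w \,||\, w^{(t)})$ for an arbitrary comparator $w \in \mathcal{W}_{n,\epsilon}$. The goal is to derive the per-step inequality
\[
\Phi_{t+1} - \Phi_t \;\leq\; \frac{\eta}{\rho}\Bigl[(1+\eta)\langle w,\tau^{(t)}\rangle - \langle w^{(t)},\tau^{(t)}\rangle\Bigr],
\]
and then telescope over $t = 1, \ldots, T$. To this end I would split each iteration into two substeps: (i) the multiplicative reweighing $\tilde{w}^{(t+1)}_i := w^{(t)}_i(1 - \eta\tau_i^{(t)}/\rho)$, which in general leaves the simplex, followed by (ii) the KL-projection onto $\mathcal{W}_{n,\epsilon}$.

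For substep (i), I would renormalize to $\hat{w}^{(t+1)} := \tilde{w}^{(t+1)}/Z^{(t+1)}$ with $Z^{(t+1)} = 1 - (\eta/\rho)\langle w^{(t)},\tau^{(t)}\rangle$, and compute directly from the definition of KL:
\[
\text{KL}(w \,||\, \hat{w}^{(t+1)}) - \text{KL}(w \,||\, w^{(t)}) \;=\; \log Z^{(t+1)} \;-\; \sum_i w_i\log\!\Bigl(1 - \tfrac{\eta\tau_i^{(t)}}{\rho}\Bigr).
\]
The first term I would bound by $-(\eta/\rho)\langle w^{(t)},\tau^{(t)}\rangle$ via $\log(1-x)\leq -x$. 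Each summand in the second I would bound by $(\eta/\rho)(1+\eta)\tau_i^{(t)}$ via $-\log(1-x)\leq x+x^2$ on $x\in[0,1/2]$ (valid since $\eta\leq 1/2$ and $\tau_i^{(t)}\leq\rho$ together force $\eta\tau_i^{(t)}/\rho \leq 1/2$) combined with $(\tau_i^{(t)})^2 \leq \rho\cdot\tau_i^{(t)}$. Summing against $w$ yields the desired per-step inequality, but with $\hat{w}^{(t+1)}$ in place of $w^{(t+1)}$.

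For substep (ii), observe that the KL-projection of $\tilde{w}^{(t+1)}$ onto any subset of the simplex coincides with that of $\hat{w}^{(t+1)}$, because the two divergences differ by the $w$-independent additive constant $\log Z^{(t+1)}$. Then by the Pythagorean inequality for Bregman projections onto a convex set, for every $w \in \mathcal{W}_{n,\epsilon}$,
\[
\text{KL}(w \,||\, w^{(t+1)}) \;\leq\; \text{KL}(w \,||\, \hat{w}^{(t+1)}),
\]
so the projection only tightens the per-step bound. Telescoping over $t=1,\ldots,T$, discarding the nonnegative $\text{KL}(w \,||\, w^{(T+1)})$ term, rearranging for $\langle w^{(t)},\tau^{(t)}\rangle$, and dividing by $T$ reproduces exactly the stated regret inequality.

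The main conceptual obstacle is that the projection is onto a proper convex subset of the simplex, so one cannot simply invoke the textbook MWU analysis, whose proof is tied to the closed-form normalized exponential update. The Pythagorean property of Bregman projection onto a convex set handles this cleanly, and the remaining work reduces to two elementary logarithmic inequalities and the observation that rescaling $\tilde{w}^{(t+1)}$ leaves its KL-projection onto any subset of the simplex unchanged.
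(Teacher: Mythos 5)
Your proof is correct, and the strategy — tracking the potential $\Phi_t = \text{KL}(w\,\|\,w^{(t)})$, establishing a per-step decrease via the elementary inequalities $\log(1-x)\leq -x$ and $-\log(1-x)\leq x+x^2$ on $[0,1/2]$, and invoking the generalized Pythagorean inequality for Bregman projections to absorb the projection onto $\mathcal{W}_{n,\epsilon}$ — is the right one. The paper does not actually give its own proof of this lemma; it cites the Arora--Hazan--Kale survey as a black box, but the standard statement there uses the ``total unnormalized weight'' potential $\log\sum_i W_i^{(t)}$ and does not directly cover the variant with a Bregman projection onto a proper convex subset of the simplex. Your proof is exactly the folklore extension that makes the citation legitimate: switching to the KL potential recovers the same per-step bound (your identity with the $\log Z^{(t+1)}$ term parallels the usual potential decrease), and the Pythagorean inequality shows the projection can only help. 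The observation that $\tilde{w}^{(t+1)}$ and its normalization $\hat{w}^{(t+1)}$ yield the same Bregman projection because the objectives differ by an additive constant $\log Z^{(t+1)}$ is a clean way to sidestep the fact that $\tilde{w}^{(t+1)}$ is not a probability vector. Two minor points worth stating explicitly if you write this out in full: first, $\mathcal{W}_{n,\epsilon}$ must be convex for the Pythagorean inequality to apply (it is, being an intersection of the simplex with an $\ell_\infty$ ball), and second, all weights $\tilde{w}^{(t+1)}_i$ remain nonnegative since $\eta\tau_i^{(t)}/\rho\leq 1/2<1$, so the logs and KL divergences are well-defined throughout.
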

\noindent
     In addition, we claim the following  lemma and delay its proof to the appendix~(\autoref{lem:ssw}).  
     \begin{lemma}\label{lem:ss}
          Under the centrality promise~\eqref{eq:p}, for any $w'\in \mathcal{W}_{n,\epsilon}$,  
    \begin{equation}
        \|\nu- \nu(w')\| \le  \frac{1}{1-\sqrt{2\epsilon}} \left(   \sqrt {2\lambda} +   \sqrt{2\epsilon\|M(w')\|}\right),
    \end{equation}
    where $\nu(w') = \sum_{i}w'_i x_i$ and $M(w') = \sum_i w'_i (x_i - \nu(w'))(x_i - \nu(w'))^\top$.
     \end{lemma}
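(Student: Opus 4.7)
The plan is to introduce an auxiliary sub-distribution that is dominated by both $w$ and $w'$ and bound $\|\nu - \nu(w')\|$ by triangle inequality through its mean. Concretely, set $\tilde u_i := \min(w_i, w'_i)$, $U := \sum_i \tilde u_i$, and $\tilde w := \tilde u / U \in \Delta_n$. This $\tilde w$ serves as a ``bridge'': it inherits the spectral centrality promise from $w$ (since $\tilde u_i \le w_i$ entrywise), and it differs from $w'$ only on the excess mass $c_i := (w'_i - w_i)_+$, which satisfies $\sum_i c_i = 1 - U$. The first step is to exploit $\|w\|_\infty, \|w'\|_\infty \le \tfrac{1}{(1-\epsilon)n}$, giving $\sum_i \max(w_i, w'_i) \le \tfrac{1}{1-\epsilon}$ and hence the quantitative bound $U \ge \tfrac{1-2\epsilon}{1-\epsilon}$, equivalently $1 - U \le \tfrac{\epsilon}{1-\epsilon}$.

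I would then write $\|\nu - \nu(w')\| \le \|\nu - \nu(\tilde w)\| + \|\nu(\tilde w) - \nu(w')\|$ and bound each piece. For the first, $\tilde u \le w$ and the promise give $\sum_i \tilde u_i (x_i - \nu)(x_i - \nu)^\top \preceq \lambda I$, so after normalizing, $\sum_i \tilde w_i (x_i - \nu)(x_i - \nu)^\top \preceq (\lambda / U)\, I$. The standard variance identity
\[
\sum_i \tilde w_i (x_i - \nu)(x_i - \nu)^\top = M(\tilde w) + (\nu(\tilde w) - \nu)(\nu(\tilde w) - \nu)^\top
\]
together with $M(\tilde w) \succeq 0$ yields $\|\nu(\tilde w) - \nu\|^2 \le \lambda/U$. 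For the second, the identity $w'_i = \tilde u_i + c_i$ rearranges to $\nu(\tilde w) - \nu(w') = -\tfrac{1}{U}\sum_i c_i (x_i - \nu(w'))$, and for any unit $v$, Cauchy--Schwarz combined with $c_i \le w'_i$ gives
\[
\Bigl(\sum_i c_i \langle v, x_i - \nu(w') \rangle\Bigr)^2 \le \Bigl(\sum_i c_i\Bigr)\Bigl(\sum_i w'_i \langle v, x_i - \nu(w')\rangle^2\Bigr) \le (1-U)\,\|M(w')\|,
\]
so $\|\nu(\tilde w) - \nu(w')\| \le \tfrac{1}{U}\sqrt{(1-U)\,\|M(w')\|}$.

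Combining and substituting the bounds on $U$ and $1-U$ gives
\[
\|\nu - \nu(w')\| \le \frac{\sqrt \lambda}{\sqrt U} + \frac{\sqrt{(1-U)\,\|M(w')\|}}{U}.
\]
Using the elementary estimate $\sqrt U \ge 1 - 2\epsilon$, valid for $\epsilon \in (0,1/2)$ since it reduces to $(1-\epsilon)(1-2\epsilon) \le 1$, the first summand is at most $\tfrac{\sqrt \lambda}{1-2\epsilon}$, and plugging in $1-U \le \tfrac{\epsilon}{1-\epsilon}$ and $\tfrac{1}{U} \le \tfrac{1-\epsilon}{1-2\epsilon}$ simplifies the second to $\tfrac{\sqrt{\epsilon(1-\epsilon)\,\|M(w')\|}}{1-2\epsilon}$. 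Both are dominated by the stated bound. The main subtlety is not any single step but choosing this bridge: trying to compare $w$ and $w'$ directly via $\|w - w'\|_1$ leads to an implicit recursive inequality in $\|\nu - \nu(w')\|$ and loses the clean $\sqrt \epsilon$ factor on $\sqrt{\|M(w')\|}$; passing through the common sub-distribution $\tilde u$ is what makes the $\tfrac{1}{1-2\epsilon}$ factor fall out cleanly.
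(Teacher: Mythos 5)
Your proof is correct, and it proceeds by a genuinely different route than the paper's. The paper (see Lemma~\ref{lem:ssw} in the appendix) first invokes a convexity argument (Lemma~\ref{lem:good}) to replace $w$ by the uniform distribution on a good set $G$ of size $(1-\epsilon)n$, then writes $\|\nu(w') - \nu\|^2 = \sum_i w_i' \langle \nu(w') - \nu, x_i - \nu\rangle$ and splits this sum into three pieces: the contribution over $G$ with uniform weights, the contribution over $G$ with the weight deficit $w_i' - \tfrac{1}{(1-\epsilon)n}$, and the contribution over $B = [n]\setminus G$. This produces an implicit inequality in $\|\nu(w') - \nu\|$ which is then solved. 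Your approach avoids both the reduction to a vertex of the weight polytope and the implicit inequality: the bridge distribution $\tilde w \propto \min(w, w')$ directly inherits the promise from $w$ after renormalization, and the deviation from $w'$ lives entirely on the small excess mass $c = (w' - w)_+$, letting a single Cauchy--Schwarz finish the job. I checked the key identity $\nu(\tilde w) - \nu(w') = -\tfrac1U\sum_i c_i(x_i - \nu(w'))$ (it follows from $\sum_i w_i'(x_i - \nu(w')) = 0$ and $\tilde u_i = w_i' - c_i$), the bound $U \ge \tfrac{1-2\epsilon}{1-\epsilon} \ge (1-2\epsilon)^2$, and the final substitution, and they are all correct. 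Your argument is cleaner and in fact yields a slightly sharper constant ($\sqrt\lambda + \sqrt{(1-\epsilon)\epsilon\|M(w')\|}$ rather than $\sqrt{2\lambda} + \sqrt{\epsilon\|M(w')\|}$). The trade-off is that the paper's $G$-based decomposition more naturally generalizes to the variants it needs elsewhere (e.g.\ Lemma~\ref{lem:robustspec}, which centers at $\mu_G$ to drop the $\sqrt\lambda$ term, and the mostly-good-weight variant Lemma~\ref{lem:ssc}), whereas your bridge construction would need to be re-engineered for those.
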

    This type of inequality is generally known as the \textit{spectral signature} lemma from the recent algorithmic robust statistics literature; see \cite{li2018principled, diakonikolas2019recent}. 

With these technical ingredients, we are now ready to analyze the algorithm.
\begin{proof}[Proof of~\autoref{lem:filter}]
Notice first that since $v^{(t)}$ is a $7/8$-approximate largest eigenvector of $M^{(t)}=\sum_i w^{(t)}_i
(x_i- \nu^{(t)})(x_i - \nu^{(t)})^T$, then for all $t$,
\begin{equation} \label{eq:lhs}
    \sum_i w_i^{(t)} \tau_i^{(t)} = \sum_i w_i  \left \langle v^{(t)}, x_i- \nu^{(t)}
    \right\rangle^2 = v^{(t)\top } M^{(t)}v^{(t)} \geq \frac 7 8 \left\|M^{(t)}\right\|_2.
\end{equation}
Let $w$ be the good weights that satisfies our centrality promise \eqref{eq:p}. Summing over the $T$ rounds and applying the  the regret bound~(\autoref{lem:regret}), we obtain that
\begin{equation*}
     \frac{7}{8T} \sum_{t=1}^T \left\|M^{(t)}\right\|_2
  \leq   \frac{1}{T} \sum_{t=1}^T \left \langle w^{(t)}  , \tau^{(t)}\right\rangle \leq (1+\eta) \frac{1}{T}     \sum_{t=1}^T\left \langle w , \tau^{(t)}\right \rangle +\frac{\rho\cdot  \text{KL}(w|| w^{(1)}) }{T\eta}.
\end{equation*}
 The KL term can be  bounded because $w$ and $w^{(1)}$ are both close to uniform. 
Indeed, it is a simple calculation to verify that $ \text{KL}(w|| w^{(1)}) \leq 5\epsilon$, using the fact $w_i \leq 1/(1-\epsilon)n$ (\autoref{lem:kl}).
Plugging in $\eta = 1/2$, we get 
\begin{equation}\label{eq:imm}
     \frac{7}{8T} \sum_{t=1}^T \left\|M^{(t)}\right\|_2
  \leq \frac{3}{2T}     \sum_{t=1}^T\left \langle w , \tau^{(t)}\right \rangle +\frac{10\epsilon\rho}{T}.
  \end{equation}
Our eventual goal is to bound this by $O(\lambda)$. Note that the second term is easy to control---just set  $T = \Omega(\rho\epsilon/\lambda)$, and this will determine the iteration count and thus the runtime.

The remaining is mostly tedious calculations to bound the first term. The reader can simply skip forward to~\eqref{eq:concl}. For those interested: we proceed by expanding the first term on the right-hand side,
\begin{align}
    \frac{3}{2T}
    \sum_{t=1}^T\left \langle w , \tau^{(t)}\right\rangle&=  \frac{3}{2T} \sum_{t=1}^T  \sum_{i=1}^n w_i \left\langle x_i - \nu^{(t)}
        ,v^{(t)}\right \rangle^2\label{eq:aa1}\\
        &= \frac{3}{2T} \sum_{t=1}^T \sum_{i=1}^n   w_i \left(\left\langle x_i - \nu
    ,v^{(t)}\right \rangle^2 +  \left\langle\nu- \nu^{(t)}, v^{(t)}\right \rangle^2\right)\label{eq:aa2}\\
      &\leq  \frac32 \lambda +  \frac{3}{2T} \sum_{t=1}^T \left\langle\nu- \nu^{(t)}, v^{(t)}\right \rangle^2\label{eq:aa3}\\
    &\leq \frac32 \lambda + \frac{3}{2T} \sum_{t=1}^T  \left\| \nu - \nu^{(t)}\right\|_2^2,\label{eq:aa4}
\end{align}
where~\eqref{eq:aa1} is by the definition that $\tau^{(t)}_i =\left \langle v^{(t)}, x_i- \nu^{(t)} \right\rangle^2$, \eqref{eq:aa2} uses the definition of $\nu^{(t)}$, \eqref{eq:aa3} follows from the spectral centrality assumption~\eqref{eq:p}, and \eqref{eq:aa4} is by the fact that $\|v^{(t)}\|=1$.  Since $\nu^{(t)} = \sum_{i=1}^n w^{(t)}_i x_i$, we can apply~\autoref{lem:ss} to bound $\| \nu - \nu^{(t)}\| $ and it follows that
\begin{align*}
    \frac{3}{2T} \sum_{t=1}^T  \left\| \nu - \nu^{(t)}\right\|_2^2 \leq \frac{3}{2T}\left(\sum_{t=1}^T \frac{25}{2}\lambda + \frac{1}{3}\left\|M^{(t)}\right\|_2\right),
\end{align*}
for  $\epsilon \leq 1/10$.
Plugging the bound into~\eqref{eq:aa4}, we obtain 
\begin{align}\label{eq:right-2}
    \frac{3}{2T}
    \sum_{t=1}^T \langle w , \tau^{(t)}\rangle&\leq \frac32 \lambda + \frac{3}{2T}\left(\sum_{t=1}^T \frac{25}{2}\lambda + \frac{1}{3}\left\|M^{(t)}\right\|_2\right) = \frac {81}{4}\lambda+ \frac{1}{2T}\sum_{t=1}^T \left\|M^{(t)}\right\|_2.
\end{align}
Finally, substituting this back into~\eqref{eq:imm}, we see that 
\begin{align}\label{eq:concl}
     \frac{7}{8T} \sum_{t=1}^T \left\|M^{(t)}\right\|_2
  \leq  \frac {81}{4}\lambda+ \frac{1}{2T}\sum_{t=1}^T \left\|M^{(t)}\right\|_2 + \frac{10\epsilon\rho}{T}.
\end{align}
Now if we set  $T = 10\rho \epsilon / \lambda$, then the last term is $\lambda$. Rearranging  yields that  $  \frac{1}{T} \sum_{t=1}^T \left\|M^{(t)}\right\|_2 \leq 60\lambda$. This shows that within $T =O( \rho\epsilon/ \lambda)$ iterations we have achieved our goal~\eqref{eqn:lem-goal}.  

Now it remain to argue the cost of each iteration.
For approximating the largest eigenvector, the well-known power method computes a constant-approximation  in $ O(nd\log (1/\alpha))$ time with a failure probability  at most $\alpha$~\cite{kuczynski1992estimating}. We   set $\alpha  = \delta / T$, and an application of union bound implies that all the $T$ calls to the power method jointly succeed with probability at least $1-\delta$. This gives a total run-time of $\widetilde O(nd\log (1/\delta))$, since $T =O(\text{poly}(n,d))$, and bounds the overall failure probability of the algorithm by $\delta$.
Finally, we remark that  the  KL projection onto $\mathcal{W}_{n,\epsilon}$   can be computed exactly in $O(n)$  time, by the deterministic procedures in \cite{herbster2001tracking, warmuth2008randomized}. This completes the run-time analysis.
\end{proof}

\paragraph{Faster algorithm}
Under the same assumptions, the spectral sample reweighing problem can be solved in $\widetilde O(nd \log (1/\delta))$ time, by adapting a \textit{matrix} multiplicative weight scheme, due to Dong, Hopkins and Li~\cite{dong2019quantum}. 
The algorithm and its analysis generally follow from the proofs therein. The details can be found in Appendix~\ref{sec:mmwu}. 

As we will see soon, applying this procedure  directly  match the fastest known algorithms for both robust and heavy-tailed settings. 

\paragraph{Gradient descent analysis} 
As we argued, \autoref{alg:filter} is essentially an online linear optimization scheme, with the objective of minimizing $\sum_{t=1}^T \langle w^{(t)}, \tau^{t}\rangle$. It is known that the multiplicative weights   rule employed here can be seen an entropic mirror descent update~\cite{pmlr-v32-steinhardtb14}. Therefore, it is natural to ask whether an additive update/gradient descent procedure would solve the problem as well. In Appendix~\ref{sec:gd}, we provide such an analysis (\autoref{thm:gdbound}). More importantly, the resulting scheme is equivalent of the gradient descent algorithm analyzed by~\cite{cheng2020high}. Our analysis improves upon the iteration complexity  from their work (in the concrete settings of robust mean estimation, under bounded second moment and sub-gaussian distributions).

\section{Estimation under Corruptions}\label{sec:robust}
We now apply \autoref{alg:filter} for the robust mean estimation problem.
We focus on the bounded second moment distributions, where   \autoref{alg:filter} can be invoked in a   black-box fashion. A slight variant of it can be used for the sub-gaussian setting, where we achieve a more refined analysis; see Appendix \ref{sec:subg-filter}.

The problem is formally defined below.
\begin{definition}[robust mean estimation]
Given a distribution $D$ over $\mathbb R^d$ with bounded covariance and a parameter $ 0\leq \epsilon <1/2$, the adversary draws $n$ i.i.d.\ samples $D$, inspects the samples, then   removes at most $\epsilon n$ points and replaces them with arbitrary points. We call the resulting dataset $\epsilon$-corrupted (by an adaptive adversary).

The goal is to estimate the mean of $D$ only given the $\epsilon$-corrupted set of samples. 
\end{definition} 
Using a meta-algorithm for approximating the spectral sample reweighing problem, we will show the following. In particular, using Algorithm \ref{alg:filter}  matches the run-time and statistical guarantee of the original \textsc{Filter} algorithm.
\begin{theorem}[robust mean estimation via sample reweighing]\label{thm:robust}
Let $D$ be a distribution  over $\mathbb R^d$ with mean $\mu$ and covariance $\Sigma \preceq  \sigma^2 I$ and $\epsilon\leq 1/10$.  Given an $\epsilon$-corrupted set of    $n = \Omega(d\log d / \epsilon)$ samples, there is an algorithm that runs in time $\widetilde O(nd^2)$ that with constant  probability outputs an estimate $\widehat \mu$ such that $\|\widehat \mu - \mu\| \leq O(\sigma \sqrt{\epsilon})$.

Further, the algorithm is via a black-box application of Algorithm~\ref{alg:filter}, which can be replaced by any constant approximation algorithm for the spectral sample reweighing problem (\autoref{def:ssr}).
\end{theorem}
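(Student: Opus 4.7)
The plan is a one-shot reduction to spectral sample reweighing. Let $G\subseteq[n]$ with $|G|\geq(1-\epsilon)n$ index the clean samples (those $i$ with $Y_i=X_i$), and define $w^*_i = 1/|G|$ for $i\in G$ and $0$ otherwise; by construction $w^*\in\mathcal W_{n,\epsilon}$. By standard matrix concentration (e.g.\ matrix Bernstein, applied after pruning the $O(\epsilon n)$ largest-norm samples), with $n=\Omega(d\log d/\epsilon)$ one has
\[
\sum_{i\in G} w^*_i (X_i-\mu)(X_i-\mu)^{\top} \preceq C\sigma^2\cdot I
\]
with high probability, for a universal constant $C$. Since $Y_i=X_i$ on $G$, this establishes the spectral centrality promise~\eqref{eq:p} on $\{Y_i\}$ with $\nu=\mu$ and $\lambda = C\sigma^2$. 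The same concentration also yields $\|\mu(w^*)-\mu\|=O(\sigma\sqrt{\epsilon})$.

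Next, feed $(\{Y_i\},\lambda,\epsilon)$ to \autoref{alg:filter} and apply \autoref{thm:filter}: in time $\widetilde O(nd^2)$ and with constant success probability, this returns $w'\in\mathcal W_{n,3\epsilon}$ and $\nu'=\mu(w')$ satisfying $\|M(w')\|\leq 60\lambda=O(\sigma^2)$. The estimator outputs $\widehat\mu:=\nu'$.

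It remains to show $\|\widehat\mu-\mu\|=O(\sigma\sqrt{\epsilon})$. Let $c_i:=\min(w^*_i,w'_i)$ denote the overlap of $w^*$ and $w'$, which is supported on $G$. Since $w'\in\mathcal W_{n,3\epsilon}$ puts at most $O(\epsilon)$ mass outside $G$ and at most $(1+O(\epsilon))/|G|$ per coordinate inside $G$, the excess masses satisfy $\|w^*-c\|_1=\|w'-c\|_1=O(\epsilon)$. Writing $\mu(w^*)-\nu' = \sum_i(w^*_i-c_i)Y_i - \sum_i(w'_i-c_i)Y_i$, testing against an arbitrary unit vector, and applying Cauchy--Schwarz to each sum (after absorbing the additive centering terms using the equality of the two excess masses) yields
\[
\|\mu(w^*)-\nu'\| \leq O\!\left(\sqrt{\epsilon\,\|M(w^*)\|} + \sqrt{\epsilon\,\|M(w')\|}\right) = O(\sigma\sqrt{\epsilon}),
\]
where we use $\|M(w^*)\|\leq\lambda$ and $\|M(w')\|\leq 60\lambda$. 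The triangle inequality with $\|\mu-\mu(w^*)\|=O(\sigma\sqrt{\epsilon})$ now gives the claim.

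The main obstacle is this last spectral-signature step: applying \autoref{lem:ss} directly would produce only $\|\mu-\nu'\|=O(\sqrt{\lambda})=O(\sigma)$, missing the $\sqrt{\epsilon}$ factor. The sharp rate must come from exploiting the overlap of the two good weightings $w^*$ and $w'$, rather than using the fact that $w^*$ is good alone. The runtime and constant success probability of the overall procedure are then inherited directly from \autoref{thm:filter}, and the universality of the reduction (any $\alpha$-approximate sample reweighing oracle would do) is manifest from the argument.
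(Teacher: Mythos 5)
Your proposal is correct, and the high-level route — establish the spectral centrality promise on the good samples, call the filter, then use a spectral-signature-type bound and triangle inequality — matches the paper's. The one place you diverge is the final spectral-signature step, and you correctly identify this as the crux: a naive application of \autoref{lem:ss} centered at $\mu$ would leave a $\sqrt{\lambda}=O(\sigma)$ term that swamps the $\sigma\sqrt{\epsilon}$ target. The paper resolves this by invoking the specialized Lemma~\ref{lem:robustspec}, which is Lemma~\ref{lem:ssw} with $\nu$ replaced by $\mu_G$; since the ``uniform-on-$G$'' term in that decomposition then has zero inner product with $\mu(w')-\mu_G$, the $\sqrt{\lambda}$ contribution vanishes. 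You instead replay the argument from scratch via a symmetric overlap decomposition: write $w^*=c+e^*$, $w'=c+e'$ with $c=\min(w^*,w')$ and $\|e^*\|_1=\|e'\|_1=O(\epsilon)$, recenter each residual sum at its own weighted mean (valid because the two excess masses are equal, so the centering terms combine into $\epsilon'\langle v,\mu(w^*)-\mu(w')\rangle$ and can be absorbed), and apply Cauchy--Schwarz to get $(1-\epsilon')\|\mu(w^*)-\nu'\|\le\sqrt{\epsilon'\|M(w^*)\|}+\sqrt{\epsilon'\|M(w')\|}$. This is the same mechanism as Lemma~\ref{lem:ssw} (split the weight difference into pieces, Cauchy--Schwarz each against the relevant second-moment matrix), just organized around the overlap of two weightings rather than around the partition $G/B$ plus deviation-from-uniform; your version is somewhat more symmetric and does not require proving the three-way decomposition of Lemma~\ref{lem:ssw} first. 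One small imprecision worth tightening: you define $G$ as the clean indices, then mention pruning $O(\epsilon n)$ further points to get concentration — after that pruning $G$ has size $\geq(1-2\epsilon)n$, so $w^*\in\mathcal W_{n,2\epsilon}$ and the invocation of the promise needs $\mathcal W_{n,2\epsilon}$ rather than $\mathcal W_{n,\epsilon}$ (with the output in $\mathcal W_{n,O(\epsilon)}$ accordingly); the paper sidesteps this by citing Lemma~\ref{lem:second}, which packages the existence of a good $G$ of size $\geq(1-\epsilon)n$ directly. Also note you center the promise at $\mu$ rather than $\mu_G$; that is fine since $M(w^*)\preceq\sum_iw^*_i(Y_i-\mu)(Y_i-\mu)^\top$, but it is worth saying explicitly that this inequality is what lets you pass from your stated promise to the $\|M(w^*)\|\le\lambda$ you need in the overlap bound.
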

Information-theoretically, \autoref{thm:robust} is  near optimal. 
     It is known that the sample complexity of $d\log d/ \epsilon$ is tight, only up to the log factor. The estimation error $O(\sqrt{\epsilon})$  is tight up to constant factor.
 
Our analysis requires a set of   deterministic conditions to hold for the input, which follow from Lemma A.18 of~\cite{diakonikolas2017being}.  This is meant to obtain the desired spectral centrality condition and to bound the final estimation error.
\begin{lemma}[deterministic conditions~\cite{diakonikolas2017being}]\label{lem:second}
Let $S$ be an $\epsilon$-corrupted set of $\Omega(d\log d /\epsilon)$ samples from $D$ with mean $\mu$ and covariance $\Sigma \preceq I$. With high constant probability, $S$ contains a subset $G$ of size  at least $(1-\epsilon) n$ such that
 \begin{align}
     &\| \mu - \mu_G\| \le  O(\sqrt{\epsilon})\label{eq:meang}\\
     & \left\|\frac{1}{\left|G\right|} \sum_{i \in G}\left(x_{i}-{\mu_G}\right)\left(x_{i}-{\mu_G}\right)^{\top}\right\|_{2} \le O(1),\label{eq:covg}
 \end{align}
 where ${\mu_G}=\frac{1}{\left|G\right|} \sum_{i \in G} x_{i}$.
\end{lemma}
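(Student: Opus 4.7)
The plan is to take $G$ to be the set $S \cap S_{\text{good}}$, where $S_{\text{good}}$ denotes the original uncorrupted i.i.d.\ samples from $D$. By the adversarial model, at most $\epsilon n$ samples have been replaced, so $|G|\ge (1-\epsilon)n$ deterministically, and every $x_i \in G$ is an i.i.d.\ draw from $D$. The task then reduces to showing that, with high constant probability over the draw of $S_{\text{good}}$, the mean and covariance bounds \eqref{eq:meang} and \eqref{eq:covg} hold \emph{uniformly} for every $(1-\epsilon)$-fraction sub-collection of the i.i.d.\ sample.

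First, I would control the statistics of the full uncorrupted sample. Because $\Sigma\preceq I$, a vector Chebyshev argument gives $\|\mu_{S_{\text{good}}}-\mu\|=O(\sqrt{d/n})=O(\sqrt{\epsilon/\log d})$, and a truncation-plus-matrix-Bernstein argument (or matrix Chernoff applied to the outer products $(x_i-\mu)(x_i-\mu)^\top$ after a tail truncation) yields $\|\Sigma_{S_{\text{good}}}-\Sigma\|=O(1)$ at sample size $n=\Omega(d\log d/\epsilon)$. Both events hold with the required constant probability by a union bound.

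Next, I would pass from $S_{\text{good}}$ to an arbitrary subset $G\subseteq S_{\text{good}}$ of size at least $(1-\epsilon)n$. Writing $H=S_{\text{good}}\setminus G$, so $|H|\le \epsilon n$, the identity
\[
\mu_G-\mu=\frac{n}{|G|}(\mu_{S_{\text{good}}}-\mu)-\frac{1}{|G|}\sum_{i\in H}(x_i-\mu)
\]
and Cauchy--Schwarz bound $\|\mu_G-\mu\|$ by $O(\sqrt{\epsilon})$ plus $\sqrt{\epsilon}$ times the square-root of the spectral norm of $\frac1n\sum_{i\in H}(x_i-\mu)(x_i-\mu)^\top$. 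The analogous expansion for $\Sigma_G$ writes it as $\Sigma_{S_{\text{good}}}$ minus a PSD correction supported on $H$, plus a rank-one shift proportional to $(\mu_G-\mu_{S_{\text{good}}})(\mu_G-\mu_{S_{\text{good}}})^\top$, which is already $O(\epsilon)$ in operator norm by the mean bound.

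The crux, and the main obstacle, is establishing the \emph{uniform} spectral bound
\[
\sup_{H\subseteq S_{\text{good}}:\,|H|\le\epsilon n}\ \Bigl\|\tfrac{1}{n}\sum_{i\in H}(x_i-\mu)(x_i-\mu)^\top\Bigr\|\ \le\ O(\epsilon),
\]
which is exactly what forces the extra $\log d$ factor in the sample complexity. This cannot be obtained from matrix Bernstein alone because there are exponentially many admissible subsets $H$; the standard route is an $\epsilon$-net over unit directions $v$ combined with uniform control of the top $\epsilon n$ one-dimensional projections $\langle v,x_i-\mu\rangle^2$, or equivalently a VC-type argument over halfspaces of the form $\{x:\langle v,x-\mu\rangle^2\ge t\}$. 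This is precisely the content of Lemma A.18 of~\cite{diakonikolas2017being}, which I would invoke as a black box to complete the proof.
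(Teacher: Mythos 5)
The paper itself does not prove this lemma: it imports it wholesale from Lemma A.18 of~\cite{diakonikolas2017being}, which is also where you terminate. So the useful comparison is between the scaffolding you build around that citation and the actual content of the cited lemma, and there I see two real problems.

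First, your choice of $G=S\cap S_{\mathrm{good}}$ does not work. Since $D$ is only assumed to have bounded covariance, the \emph{unpruned} empirical covariance of the clean i.i.d.\ sample, $\Sigma_{S_{\mathrm{good}}}^{\mu}=\frac1n\sum_{i\in S_{\mathrm{good}}}(x_i-\mu)(x_i-\mu)^\top$, is not $O(1)$ with a \emph{universal} constant with high constant probability. A single heavy point with $\lVert x_i-\mu\rVert^2\approx Kn$ already contributes $K$ to the spectral norm, and for a suitable bounded-covariance $D$ such a point occurs with probability $\Theta(1/K)$; so whatever universal constant you name is exceeded with constant probability. Truncation-plus-Bernstein (which you invoke) controls the truncated sum, but it does not control the contribution of the points the truncation removes, so it does not deliver $\lVert\Sigma_{S_{\mathrm{good}}}-\Sigma\rVert=O(1)$. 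The resolution in Lemma A.18 (and in the standard ``stability'' treatments) is that $G$ is \emph{not} all the clean points in $S$: one first defines a pruned set $G_0\subseteq S_{\mathrm{good}}$ of size $(1-O(\epsilon))n$ by discarding the extreme points, proves the mean and covariance bounds for $G_0$, and only then takes $G=G_0\cap S$. The pruning has to be part of the definition of $G$, not just a device inside a concentration proof.

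Second, the step you identify as the ``crux'' --- the uniform bound $\sup_{|H|\le\epsilon n}\bigl\lVert\frac1n\sum_{i\in H}(x_i-\mu)(x_i-\mu)^\top\bigr\rVert\le O(\epsilon)$ established via an $\epsilon$-net / VC argument over exponentially many $H$ --- is both quantitatively off and, more importantly, not where the difficulty lives. Tracing your own Cauchy--Schwarz step, to conclude $\lVert\mu_G-\mu\rVert=O(\sqrt\epsilon)$ one needs that quantity to be $O(1)$, not $O(\epsilon)$; and once a pruned set $G_0$ satisfies $\lVert\frac1n\sum_{i\in G_0}(x_i-\mu)(x_i-\mu)^\top\rVert=O(1)$, the bound for every $H\subseteq G_0$ is an \emph{immediate} PSD-monotonicity consequence (a sum of fewer PSD rank-one terms is dominated by the full sum), with no union bound over subsets and no net argument at all. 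The same monotonicity gives the covariance bound $\lVert\Sigma_G\rVert\le\frac{|G_0|}{|G|}\lVert\Sigma_{G_0}\rVert=O(1)$ directly. The only genuinely probabilistic ingredient, and the place the $\log d$ enters, is the matrix-Chernoff/Bernstein bound for the covariance of the pruned $G_0$ --- not a union bound over the $\binom{n}{\epsilon n}$ subsets, which is not needed.
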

We now prove the main result of this section---using the meta-algorithm to solve the robust mean estimation problem. 
Observe that it suffices to prove the theorem with $\sigma^2 = 1$. Without loss of generality, we can
 first  divide every input  sample by $\sigma$,  execute the algorithm and then multiply the output by $\sigma$.
\begin{proof}[Proof of \autoref{thm:robust}]
First, we check that the centrality promise~\eqref{eq:p} is satisfied. This would ensure that we are in the setting of the spectral sample reweighing problem so that the meta-algorithm applies. 
Assume the conditions from~\autoref{lem:second}.
Then suppose we let $w_i = 1/|G|$ if $x_i \in G$ and $w_i = 0$ otherwise, so we have that $w\in \mathcal{W}_{n,\epsilon}$, and let $\nu=\mu_G$ and $\lambda = O(1)$. Observe that~\eqref{eq:covg} is exactly the spectral centrality condition~\eqref{eq:p} . Then we can apply~\autoref{thm:filter} and obtain that the algorithm  will find $\nu' \in \mathbb R^d$ and $w'\in \mathcal W_{n,3\epsilon}$ such that 
\begin{align*}
    M(w')  :=  \sum_{i=1}^n w'_{i}\left(x_{i}- \nu'\right)\left(x_{i}- \nu'\right)^{\top}  \preceq O(1)\cdot I
\end{align*}
Furthermore, by definition of the algorithm, $\nu'$ is a weighted average of the points $\{x_i\}_{i=1}^n$; that is, $\nu' = \nu(w') = \sum_{i=1}^n w'_i x_i$.  This allows us again to apply the  spectral signature lemma. In particular, \autoref{lem:robustspec} implies
\begin{align*}
   \|\mu_G - \nu' \| \leq \frac{1}{1-6\epsilon} \left( \sqrt {6\epsilon\lambda} +   \sqrt{3\epsilon\|M(w')\|}\right) = O\left(\sqrt{\epsilon}\right)
\end{align*}
  since $\lambda= O(1)$ and $\|M(w')\|= O(1)$. 
Finally, by triangle inequality and~\eqref{eq:meang},
\begin{align*}
     \|\mu- \nu' \| \leq  \|\mu_G - \nu' \| + \| \mu - \mu_G\| \leq  O(\sqrt{\epsilon}).
\end{align*}
Therefore, the output $\nu'$ estimates the true mean up to an error of $O(\sqrt{\epsilon})$, as desired. 

Finally, the run-time guarantee follows directly from the statement of~\autoref{thm:filter}, since we apply the meta-algorithm   in a black-box fashion. This completes the proof.
\end{proof}

\paragraph{Optimal breakdown point}
In \autoref{sec:breakdown}, we show that a variant of the filter algorithm can be used to achieve the optimal breakdown point of $1/2$. The result also appeared in a concurrent work~\cite{zhu2020robust},  with an argubly  more sophisticated proof.

\paragraph{Other algorithms}
To improve the computational efficiency, applying the same argument and using the matrix multiplicative weight algorithm (\autoref{thm:mmw}), we can obtain a near linear time algorithm, which matches the fastest known algorithm for robust mean estimation~\cite{dong2019quantum,cheng2019faster}. 
\begin{corollary}[faster robust mean estimation~\cite{dong2019quantum}]\label{thm:robust2}
Let $D$ be a distribution  over $\mathbb R^d$ with mean $\mu$ and covariance $\Sigma \preceq  \sigma^2 I$ and $\epsilon$ be a sufficiently small constant.  Given an $\epsilon$-corrupted set of    $n = \Omega(d\log d / \epsilon)$ samples, there is a matrix multiplicative update algorithm that runs in time $\widetilde O(nd)$ and with constant  probability computes an estimate  of error $O(\sigma \sqrt{\epsilon})$.
\end{corollary}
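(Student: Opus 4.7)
The plan is to mirror the proof of \autoref{thm:robust} essentially verbatim, replacing the call to the filter algorithm (\autoref{thm:filter}) with a call to the matrix multiplicative weights algorithm (\autoref{thm:mmw}). Since \autoref{thm:robust} already explicitly states that the meta-algorithm is invoked in a black-box fashion and can be swapped for any constant-factor approximation to spectral sample reweighing (\autoref{def:ssr}), the statistical analysis transfers unchanged, and only the running time needs to be re-tallied.

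Concretely, I would first reduce to the case $\sigma^2 = 1$ by scaling all inputs by $1/\sigma$ and rescaling the output by $\sigma$. Then I would invoke \autoref{lem:second} to obtain, with constant probability, a good subset $G \subseteq S$ of size at least $(1-\epsilon)n$ with $\|\mu - \mu_G\| \leq O(\sqrt{\epsilon})$ and empirical covariance bounded by $O(1)$. Taking $w_i = 1/|G|$ for $i \in G$ and $w_i = 0$ otherwise yields $w \in \mathcal{W}_{n,\epsilon}$, and the covariance bound is precisely the spectral centrality promise~\eqref{eq:p} with center $\nu = \mu_G$ and $\lambda = O(1)$. Feeding the dataset, $\lambda$, and $\epsilon$ into the MMW-based spectral sample reweighing algorithm of \autoref{thm:mmw} then produces $w' \in \mathcal{W}_{n,3\epsilon}$ and a center $\nu'$ with $\|M(w')\| \leq O(1)$, in time $\widetilde O(nd)$ (the $\log(1/\epsilon)$ factor is absorbed into $\widetilde O$ since $\epsilon$ is a constant).

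To convert the spectral guarantee into an estimation bound, I would apply the spectral signature lemma (the variant \autoref{lem:robustspec} used in the proof of \autoref{thm:robust}) to the pair $(\mu_G, \nu')$, which gives $\|\mu_G - \nu'\| \leq O(\sqrt{\epsilon})$. The triangle inequality combined with $\|\mu - \mu_G\| \leq O(\sqrt{\epsilon})$ then yields $\|\mu - \nu'\| \leq O(\sqrt{\epsilon})$; undoing the scaling reintroduces the factor $\sigma$.

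There is no real obstacle here—the proof is essentially a corollary in the most literal sense. The only thing worth double-checking is that the MMW procedure of \autoref{thm:mmw} outputs a center $\nu'$ of the form $\nu(w') = \sum_i w'_i x_i$ (i.e., a weighted average of the input points), which is the hypothesis the spectral signature lemma \autoref{lem:ss}/\autoref{lem:robustspec} requires; this is the same structure satisfied by \autoref{alg:filter} and, per the appendix, by its matrix-valued analog. The preprocessing (pruning via \autoref{lem:naive-prune}) runs in $O(nd \log(1/\delta))$ time and is dominated by the $\widetilde O(nd)$ cost of the MMW reweighing, so the overall running time is $\widetilde O(nd)$, matching the fastest known algorithms~\cite{dong2019quantum,cheng2019faster}.
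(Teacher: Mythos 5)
Your proposal is correct and coincides with the paper's own (implicit) argument: the paper states the corollary simply by remarking that one applies the same argument as \autoref{thm:robust} with the MMW algorithm of \autoref{thm:mmw} substituted for the filter, exactly the black-box swap you describe. Your care in checking that the output center has the form $\nu' = \sum_i w'_i x_i$ (needed for \autoref{lem:robustspec}) and that the pruning step controls $\rho$ so that the $\log(\rho/\lambda)$ iteration count is $\widetilde O(1)$ is precisely the bookkeeping the corollary glosses over.
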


Since $\lambda = O(1)$ in the robust mean estimation problem under bounded covariance (\autoref{lem:second}),  our analysis of the gradient descent algorithm (\autoref{thm:gdbound}) implies the following.
\begin{corollary}[robust mean estimation via gradient descent]\label{thm:robust-gd}
Let $D$ be a distribution  over $\mathbb R^d$ with mean $\mu$ and covariance $\Sigma \preceq  \sigma^2 I$ and $\epsilon$ be a sufficiently small constant.  Given an $\epsilon$-corrupted set of    $n = \Omega(d\log d / \epsilon)$ samples, there is a  gradient-descent based algorithm that computes an estimate of error $O(\sigma\sqrt{\epsilon})$ with constant probability in $\widetilde O(nd^2/\epsilon^2)$ iterations.\footnote{The $1/\epsilon$ dependence 
in the run-time can be   removed by a simple bucketing trick due to~\cite{lecue2019robust}; also see Lemma B.1 of~\cite{dong2019quantum}.} 
\end{corollary}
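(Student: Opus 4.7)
}

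The plan is to mirror the proof of \autoref{thm:robust} almost verbatim, but swap the call to \autoref{alg:filter} for the gradient-descent algorithm guaranteed by \autoref{thm:gdbound}. As before, by rescaling the input by $1/\sigma$ we may assume $\sigma = 1$, so the covariance satisfies $\Sigma \preceq I$.

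First I would invoke the deterministic conditions of \autoref{lem:second}: with high constant probability the $\epsilon$-corrupted sample $S$ contains a subset $G$ of size at least $(1-\epsilon)n$ with $\|\mu-\mu_G\| = O(\sqrt{\epsilon})$ and empirical covariance of $G$ bounded by $O(1)$ in spectral norm. Setting $w_i = 1/|G|$ on $G$ and $0$ elsewhere produces $w \in \mathcal{W}_{n,\epsilon}$, and with $\nu = \mu_G$ and $\lambda = O(1)$ this is exactly the spectral centrality promise~\eqref{eq:p}. Hence the input to the spectral sample reweighing problem is valid with parameter $\lambda = O(1)$.

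Next I would apply the gradient-descent-based algorithm of \autoref{thm:gdbound}, which is a constant-factor approximation to spectral sample reweighing. It returns $w' \in \mathcal{W}_{n,3\epsilon}$ and a center $\nu' \in \mathbb{R}^d$ with $M(w') \preceq O(\lambda) I = O(1)\cdot I$. Crucially, because the gradient-descent scheme also produces $\nu'$ as the weighted average $\nu(w') = \sum_i w_i' x_i$ (this being the common structure of all algorithms for this meta-problem), we may apply the spectral signature lemma (the robust variant, \autoref{lem:robustspec}) to conclude
\[
\|\mu_G - \nu'\| \leq \frac{1}{1-6\epsilon}\left(\sqrt{6\epsilon\lambda} + \sqrt{3\epsilon\|M(w')\|}\right) = O(\sqrt{\epsilon}).
\]
A triangle inequality with~\eqref{eq:meang} then gives $\|\mu - \nu'\| \leq \|\mu - \mu_G\| + \|\mu_G - \nu'\| = O(\sqrt{\epsilon})$, which is the claimed error bound after undoing the scaling.

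For the iteration count, \autoref{thm:gdbound} gives $O(nd^2/\epsilon^2)$ iterations for a constant-factor approximation at the value of $\lambda$ we are using. Since $\lambda = O(1)$ here, this dependence enters only through the $\epsilon$ in the denominator, and the bound $\widetilde O(nd^2/\epsilon^2)$ follows directly. The only step that might require more than bookkeeping is checking that the hypotheses of \autoref{thm:gdbound} (e.g.\ the width/pruning condition analogous to \autoref{lem:diameter}-\autoref{lem:naive-prune} used in the filter proof) are satisfied; but since these hold under the same spectral centrality promise we just established, it reduces to invoking the same pruning preprocessing used for the filter. The hard part is therefore not really technical—it is ensuring that the gradient-descent spectral sample reweighing guarantee is stated in exactly the black-box form needed to replicate the proof of \autoref{thm:robust}; once that is confirmed, the corollary is immediate.
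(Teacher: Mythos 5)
Your proposal is correct and follows exactly the route the paper intends: the paper itself offers only the one-line remark that $\lambda = O(1)$ under \autoref{lem:second} and that the corollary follows from \autoref{thm:gdbound}, and your write-up faithfully fills in those steps by mirroring the proof of \autoref{thm:robust} with \autoref{alg:gd} in place of \autoref{alg:filter}. The only nit is that you should note the $\mathcal W_{n,3\epsilon}$ membership of $w'$ arises from the pruning preprocessing (as in the proof of \autoref{thm:filter}), since \autoref{alg:gd} itself projects onto $\mathcal W_{n,\epsilon}$; this does not affect the $O(\sqrt{\epsilon})$ error bound via \autoref{lem:robustspec}.
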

A variant of the gradient descent-based algorithm can be used for robust mean estimation in the sub-gaussian setting as well; see Appendix~\ref{sec:subg-gd}.
\section{Equivalent Notions of Centrality}\label{sec:centrality} 
In this section, we   prove a  duality statement that connects the setting of heavy-tailed and robust estimation.  In particular, we will show that the following two (deterministic) notions of a \textit{center} $\nu$ for points $\{x_i\}_{i=1}^k$ are essentially  equivalent. We call them \textit{spectral} and \textit{combinatorial} center.
The former is the requirement  that showed up  first in the original formulation of the spectral sample reweighing problem (\autoref{def:ssr}) and then in dealing with adversarial corruptions. The latter  will  yield the right notion of high-dimensional median for estimating the mean of heavy tailed data, now known as the \textit{Lugosi-Mendelson estimator}, due to~\cite{lugosi2019sub}. 

In the following, let  $\{x_i\}_{i=1}^k$ be a set of $k$ points in $\mathbb R^d$.
    \paragraph{Spectral center}  Recall that our meta-problem of spectral sample reweighing (\autoref{def:ssr}) requires the assumption: 
\begin{align}\label{eq:sc1}
    \min_{w \in \mathcal{W}_{k,\epsilon}} \left\| \sum_{i=1}^k w_{i}\left(x_{i}- \nu\right)\left(x_{i}- \nu\right)^{\top} \right\| \leq \lambda.
\end{align}
Intuitively,  this says that the data are roughly clustered around $\nu$ and no bad point significantly corrupts its shape. Note that by linearity, the objective can be rewritten as a minimax one, and this leads to the following definition
\begin{definition}[$(\epsilon, \lambda)$-spectral center]\label{def:spectral-center}
 A point $\nu \in \Real^d$ is a $(\epsilon, \lambda)$-spectral center of $\{x_i\}_{i=1}^k$ if  
\begin{align}\label{eq:one1}\tag{spectral center}
     \min_{w\in \mathcal{W}_{k,\epsilon}} \max_{M \succeq 0, \text{Tr}(M) = 1} \,\,\sum_{i=1}^k w_i \left\langle (x_i - \nu) (x_i - \nu)^{\top} , M\right \rangle \leq \lambda.
\end{align} 
\end{definition}
In the robust mean estimation setting, the deterministic conditions (\autoref{lem:second}) imply that the true mean is a $(\epsilon, O(1))$-spectral center.

\paragraph{Combinatorial center} 
On other hand, there is another natural way of saying that the data are centered around $\nu$, which proves to be more useful in the heavy-tailed setting. We call it \textit{combinatorial centrality} condition.  It roughly says that when we project the data onto \textit{any} one-dimensional direction,  a majority of them will be  close to $\nu$.
\begin{definition}[$(\epsilon, \lambda)$-combinatorial center]\label{def:comb-center}
A point $\nu$ is a $(\epsilon, \lambda)$-combinatorial center of $\{x_i\}_{i=1}^k$ if   for all unit $v \in \mathbb R^d$.
\begin{align}\label{eq:two}\tag{combinatorial center}
        \sum_{i=1}^{k} \I \left\{ \langle x_i - \nu, v \rangle \geq \sqrt \lambda\right\} \leq \epsilon k,
    \end{align}
\end{definition}

Lugosi and Mendelson \cite{lugosi2019sub} show that optimal confidence intervals for mean estimation can be obtained in the heavy-tailed model by finding combinatorial centers. (We elaborate in the next section.)

\paragraph{Duality} It turns out that for constant $\epsilon$ these two conditions are equivalent (up to some minor gaps in constants).  To pave way for the proofs,  a key observation, first made by~\cite{cheng2019high}, is that the left-side of~\eqref{eq:one1} is an SDP objective. (This is because it is simply minimizing the maximum eigenvalue of $\sum_i w_i  (x_i - \nu) (x_i - \nu)^{\top} $.)
And strong duality allows us to swap the min and max, so 
\begin{align}\label{eq:sdpdual}
      \min_{w\in \mathcal{W}_{k,\epsilon}} \max_{M} \,\,\sum_{i=1}^k w_i \left\langle (x_i - \nu) (x_i - \nu)^{\top} , M\right \rangle =  \max_{M}  \min_{w\in \mathcal{W}_{k,\epsilon}} \,\,\sum_{i=1}^k w_i \left\langle (x_i - \nu) (x_i - \nu)^{\top} , M\right \rangle,
\end{align}
where the maximization is over the set of density matrices. Using this, we prove the following two propositions, showing (by contrapositives) that the two notions of centrality are equivalent. The constants   in the statements are chosen only to serve the purpose of heavy-tailed mean estimation, and they can be tweaked easily by the same arguments.

We consider the easy direction first. 
\begin{proposition}[spectral center$\implies$combinatorial center]\label{prop:p1}
If for some unit  $v\in  \mathbb R^d$
\begin{align}\label{eq:ass1}
    \sum_{i=1}^{k} \I \left\{| \langle x_i - \nu, v \rangle| \geq 10\sqrt \lambda\right\} \geq 0.4k,
\end{align}
then we have that for $\epsilon = 0.3$,
\begin{align*}
\min_{w\in \mathcal{W}_{k,\epsilon}} \max_{M \succeq 0, \text{Tr}(M) = 1} \,\,\sum_{i=1}^k w_i \left\langle (x_i - \nu) (x_i - \nu)^{\top} , M\right \rangle \geq \lambda.
\end{align*}
\end{proposition}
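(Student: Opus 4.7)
The plan is to exhibit a single dual witness $M$ that certifies the lower bound, and then use the $\ell_\infty$ constraint defining $\mathcal{W}_{k,\epsilon}$ to argue that no $w$ can avoid the ``bad'' coordinates. Since the inner maximum is over all density matrices $M$, it suffices to choose \emph{any} feasible $M$ and show that the resulting minimum over $w$ is at least $\lambda$.

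The natural choice is $M = v v^\top$, where $v$ is the unit vector guaranteed by the hypothesis \eqref{eq:ass1}. This is a rank-one density matrix, and $\langle (x_i-\nu)(x_i-\nu)^\top, vv^\top\rangle = \langle x_i - \nu, v\rangle^2$. So the inner objective reduces to $\sum_i w_i \langle x_i - \nu, v\rangle^2$, and it remains to show
\[
\min_{w \in \mathcal{W}_{k,0.3}} \sum_{i=1}^k w_i \langle x_i - \nu, v\rangle^2 \;\geq\; \lambda.
\]

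Let $S = \{i : |\langle x_i - \nu, v\rangle| \geq 10\sqrt{\lambda}\}$, so that $|S| \geq 0.4 k$ by hypothesis. For any $w \in \mathcal{W}_{k,0.3}$ we have $w_i \leq 1/(0.7 k)$, so the total weight placed on the complement $[k]\setminus S$ is at most $|[k]\setminus S|/(0.7 k) \leq 0.6 k / (0.7 k) = 6/7$. Since $w \in \Delta_k$, this forces $\sum_{i \in S} w_i \geq 1/7$. Each such $i$ contributes $\langle x_i - \nu, v\rangle^2 \geq 100\lambda$ to the sum, so
\[
\sum_{i=1}^k w_i \langle x_i - \nu, v\rangle^2 \;\geq\; \frac{1}{7}\cdot 100 \lambda \;=\; \frac{100 \lambda}{7} \;\geq\; \lambda.
\]

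There is no real obstacle here: the only thing to check carefully is the arithmetic on the constants $\epsilon = 0.3$ and the threshold $10\sqrt{\lambda}$ with density $0.4 k$, which together give slack of a factor of $100/7 \gg 1$. The harder direction (combinatorial $\Rightarrow$ spectral) is where SDP duality via \eqref{eq:sdpdual} is actually needed; for this direction, weak duality (picking a specific $M$) already suffices.
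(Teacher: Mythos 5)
Your proof is correct, and the arithmetic checks out (you get $\sum_i w_i \langle x_i-\nu,v\rangle^2 \geq \tfrac{100}{7}\lambda$, matching the paper's bound of $0.1k \cdot \tfrac{1}{0.7k}\cdot 100\lambda$). The one genuine, and in fact nice, difference from the paper is that you bypass strong duality entirely: you use $\min_w \max_M f(w,M) \geq \min_w f(w, vv^\top)$ directly, which is just the trivial inequality ``$\max$ over a set dominates a single choice'' applied pointwise in $w$. The paper instead first invokes the SDP minimax equality \eqref{eq:sdpdual} to rewrite the objective as $\max_M \min_w$ and then restricts the outer max to $M = vv^\top$; this is the same computation but dressed up with machinery that isn't needed for this direction. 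Your observation that strong duality is only genuinely required for the converse (\autoref{prop:2}) is accurate and slightly cleans up the exposition. One small presentational note: you phrase your weight argument as ``total weight on $[k]\setminus S$ is at most $6/7$, so at least $1/7$ lands on $S$''; the paper equivalently argues via pigeonhole that the optimal $w^*$ supported on the $(1-\epsilon)k = 0.7k$ smallest $t_i$'s must overlap $S$ in at least $0.1k$ coordinates. These are the same bound; yours is arguably the cleaner phrasing since it applies uniformly to all feasible $w$ without needing to identify the minimizer.
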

\begin{proof}
The assumption~\eqref{eq:ass1} immediately implies that 
\begin{align*}
        \sum_{i=1}^{k} \I \left\{ \langle x_i - \nu, v \rangle^2 \geq 100 \lambda\right\} \geq 0.4k
\end{align*}
This means that there are (at least) $0.4k$  points in $ \{x_i\}_{i=1}^k$ such that $t_i :=\left\langle (x_i - \nu) (x_i - \nu)^{\top} , M\right \rangle \geq 100\lambda$, where $M=vv^{\top}$.  We call them outliers.

Now by the SDP duality~\eqref{eq:sdpdual}, we only need to show  that for \textit{any} feasible $w$ the objective is at least $\lambda$.  Observe first that  for a fixed $M$, the optimal $w^*$ for the max-min objective is to put weight $1/(1-\epsilon)k$ on the $(1-\epsilon)k$ points with  the smallest $t_i$.
Recall we set $\epsilon = 0.3$. Hence, by pigeonhole principle, the support of $w^*$ must have an overlap of size $0.1k$ with the  outliers. It follows that 
\begin{align*}
   \sum_{i=1}^k w_i^* \left\langle (x_i - \nu) (x_i - \nu)^{\top} , M\right \rangle \geq 0.1k \cdot \frac{1}{(1-0.3)k} \cdot 100\lambda  \geq 10\lambda.  
\end{align*}
Since $w^*$ is the optimal choice, this completes the proof.
\end{proof}
The other direction is a bit more involved. The  key idea is to round the maximizing PSD matrix $M$ into a single vector $v$, via gaussian sampling, and this part of the argument is due to~\cite{lecue2019robust}.
\begin{proposition}[combinatorial center$\implies$spectral center]\label{prop:2}
Let $\epsilon = 0.1$. If for some $\nu \in \mathbb R^d$
\begin{align*}
\min_{w\in \mathcal{W}_{k,\epsilon}} \max_{M \succeq 0, \text{Tr}(M) = 1} \,\,\sum_{i=1}^k w_i \left\langle (x_i - \nu) (x_i - \nu)^{\top} , M\right \rangle \geq \lambda
\end{align*}
then we have for some unit $v$,
\begin{align*}
     \sum_{i=1}^{k} \I \left\{ |\langle x_i - \nu, v \rangle| \geq 0.1\sqrt \lambda\right\} \geq 0.01k.
\end{align*}
\end{proposition}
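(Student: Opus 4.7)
The plan is to prove the contrapositive by strong SDP duality followed by Gaussian rounding, essentially as in~\cite{lecue2019robust}. Applying the min--max exchange~\eqref{eq:sdpdual} (which holds by, e.g., Sion's theorem, since $\mathcal{W}_{k,\epsilon}$ and the set of density matrices are both convex compact and the pairing is bilinear), the hypothesis delivers a density matrix $M^\star$ with $\operatorname{Tr}(M^\star)=1$ such that $\min_{w \in \mathcal{W}_{k,\epsilon}} \sum_i w_i\, t_i \geq \lambda$, where $t_i := (x_i-\nu)^\top M^\star (x_i-\nu)$. Because $\mathcal{W}_{k,\epsilon}$ consists of probability vectors with $\ell_\infty$ norm at most $1/((1-\epsilon)k)$, the inner minimum is attained by the uniform distribution on the $(1-\epsilon)k$ indices carrying the smallest $t_i$, so the average of those values is at least $\lambda$. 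Since the mean of a finite multiset is at most its maximum, the $(1-\epsilon)k$-th order statistic of $(t_i)_i$ is $\geq \lambda$, and therefore $T := \{i : t_i \geq \lambda\}$ satisfies $|T| \geq \epsilon k = 0.1\, k$.

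Next, round $M^\star$ to a one-dimensional direction by Gaussian sampling. Draw $g \sim \mathcal{N}(0, M^\star)$; for each $i$, the marginal $\langle g, x_i - \nu\rangle$ is centered Gaussian with variance $t_i$, so for $i \in T$ one has
\[
  \Pr\bigl[\,|\langle g, x_i - \nu\rangle| \geq \sqrt{\lambda}\,\bigr] \;\geq\; \Pr[\,|Z| \geq 1\,] \;>\; 0.31,
\]
with $Z \sim \mathcal{N}(0,1)$. Letting $N := \sum_{i=1}^k \mathbb{I}\{|\langle g, x_i - \nu\rangle| \geq \sqrt{\lambda}\}$, linearity of expectation gives $\mathbb{E}[N] > 0.03\, k$, and Markov's inequality applied to the nonnegative random variable $k - N$ yields $\Pr[N > 0.01\, k] \geq 1 - (k - 0.03 k)/(0.99 k) > 0.02$. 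Independently, $\mathbb{E}[\|g\|^2] = \operatorname{Tr}(M^\star) = 1$, so Markov gives $\Pr[\|g\|^2 \geq 100] \leq 0.01$.

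A union bound now exhibits a realization of $g$ satisfying $N > 0.01\, k$ and $\|g\| < 10$ simultaneously; set $v := g/\|g\|$, a unit vector. For every $i$ contributing to $N$,
\[
  |\langle v, x_i - \nu\rangle| \;=\; \frac{|\langle g, x_i - \nu\rangle|}{\|g\|} \;>\; \frac{\sqrt{\lambda}}{10} \;=\; 0.1\sqrt{\lambda},
\]
so $v$ is the required direction. The only non-routine ingredient is coupling the anticoncentration count $N$ with the tail bound on $\|g\|$: the factor-of-$10$ slack lost in normalizing $g$ is precisely what forces the target threshold in the statement to be $0.1\sqrt{\lambda}$ rather than $\sqrt{\lambda}$. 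Everything else reduces to standard convex duality, one line of extremal sorting, and Gaussian moment/anticoncentration bounds.
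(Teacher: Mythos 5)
Your proof is correct and follows the same overall route as the paper's: SDP duality to extract the witnessing density matrix $M^\star$, a sorting argument to find a $0.1k$-size set of indices with large quadratic form values, and Gaussian rounding $g\sim\mathcal{N}(0,M^\star)$ followed by a union bound to exhibit a good direction. The only differences are in the probabilistic tools for the last two steps, and yours are strictly more elementary: you replace the paper's Paley--Zygmund inequality (used to lower-bound the probability that the anticoncentration count exceeds $0.01k$) with reverse Markov applied to $k-N$, and you replace the Borell--TIS inequality (used to control $\|g\|$) with the trivial Markov bound $\Pr[\|g\|^2\geq 100]\leq \mathbb{E}\|g\|^2/100 = \mathrm{Tr}(M^\star)/100 = 0.01$. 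Both choices are cleaner, need no extra structure beyond first moments, and happen to land on the same normalization slack of $10$ and hence the same target threshold $0.1\sqrt{\lambda}$. As a bonus, your exposition correctly distinguishes between the unnormalized variable $g$ (on which anticoncentration is computed) and the normalized $v=g/\|g\|$; the paper's writeup blurs this distinction slightly when defining $Y$ in terms of $v$ while invoking anticoncentration for $v_M$, so your version is arguably the more careful reading of the same argument.
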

\begin{proof}
Strong duality~\eqref{eq:sdpdual} implies that there exists PSD $M$ of unit trace such that 
\begin{align*}
    \sum_{i=1}^k w_i \left\langle (x_i - \nu) (x_i - \nu)^{\top} , M\right \rangle \geq \lambda
\end{align*}
for all $w\in \mathcal{W}_{k,\epsilon}$. As we observed, the optimal $w^*$ for a fixed $M$ would put weights on the points with smallest value of $t_i= \left\langle (x_i - \nu) (x_i - \nu)^{\top} , M\right \rangle$. The fact that the objective is large implies that there must be more than $\epsilon k = 0.1k$ points with $t_i \geq \lambda$. Let $B$ be this set of points such that $t_i \geq\lambda$.

It remains to demonstrate a vector $v$ such that 
\begin{align}\label{eqn:badcenter}
     \sum_{i=1}^{k} \I \left\{ |\langle x_i - \nu, v \rangle| \geq 0.1\sqrt \lambda\right\} \geq 0.01k.
\end{align}
The idea is to round the PSD matrix $M$ to a single vector $v$ that achieves this inequality. The right   rounding method is simply \textit{gaussian sampling}. Namely, if we draw $v_M\sim \mathcal N(0, M)$, then it can be shown   that with constant probability  $v=v_M/\|v_M\|$ satisfies the property above. 

For that, we apply the argument from~\cite{lecue2019robust}. First let $g_i =\langle   x_i -\nu, v_M \rangle$ for each $i\in [k]$. Note that $g_i$ is a mean-zero Gaussian random variable with variance $\sigma_i^2= t_i$.  A standard anti-concentration calculation shows that for any $i \in B$, $\Pr ( |g_i| \geq 0.5\sqrt{\lambda}  ) \geq 1/2$. Therefore, if we define $$Y =\sum_{i=1}^{k} \I \left\{ |\langle x_i - \nu, v \rangle| \geq 0.5\sqrt \lambda\right\},$$ then by linearity of expectations we have $\E Y \geq 0.05k$. It follows from the Payley-Zigmund inequality that $\Pr (Y \geq 0.01k)\geq 0.0018$. Moreover, by Borell-TIS inequality (Theorem 7.1 of~\cite{ledoux2001concentration}), we can bound that with probability at least $0.999$, $$\|v_M\| \leq \E \|v_M\| + 4 \sqrt{\|M\|} \leq \sqrt{\Tr(M)} + 4\sqrt{\Tr(M)} \leq 5,$$
since $\Tr(M) = 1$. Combining these facts immediately proves~\eqref{eqn:badcenter}.
\end{proof}
\section{Estimation under Heavy-Tails}\label{sec:heavy-tail}
We now come to the heavy-tailed mean estimation problem and show how to solve it using the machinery developed in the last sections. The setting is very simple
\begin{definition}[heavy-tailed mean estimation with optimal rates]\label{def:heavy-tail}
Given $n$   random vectors $\{X_i\}_{i=1}^n$   drawn i.i.d.\ from a distribution $D$ over $\mathbb R^d$ with mean $\mu$ and (finite) covariance $\Sigma$ and a desired confidence $2^{-O(n)}\leq \delta < 1$, compute an estimate $\widehat \mu$ such that  with probability at least $1-\delta$,
\begin{align}\label{eq:optimal-rate}
\|\widehat{{\mu}} - {\mu} \|  \lesssim  r_\delta \overset{\mathrm{def}}{=} \sqrt{\frac{\Tr (\Sigma)}{n}} + \sqrt{\frac{ \| \Sigma \| \log (1/ \delta)}{n}}.
\end{align}
\end{definition}
We note that the error rate~\eqref{eq:optimal-rate} is information-theoretically optimal, up to a constant.
The bound is known as \textit{sub-gaussian} error, since when $D$ is sub-gaussian, the empirical average obtains the guarantee. 
Moreover, in general, the estimator needs to depend on the parameter $\delta$, and the requirement that  $\delta \geq 2^{-O(n)}$ is necessary~\cite{catoni2012challenging, devroye2016sub}.
In the following, we will  aim only at a  computationally efficient, $\delta$-\textit{dependent} construction  that attains the optimal error $r_\delta$. 

\paragraph{Lugosi-Mendelson Estimator.}  
In one dimension,  the well-known \textit{median-of-means} construction, due to~\cite{nemirovsky1983median, jerrum1986median, alon1999median}, provides such strong guarantee:
\begin{enumerate}[(i)]
    \item Bucket the data into $k = \lceil 8\log (1/\delta)\rceil$ disjoint groups and compute their means $Z_1,Z_2,\cdots, Z_k$.
    \item Output the median $\widehat{\mu}$ of $\{ Z_1, Z_2,\cdots, Z_k\}$.
\end{enumerate}
In high dimensions, however, the question is a lot more subtle, with the correct notion of median being elusive. A long line of work culminated in the celebrated work of Lugosi and Mendelson~\cite{lugosi2019sub}. The estimator follows the median-of-means paradigm by first bucketing the data into $k$ groups and taking the means $\{Z_i\}_{i=1}^k$.  The key structural lemma of their work is that the true mean is a $\left(0.01,O\left(r^2_\delta\right)\right)$-combinatorial center of the bucket means, where $r_\delta$ is the sub-gaussian error rate~\eqref{eq:optimal-rate}.  Recall that it means that if  we consider   projecting the bucket means to a one-dimensional direction,  a majority of them are close to the true mean. 
\begin{lemma}[Lugosi-Mendelson structural lemma~\cite{lugosi2019sub}]\label{lem:lm}
Consider the setting of heavy-tailed mean estimation (\autoref{def:heavy-tail}). Let $\{Z_i\}_{i=1}^k$ be the $k$ bucket means with  $k = \lceil 800\log (1/\delta)\rceil$.
Then with probability at least $1-\delta$,  \textit{for all} unit $v\in \mathbb R^d$, 
\begin{align}\label{eq:clus}\tag{$E_v$}
    | \langle Z_i - \mu , v\rangle  | \leq 3000 \left(\sqrt{\frac{\Tr (\Sigma)}{n}} + \sqrt{\frac{ \| \Sigma \| \log (1/ \delta)}{n}}\right),
\end{align}
for  $0.99k$ of the bucket means $\{Z_i\}_{i=1}^k$. 
\end{lemma}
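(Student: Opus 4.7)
The plan is to show that for the threshold $T = 3000\, r_\delta$ and $k = \lceil 800\log(1/\delta)\rceil$, the event
\[
\sup_{v \in S^{d-1}} \Bigl|\{i : |\langle Z_i - \mu, v\rangle| > T\}\Bigr| \leq 0.01\, k
\]
holds with probability at least $1-\delta$. Writing $m = n/k$, each bucket mean $Z_i$ is the average of $m$ i.i.d.\ samples, so $\mathbb{E}[Z_i]=\mu$ and $\mathrm{Cov}(Z_i) = \Sigma/m$; in particular $\mathrm{Var}(\langle Z_i - \mu, v\rangle) = v^\top \Sigma v / m$. The argument splits cleanly into a per-direction concentration step and a uniformization-over-$v$ step.

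For a fixed unit vector $v$, Chebyshev's inequality gives
\[
\Pr\bigl(|\langle Z_i - \mu, v\rangle| > T\bigr) \leq \frac{v^\top \Sigma v}{mT^2} \leq \frac{k\|\Sigma\|}{nT^2} \leq \frac{1}{400},
\]
which uses only the $\sqrt{\|\Sigma\|\log(1/\delta)/n}$ component of $r_\delta$. The $k$ bucket indicators $B_i(v) = \mathbf{1}[|\langle Z_i-\mu,v\rangle|>T]$ are independent Bernoullis with mean at most $1/400$, so a multiplicative Chernoff bound yields $\Pr(\sum_i B_i(v) \geq 0.01\,k) \leq e^{-\Omega(k)} \leq \delta$ for the prescribed $k$.

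The main obstacle is to upgrade this per-direction bound to a bound uniform over the sphere without incurring a factor of $\mathrm{poly}(d)$: a naive $\varepsilon$-net argument on $S^{d-1}$ loses exactly such a factor and destroys the sub-gaussian rate. Instead I would dominate the indicator by a $1$-Lipschitz function $\phi(u/T)$ that vanishes on $|u|\leq T$ and equals $1$ on $|u|\geq 2T$, so that the count of bad buckets in direction $v$ is upper bounded by $G(v) = \sum_i \phi(\langle Z_i-\mu,v\rangle/T)$. Standard Rademacher symmetrization together with the Ledoux--Talagrand contraction principle reduce the problem of bounding $\mathbb{E}[\sup_v G(v)]$ to estimating
\[
\frac{1}{T}\,\mathbb{E}\Bigl\|\sum_{i=1}^k \varepsilon_i(Z_i - \mu)\Bigr\| \leq \frac{1}{T}\sqrt{\frac{k\,\mathrm{Tr}(\Sigma)}{m}} = \frac{k\sqrt{\mathrm{Tr}(\Sigma)/n}}{T},
\]
which is $\ll k$ precisely because $T \gtrsim \sqrt{\mathrm{Tr}(\Sigma)/n}$---here the first term of $r_\delta$ is essential. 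Finally, a Talagrand-type concentration inequality for suprema of bounded empirical processes (sharper than a crude McDiarmid bound, which would lose polynomial factors in $n/k$) transfers the bound on $\mathbb{E}[\sup_v G(v)]$ to a high-probability bound, yielding $\sup_v \sum_i B_i(v) \leq 0.01\,k$ with probability at least $1-\delta$. The two components of $r_\delta$ thus emerge naturally: $\sqrt{\|\Sigma\|\log(1/\delta)/n}$ from per-direction Chebyshev/Chernoff and $\sqrt{\mathrm{Tr}(\Sigma)/n}$ from the Rademacher complexity of linear functionals over $S^{d-1}$.
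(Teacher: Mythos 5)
The paper does not prove this lemma; it is stated as a citation to Lugosi and Mendelson, so the relevant comparison is to their argument. Your proposal follows their blueprint essentially exactly: per-direction Chebyshev, smooth the indicator, symmetrize, contract to the Rademacher complexity of linear functionals (yielding the $\sqrt{\Tr(\Sigma)/n}$ term), and then apply a concentration inequality for the supremum. Two items should be corrected, neither of which affects the overall structure.

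First, the smoothing function is oriented the wrong way. As you define it, $\phi(u/T)$ \emph{vanishes} on $|u|\le T$, so $\phi(u/T)\le \I\{|u|>T\}$ and $G(v)$ is a \emph{lower} bound for the bad count, not an upper bound as claimed. You need the opposite ramp: take $\phi$ equal to $1$ on $|x|\ge 1$, equal to $0$ on $|x|\le 1/2$, and $2$-Lipschitz in between, so that $\phi(u/T)\ge\I\{|u|\ge T\}$ and $G(v)=\sum_i\phi(\langle Z_i-\mu,v\rangle/T)$ dominates the number of bad buckets. The per-direction Chebyshev step must then be run at threshold $T/2$ (since $\phi(u/T)\le\I\{|u|\ge T/2\}$), which costs a factor of $4$ in the probability bound; the constant $3000$ has more than enough slack to absorb this and the Lipschitz constant $2$.

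Second, your claim that a McDiarmid bound would ``lose polynomial factors in $n/k$'' is incorrect, and a Talagrand-type inequality is not needed. Apply the bounded-differences inequality to $\sup_v G(v)$ viewed as a function of the $k$ \emph{independent} bucket means $Z_1,\dots,Z_k$: since $0\le\phi\le 1$, replacing any one $Z_i$ changes the supremum by at most $1$, so McDiarmid gives deviations of order $\sqrt{k\log(1/\delta)}=\Theta(\log(1/\delta))=\Theta(k)$, which is exactly the required scale once the constant in $k=\lceil 800\log(1/\delta)\rceil$ is chosen appropriately. (The poly$(n/k)$ loss you are worried about only appears if you mistakenly treat the $n$ raw samples as the independent variables, which also fails because individual samples are unbounded.) This McDiarmid step is what Lugosi and Mendelson actually use, so with these two fixes your proof is the same as theirs.
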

This is exactly the combinatorial centrality condition (\autoref{def:comb-center}) we introduced in Section~\ref{sec:centrality}. To build more intuition, we should visualize it as a clustering property. That is, under any one-dimensional projection, the bucket means are clustered around the true mean, and the width of the cluster is precisely the optimal sub-gaussian error $O(r_\delta)$. 

This enables a natural estimator/algorithm---we can search for a point $\widehat \mu$ that is a $\left(0.01,r_\delta^2\right)$-combinatorial center for $\{Z_i\}_{i=1}^k$. Of course, such $\widehat\mu$ exists, since Lugosi-Mendelson (\autoref{lem:lm}) showed  that $\mu$ itself satisfies the condition (with probability at least $1-\delta$).  Furthermore, one can check any valid $(\epsilon, O(r_\delta^2))$-combinatorial center (\autoref{def:comb-center}) $\widehat\mu$ with $\epsilon<1/2$ is indeed   an estimator with sub-gaussian error rate $O(r_\delta)$, by a simple ``pigeonhole + triangle inequality'' argument. 
\begin{lemma}[combinatorial center has  sub-gaussian rate]\label{lem:center-error}
Let $\{Z_i\}_{i=1}^k$ be defined as above and $\epsilon < 1/2$. Suppose that the condition in the Lugosi-Mendelson structural lemma (\autoref{lem:lm}) holds. Then any $\left(\epsilon,O\left(r_\delta^2\right)\right)$-combinatorial center $\widehat \mu$ of $\{Z_i\}_{i=1}^k$  attains the sub-gaussian error~\eqref{eq:optimal-rate} (up to constant).
\end{lemma}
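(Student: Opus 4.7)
The plan is to test the combinatorial centrality condition in one specific direction, namely the unit vector $v = (\widehat{\mu} - \mu)/\|\widehat{\mu} - \mu\|$ (assuming without loss of generality that $\widehat{\mu} \neq \mu$), and combine the resulting guarantee with the Lugosi--Mendelson structural lemma via a pigeonhole argument followed by the triangle inequality.

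First I would apply \autoref{lem:lm} to this $v$: on the high-probability event $E_v$, at least $0.99k$ of the bucket means $\{Z_i\}_{i=1}^k$ satisfy $|\langle Z_i - \mu, v\rangle| \leq C_1 r_\delta$ for a universal constant $C_1$. Next I would invoke the hypothesis that $\widehat{\mu}$ is an $(\epsilon, C_2 r_\delta^2)$-combinatorial center of $\{Z_i\}_{i=1}^k$ for $\epsilon < 1/2$. The defining inequality in \autoref{def:comb-center} is one-sided, so I would apply it to both $v$ and $-v$: at most $\epsilon k$ of the $Z_i$ can have $\langle Z_i - \widehat{\mu}, v\rangle \geq \sqrt{C_2}\, r_\delta$, and at most $\epsilon k$ can have $\langle Z_i - \widehat{\mu}, v\rangle \leq -\sqrt{C_2}\, r_\delta$. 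Hence at least $(1 - 2\epsilon)k$ of the bucket means satisfy $|\langle Z_i - \widehat{\mu}, v\rangle| \leq \sqrt{C_2}\, r_\delta$.

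Since $(1 - 2\epsilon)k + 0.99 k > k$ whenever $\epsilon < 0.495$ (which is implied by $\epsilon < 1/2$ after harmlessly shrinking the constant, or equivalently strengthening the $0.99$ in \autoref{lem:lm}, which the Lugosi--Mendelson argument allows), the two subsets of indices overlap. By the pigeonhole principle there exists an index $i^\ast$ for which both bounds hold simultaneously, and the triangle inequality then gives
\[
    |\langle \widehat{\mu} - \mu,\, v\rangle| \;\leq\; |\langle Z_{i^\ast} - \mu,\, v\rangle| \;+\; |\langle Z_{i^\ast} - \widehat{\mu},\, v\rangle| \;\leq\; \bigl(C_1 + \sqrt{C_2}\bigr)\, r_\delta.
\]
By the choice of $v$, the left-hand side is exactly $\|\widehat{\mu} - \mu\|$, which yields the claimed sub-gaussian rate.

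There is no real obstacle; the only subtlety is the handedness mismatch between \autoref{def:comb-center} (one-sided) and the two-sided tail control in \autoref{lem:lm}, which costs a factor of $2$ in the fraction of ``bad'' indices and is exactly where the condition $\epsilon < 1/2$ is used. All constants are universal and independent of $n, d, \delta$, so the estimate matches $r_\delta$ up to a constant factor as required.
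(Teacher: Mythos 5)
Your proof is correct and follows essentially the same route as the paper: test the centrality conditions along the single direction $v$ spanning $\widehat\mu - \mu$, use pigeonhole to find an index where both the Lugosi--Mendelson bound and the combinatorial-center bound hold, then conclude by the triangle inequality. You are in fact slightly more careful than the paper's own proof, which glosses over the one-sidedness of \autoref{def:comb-center}: you correctly apply it to both $v$ and $-v$ and flag that this makes the effective good fraction $(1-2\epsilon)k$, so the pigeonhole formally needs $\epsilon < 0.495$ against the $0.99k$ constant from \autoref{lem:lm} --- a harmless constant adjustment that the paper leaves implicit.
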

\begin{proof}
Let $v$ be the unit vector in the direction of $\mu - \widehat \mu$. Then since $\widehat \mu$ is an $(\epsilon, O(r_\delta^2))$-combinatorial center with $\epsilon <1/2$, we have $|  \langle Z_i - \widehat\mu, v \rangle |\le r_\delta$
for most $Z_i$. Also, $|  \langle Z_i - \mu, v \rangle |\le O(r_\delta)$ for most $\{Z_i\}_{i=1}^k$ by our assumption from Lugosi-Mendelson lemma. By the pigeonhole principle, there must be a $Z_j$ such that $| \langle Z_j - \widehat\mu, v \rangle |\le O(r_\delta)$ and $| \langle Z_j - \mu, v \rangle |\le O(r_\delta)$. By triangle inequality,
\begin{align*}
    \| \widehat\mu -\mu\| = \langle \mu - \widehat \mu ,v\rangle  \leq | \langle Z_i - \mu, v \rangle |+| \langle Z_i - \widehat\mu, v \rangle |  \le O(r_\delta).
\end{align*}
as desired, and this completes the proof.
\end{proof}
However, the problem of efficiently finding a combinatorial center appears difficult. If one sticks to its definition, it is required to  ensure that for \textit{all} unit vector $v$, the clustering property~\eqref{eq:clus} holds. It seems that even just \textit{certifying} this condition would na\"ively take exponential time (say, by enumerating a $1/2$-net of unit sphere). 
Yet, we can actually resort to duality, to avoid the pain of designing a new algorithm from scratch. As we showed,  a combinatorial center is just  a spectral center, which  our meta-algorithm can find for us.

\begin{theorem}[heavy-tailed mean estimation via spectral sample reweighing]\label{thm:reweight-heavy}
Given  $\{X_i\}_{i=1}^n$  and $\delta$, any constant-factor approximation  algorithm for  the spectral sample reweighing problem (\autoref{def:ssr})  can be used to compute an estimate $\widehat \mu$  that obtains the sub-gaussian error rate for heavy-tailed mean estimation (\autoref{def:heavy-tail}), with probability at least $1-\delta$.
\end{theorem}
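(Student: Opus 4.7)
The plan is to reduce heavy-tailed mean estimation to spectral sample reweighing via the duality between the two notions of centrality developed in Section~\ref{sec:centrality}. First I would form the Lugosi--Mendelson bucket means: partition $\{X_i\}_{i=1}^n$ into $k = \lceil 800 \log(1/\delta)\rceil$ roughly equal blocks and compute their sample averages $Z_1,\dots,Z_k\in\mathbb{R}^d$. The rest of the algorithm operates purely on the point set $\{Z_i\}_{i=1}^k$, never revisiting the raw samples.

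Next, condition on the high-probability event $(E_v)$ from the Lugosi--Mendelson structural lemma (\autoref{lem:lm}), under which the true mean $\mu$ is a $(0.01, O(r_\delta^2))$-combinatorial center of $\{Z_i\}_{i=1}^k$. By the contrapositive of \autoref{prop:2} (combinatorial $\Rightarrow$ spectral), this combinatorial guarantee implies that $\mu$ is also an $(\epsilon_0, O(r_\delta^2))$-spectral center for a fixed constant $\epsilon_0$; equivalently, there exist weights $w \in \mathcal{W}_{k,\epsilon_0}$ with
\[
\left\| \sum_{i=1}^k w_i (Z_i - \mu)(Z_i - \mu)^\top \right\| \leq O(r_\delta^2).
\]
This is exactly the promise of spectral sample reweighing (\autoref{def:ssr}) with $\lambda = O(r_\delta^2)$, so any constant-factor approximation oracle applies and outputs a pair $(\widehat\mu, w')$ with $w' \in \mathcal{W}_{k, 3\epsilon_0}$ satisfying $\|\sum_i w'_i (Z_i - \widehat\mu)(Z_i - \widehat\mu)^\top\| \leq O(r_\delta^2)$.

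To close the loop, I would invoke the contrapositive of \autoref{prop:p1} (spectral $\Rightarrow$ combinatorial). The output $\widehat\mu$ is a spectral center at level $O(r_\delta^2)$ with the enlarged weight set $\mathcal{W}_{k,3\epsilon_0}$, and by choosing $\epsilon_0$ small enough this hypothesis translates into $\widehat\mu$ being an $(\epsilon_1, O(r_\delta^2))$-combinatorial center of $\{Z_i\}_{i=1}^k$ for some $\epsilon_1 < 1/2$. Then \autoref{lem:center-error} immediately yields $\|\widehat\mu - \mu\| = O(r_\delta)$, which is the desired sub-gaussian rate; the entire argument succeeds with probability $1-\delta$ from the Lugosi--Mendelson event alone.

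The main obstacle is not conceptual but a careful bookkeeping of constants on both sides of the duality. \autoref{prop:p1} and \autoref{prop:2} are stated for specific numerical choices of $\epsilon$, and the spectral sample reweighing oracle only promises weights in $\mathcal{W}_{k,3\epsilon}$ rather than $\mathcal{W}_{k,\epsilon}$ and an $\alpha$-approximation on $\lambda$. One must therefore pick the input parameter $\epsilon_0$ small enough so that after tripling (from the oracle) and after the slack introduced by the spectral-to-combinatorial direction, the final combinatorial parameter $\epsilon_1$ stays strictly below $1/2$ as required by \autoref{lem:center-error}. Both the oracle's multiplicative slack $\alpha$ and the universal constants from the two propositions only inflate $\lambda$ by an absolute factor, which is absorbed into the $O(\cdot)$ in $r_\delta$, so the final statistical rate is unchanged.
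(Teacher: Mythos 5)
Your proposal follows the paper's proof essentially step for step: form the bucket means, invoke the Lugosi--Mendelson structural lemma to obtain a combinatorial center, pass to a spectral center via the contrapositive of \autoref{prop:2}, call the constant-approximation SSR oracle, return to a combinatorial center via the contrapositive of \autoref{prop:p1}, and close with \autoref{lem:center-error}. The only cosmetic difference is that the paper hard-codes the constants ($\epsilon = 0.1$, weight set $\mathcal{W}_{k,0.3}$, $\lambda = 3000 r_\delta$, final fraction $0.4k$) rather than leaving $\epsilon_0,\epsilon_1$ symbolic, but your observation that all multiplicative slack is absorbed into the $O(\cdot)$ in $r_\delta$ is exactly how the paper handles it.
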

\begin{proof}
Let $\{Z_i\}_{i=1}^k$ be the bucket means with $k=  \lceil 800\log (1/\delta)\rceil$ and let $\lambda = 3000 r_\delta$.
We assume  that the true mean $\mu$ is a $(0.01, \lambda^2)$-combinatorial center  of $\{Z_i\}_{i=1}^k$.
Suppose that we can obtain an $\alpha$-factor approximation the spectral sample reweighing, with the input being $\{Z_i\}_{i=1}^k$.
\begin{itemize}
    \item \textit{Promise}: First let's check the spectral centrality condition holds.  
    Since, by assumption, $\mu$ is a $(0.01, \lambda^2)$-combinatorial center  of $\{Z_i\}_{i=1}^k$, we have that  for all unit $v$
\begin{align*}
     \sum_{i=1}^{k} \I \left\{ |\langle x_i - \mu, v \rangle| \geq  \lambda \right\} \le 0.01k.
\end{align*}
Thus,~\autoref{prop:2} (with $\nu = \mu$)  implies that 
\begin{align*}
\min_{w\in \mathcal{W}_{k,\epsilon}} \max_{M \succeq 0, \text{Tr}(M) = 1} \,\,\sum_{i=1}^k w_i \left\langle (x_i - \mu) (x_i - \mu)^T , M\right \rangle \leq 100\lambda^2,
\end{align*}
where $\epsilon = 0.1$.
This means that there exists $w \in \mathcal{W}_{k, \epsilon}$ such that
\begin{equation*}\label{eq:pp}  
    \left\| \sum_{i=1}^n w_{i}\left(x_{i}- \mu\right)\left(x_{i}- \mu\right)^{T} \right\|\leq  100\lambda^2 .
\end{equation*} 
\item \textit{Output}:
Now the guarantee of an $\alpha$-factor approximation for spectral sample reweighing (\autoref{def:ssr})  is that we have $\widehat\mu \in \mathbb{R}^d$ and $w'\in\mathcal{W}_{k,3\epsilon}$ such that 
\begin{equation*}\label{eq:pp2}  
    \left\| \sum_{i=1}^n w'_{i}\left(x_{i}- \widehat \mu\right)\left(x_{i}-\widehat \mu\right)^{T} \right\|\leq   100\alpha \lambda^2.
\end{equation*} 
It immediately follows that
\begin{align*}
\min_{w\in \mathcal{W}_{k,3\epsilon}} \max_{M \succeq 0, \text{Tr}(M) = 1} \,\,\sum_{i=1}^k w_i \left\langle (x_i - \widehat\mu) (x_i - \widehat\mu)^T , M\right \rangle \leq 100\alpha\lambda^2.
\end{align*}
Now we can apply~\autoref{prop:p1}. Since $\alpha$ is a constant by assumption, we obtain that for all unit $v$,
\begin{align}
     \sum_{i=1}^{k} \I \left\{ \left|\left\langle x_i - \widehat\mu, v \right\rangle\right| \geq   C(\alpha)\cdot\lambda \right\} \le 0.4k,
\end{align}
for some constant $C(\alpha)=O(1)$ that depends on $\alpha$.
Therefore, we get that a majority of the points cluster around $\widehat \mu$, along any direction $v$, so it is a $(0.4,O(\lambda))$-combinatorial center.  It follows from  \autoref{lem:center-error} that $\|\widehat{\mu}-\mu \| \leq O(r_\delta)$, as $\lambda = O(r^2_\delta)$. 
\end{itemize}
Finally, note that the only condition of the argument is that the true mean is a combinatorial center, which occurs with probability at least $1-\delta$, by~\autoref{lem:lm}.
\end{proof}
We remark that the exact constants we choose in the proof are immaterial, and no efforts have been given in optimizing them.

The theorem   implies that the filter algorithm (\autoref{alg:filter}) combined with a simple pruning step from \cite{lei2019fast}) can be used for heavy-tailed mean estimation as well. 
\begin{corollary}[filter for heavy-tailed mean estimation]\label{cor:heavy-filter}
Given $\{X_i\}_{i=1}^n$ drawn i.i.d.\ from a distribution with mean $\mu$ and covariance $\Sigma$ and a failure probability $2^{-O(n)}\leq \delta <1$, there is an efficient algorithm that outputs $\widehat \mu$ such that  with probability at least $1-\delta$, $\|\widehat{{\mu}} - {\mu} \|\le O(r_\delta$).

Further, the algorithm is a black-box application of \autoref{alg:filter} and runs in time $O(k^2 d^2 + nd) $.
\end{corollary}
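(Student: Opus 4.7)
The plan is to apply Theorem~\ref{thm:reweight-heavy} with Algorithm~\ref{alg:filter} (certified by Theorem~\ref{thm:filter}) plugged in as the constant-factor oracle for spectral sample reweighing. Since all the statistical content---the Lugosi-Mendelson lemma, the duality between spectral and combinatorial centers, and the pigeonhole argument of Lemma~\ref{lem:center-error}---already appears in earlier sections, the proof is essentially a runtime accounting on top of a direct instantiation.

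Concretely, I will: (i) bucket the $n$ samples into $k = \lceil 800\log(1/\delta)\rceil$ groups and compute their bucket means $\{Z_i\}_{i=1}^{k}$ in $O(nd)$ time; (ii) invoke Lemma~\ref{lem:lm} so that, with probability $\geq 1-\delta$, the true mean $\mu$ is a $(0.01, O(r_\delta^2))$-combinatorial center of $\{Z_i\}_{i=1}^{k}$; (iii) use the contrapositive of Proposition~\ref{prop:2} to conclude that $\mu$ witnesses the spectral centrality promise~\eqref{eq:p} on $\{Z_i\}_{i=1}^k$ with parameter $\lambda = \Theta(r_\delta^2)$ and $\epsilon = 0.1$; (iv) run Algorithm~\ref{alg:filter} on $\{Z_i\}_{i=1}^k$ to obtain $\widehat\mu \in \mathbb R^d$ and $w' \in \mathcal W_{k,0.3}$ with $\bigl\|\sum_i w'_i (Z_i - \widehat\mu)(Z_i - \widehat\mu)^\top\bigr\| \leq O(\lambda)$; (v) invoke the contrapositive of Proposition~\ref{prop:p1} to conclude $\widehat\mu$ is a $(0.4, O(r_\delta^2))$-combinatorial center of $\{Z_i\}_{i=1}^k$; (vi) apply Lemma~\ref{lem:center-error} to get $\|\widehat\mu - \mu\| = O(r_\delta)$. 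Steps (ii)--(vi) are exactly the proof of Theorem~\ref{thm:reweight-heavy} specialized to $\alpha = O(1)$.

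The main obstacle is the runtime claim $O(k^2 d^2 + nd)$. Applying Theorem~\ref{thm:filter} na\"ively to $k$ points yields a bound of the form $\widetilde O(kd^2)$, but that proof relies on an a priori upper bound on $\lambda$ both to bound the width parameter $\rho$ via Lemma~\ref{lem:diameter} and Lemma~\ref{lem:naive-prune}, and to determine the iteration count $T = O(\rho\epsilon/\lambda)$ in Lemma~\ref{lem:filter}. In the heavy-tailed setting, we know neither $\lambda = \Theta(r_\delta^2)$ nor an absolute diameter bound on the $Z_i$ in advance; this is where the ``simple pruning step from~\cite{lei2019fast}'' enters, playing the distribution-agnostic analogue of Lemma~\ref{lem:naive-prune} to extract an $O(r_\delta)$-radius ball containing a $0.99$-fraction of the bucket means. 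Verifying that this pruning and the subsequent filter run together stay within the claimed $O(k^2 d^2 + nd)$ budget---the $O(nd)$ is the bucketing and the $k^2 d^2$ absorbs both the pruning and the filter's per-iteration $O(kd)$ cost over $O(kd)$ effective iterations---is the only genuinely new work; no new statistical argument is needed.
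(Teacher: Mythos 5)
Your high-level plan is exactly the paper's: compute bucket means, invoke \autoref{lem:lm}, use \autoref{thm:reweight-heavy} (which packages \autoref{prop:2}, \autoref{prop:p1}, and \autoref{lem:center-error}) with \autoref{alg:filter} as the oracle, and replace the $\lambda$-aware pruning of \autoref{lem:naive-prune} with the $\lambda$-agnostic pruning step from \cite{lei2019fast}. You also correctly identify the crux: the runtime statement requires bounding the width $\rho$ without knowing $\lambda = \Theta(r_\delta^2)$ a priori.

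However, your runtime accounting contains concrete errors, and they happen to cancel rather than reflect what actually holds. First, the pruning step yields a ball of radius $O(\sqrt{d}\,r_\delta)$ containing a $0.99$-fraction of the bucket means, not $O(r_\delta)$ as you claim; each $Z_i$ has covariance $\Sigma \cdot k/n$, so its Euclidean distance from $\mu$ is typically $\Theta\left(\sqrt{\Tr(\Sigma)k/n}\right) \gg r_\delta$, and the diameter bound one can actually prove (Lemma E.1 of \cite{lei2019fast}, as cited in the paper) is $O(\sqrt{d}\,r_\delta)$. Consequently $\rho = O(d\,r_\delta^2)$ and the iteration count from \autoref{lem:filter} is $T = O(\rho\epsilon/\lambda) = O(d)$, not the $O(kd)$ you assert (and note that your own $O(r_\delta)$ radius would give $T = O(1)$, which is inconsistent with your $O(kd)$). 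Second, the per-iteration cost is not $O(kd)$: the power-method failure rate must be set to $O(\delta/T)$, and since $\log(1/\delta) = \Theta(k)$, each call costs $\widetilde O\left(kd(k + \log d)\right) = \widetilde O(k^2 d)$. The correct decomposition is thus $O(d)$ iterations at $\widetilde O(k^2 d)$ each, giving $\widetilde O(k^2 d^2)$, plus $O(nd)$ for bucketing; the claimed budget is reached, but not via the factorization you wrote down. You should also note that the pruning center $\widehat\mu_0$ is computed as a coordinate-wise median-of-means on a \emph{fresh} set of $k$ buckets (the paper uses $\{Z_i\}_{i=k+1}^{2k}$), which is part of what makes the pruning valid independently of $\lambda$.
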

\begin{proof}
Given the input, we first compute the bucket means  $\{Z_i\}_{i=1}^{2k}$, which takes $O(nd)$ time. Assume that the condition of the Lugosi-Mendelson structural lemma (\autoref{lem:lm}) holds; that is, $\mu$ is a $(0.01, \lambda^2)$-combinatorial center  of $\{Z_i\}_{i=1}^k$, where $\lambda = 3000r_\delta$. We use the   filter algorithm (Algorithm \ref{alg:filter}) with the input being a pruned subset of  $\{Z_i\}_{i=1}^k$ and apply its guarantees.

Here, we will not use the pruning step (\autoref{lem:naive-prune}), since it requires the knowledge of $\lambda$. Instead, we first compute the coordinate-wise median-of-means $\widehat\mu_0$ of $\{Z_i\}_{i=k+1}^{2k}$ and the distances $d_i= \| Z_i - \widehat\mu_0\|$ for each $i\in [k]$. We then sort the points by $d_i$ (in descending order)  and remove the top $0.01k$ points in $\{Z_i\}_{i=1}^k$ with large $d_i$.
It can be shown that the remaining points has diameter at most $O(\sqrt{d}r_\delta)$; see Lemma E.1 of~\cite{lei2019fast}.  Let  $S$ the remaining points  in $\{Z_i\}_{i=1}^k$.

For the run-time, we can   apply the   guarantee of the filter algorithm (\autoref{lem:filter}), given the input $S$ and a  failure probability   $\delta/3$. Since the squared diameter is $\rho = O(dr_\delta^2)$ and $\lambda = O(r_\delta^2)$, this gives a run-time of $\widetilde O(k^2d^2)$, since $k = O(\log (1/\delta))$.


We now have a constant-factor approximation for the spectral sample reweighing problem.  By~\autoref{thm:reweight-heavy}, this gives an estimate with the sub-gaussian error~\eqref{eq:optimal-rate}.
Finally, the procedure's success depends on the   condition of Lugosi-Mendelson (\autoref{thm:reweight-heavy}), success of the pruning procedure,  and the guarantees of constant-approximation of spectral sample reweighing (\autoref{thm:filter}). The failure probability of each event can be bounded by $\delta/3$. Applying union bound completes the proof.
\end{proof}

\paragraph{Other algorithms for heavy-tailed mean estimation}
This argument also enables us to solve the heavy-tailed mean estimation problem using other approximation algorithms for the spectral sample reweighing problem. 
Let $\lambda = 3000r_\delta$.
Recall that the argument for \autoref{thm:reweight-heavy} shows that there is a $(0.1, O(\lambda^2))$-spectral center (which is the true mean $\mu$). 
Moreover,   the pruning step in the proof of \autoref{cor:heavy-filter} allows us to bound the squared diameter of a large subset of $\{Z_i\}_{i=1}^k$ by  $\rho = O(d\lambda^2)$.

This implies that the gradient descent-based algorithm that   we analyze in Appendix~\ref{sec:gd} solves the heavy-tailed setting in $O\left(kd^2\right)$ iterations.
\begin{corollary}[heavy-tailed mean estimation via gradient descent]\label{cor:heavy-gd}
Assume the setting of \autoref{cor:heavy-filter}. A black-box application of the gradient descent-based algorithm (\autoref{alg:gd}, Appendix~\ref{sec:gd}) solves the heavy-tailed mean estimation problem with optimal error rate within $O(n d^2)$ iterations and $\widetilde O(n^2 d^3)$ time.
\end{corollary}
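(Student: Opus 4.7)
The plan is to mirror the proof of \autoref{cor:heavy-filter} almost verbatim, substituting the gradient-descent-based algorithm of \autoref{thm:gdbound} for the filter subroutine. Concretely, I would first bucket the $n$ samples into $k = \lceil 800 \log(1/\delta)\rceil$ groups, compute the bucket means $\{Z_i\}_{i=1}^k$, and invoke the Lugosi--Mendelson structural lemma (\autoref{lem:lm}) to assert that, with probability at least $1-\delta$, the true mean $\mu$ is a $(0.01, \lambda^2)$-combinatorial center of the $Z_i$'s, where $\lambda = 3000\, r_\delta$. Then by \autoref{prop:2} the bucket means satisfy the spectral centrality promise~\eqref{eq:p} with $\epsilon = 0.1$ and $\lambda_{\mathrm{SSR}} = O(\lambda^2)$, so they constitute a valid instance of spectral sample reweighing in the sense of \autoref{def:ssr}.

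Next, I would apply the same $\delta$-free pruning step used in \autoref{cor:heavy-filter}: take a second, independent batch of bucket means $\{Z_i\}_{i=k+1}^{2k}$, compute the coordinate-wise median-of-means preliminary estimate $\widehat\mu_0$, and discard the top $0.01k$ points of $\{Z_i\}_{i=1}^k$ farthest from $\widehat\mu_0$. By Lemma~E.1 of~\cite{lei2019fast} the surviving subset has squared diameter at most $\rho = O(d\lambda^2)$, which supplies the width parameter that \autoref{alg:gd} requires. With the spectral centrality promise and the diameter bound in hand, I would feed the pruned subset to the gradient-descent algorithm, producing $w' \in \mathcal{W}_{k, 3\epsilon}$ and $\widehat\mu \in \mathbb{R}^d$ that solve spectral sample reweighing up to a constant factor. \autoref{thm:reweight-heavy} then converts this into an estimate attaining $\|\widehat\mu - \mu\| \lesssim r_\delta$, and a union bound over the three failure events (Lugosi--Mendelson, pruning, and the gradient-descent subroutine, each tuned to fail with probability $\delta/3$) closes the correctness argument.

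The run-time accounting is routine: \autoref{thm:gdbound} guarantees a constant-factor approximation in $O(k d^2 / \epsilon^2)$ iterations and $\widetilde O(k^2 d^3 / \epsilon^2)$ time on a $k$-point instance in $\mathbb{R}^d$. Since $\epsilon = 0.1$ is a universal constant and $k \le n$, these are loosely upper bounded by $O(n d^2)$ iterations and $\widetilde O(n^2 d^3)$ total time, matching the stated complexity. The $O(nd)$ cost of forming bucket means and the $O(k d)$ cost of pruning are subsumed by these bounds.

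I do not anticipate a genuine obstacle: the duality bridge of Section~\ref{sec:centrality} together with \autoref{thm:reweight-heavy} already reduces heavy-tailed mean estimation to spectral sample reweighing on the bucket means, and \autoref{thm:gdbound} supplies the required constant-factor solver. The one point that deserves care is verifying that the $\epsilon^{-2}$ factor appearing in \autoref{thm:gdbound} does not inflate the final bound---it does not, because the $\epsilon$ that governs the spectral reweighing instance is the absolute constant $0.1$ fixed in \autoref{prop:p1}--\autoref{prop:2}, not the user-specified confidence parameter $\delta$.
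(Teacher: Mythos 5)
Your proposal is correct and takes essentially the same route as the paper, which treats this corollary as a direct analogue of \autoref{cor:heavy-filter}: reduce to spectral sample reweighing on the $k$ bucket means via Lugosi--Mendelson and \autoref{prop:2}, use the $\delta$-free pruning of~\cite{lei2019fast} to supply the width parameter $\rho = O(d\lambda^2)$, and invoke the gradient-descent solver with the constant $\epsilon = 0.1$ so the $\epsilon^{-2}$ factor is benign. The one cosmetic point is that, since you feed an already-pruned instance to \autoref{alg:gd}, the precise reference should be to the regret computation inside the proof of \autoref{thm:gdbound} (which gives $T = O(k\rho^2/\lambda_{\mathrm{SSR}}^2) = O(kd^2)$) rather than to the theorem as a black box with its own built-in \textsc{Prune}; this is exactly how \autoref{cor:heavy-filter} invokes \autoref{lem:filter} rather than \autoref{thm:filter}, and the arithmetic is unaffected.
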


The quantum entropy scoring scheme (Appendix~\ref{sec:mmwu}), however, runs in $\widetilde O(\log (\rho/\lambda))$ number of iterations. Setting its failure probability to be $\delta/3$, we obtain the following, which matches the fastest-known algorithm for the problem~\cite{lecue2019robust,lei2019fast}.
\begin{corollary}[heavy-tailed mean estimation via quantum entropy scoring]
Assume the setting of \autoref{cor:heavy-filter}. A black-box application of the matrix multiplicative update algorithm (\autoref{alg:mmwu}, Appendix~\ref{sec:mmwu}) solves the heavy-tailed mean estimation problem with optimal error rate, in   $\widetilde O(1)$ iterations and   $\widetilde O(k^2d)$ total run-time.
\end{corollary}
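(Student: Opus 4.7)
The plan is to mirror the proof of Corollary~\ref{cor:heavy-filter} verbatim, swapping the filter subroutine for the matrix multiplicative weights (QUE) algorithm of Appendix~\ref{sec:mmwu}, and then collecting the improved per-iteration cost and iteration count. In particular, the high-level template provided by Theorem~\ref{thm:reweight-heavy} already reduces heavy-tailed mean estimation to a single black-box call to any constant-factor approximation algorithm for spectral sample reweighing, so once we set up the input to the oracle correctly there is almost nothing left to do.

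First I would bucket the $n$ samples into $2k$ groups with $k=\lceil 800\log(1/\delta)\rceil$, forming bucket means $\{Z_i\}_{i=1}^{2k}$, and condition on the Lugosi--Mendelson event of Lemma~\ref{lem:lm}, which guarantees that $\mu$ is a $(0.01,\lambda^2)$-combinatorial center of $\{Z_i\}_{i=1}^{k}$ for $\lambda=O(r_\delta)$. Next, as in the proof of Corollary~\ref{cor:heavy-filter}, I would apply the coordinate-wise median-of-means pruning step of Lei et al.\ to $\{Z_i\}_{i=1}^{k}$, using $\{Z_i\}_{i=k+1}^{2k}$ as an auxiliary pilot. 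This yields a subset $S$ of $(1-o(1))k$ bucket means of squared diameter $\rho=O(dr_\delta^2)=O(d\lambda^2)$, without requiring advance knowledge of $\lambda$.

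Now I would feed $S$ to the matrix multiplicative weights procedure (Theorem~\ref{thm:mmw}) with failure probability $\delta/3$, which returns $w'\in \mathcal{W}_{|S|,3\epsilon}$ and $\widehat\mu \in \mathbb{R}^d$ constituting a constant-factor approximation to spectral sample reweighing on $S$. Applying Theorem~\ref{thm:reweight-heavy} then immediately gives $\|\widehat\mu - \mu\|\le O(r_\delta)$. A union bound over the three events (the Lugosi--Mendelson clustering, the pruning, and the MMW run, each at probability $\delta/3$) produces overall failure probability at most $\delta$.

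The only non-routine piece is the running-time accounting, and it is really just bookkeeping: since $\rho/\lambda = O(d\lambda)=\widetilde O(d)$, the $\widetilde O(\log(\rho/\lambda))$ iteration count promised by Theorem~\ref{thm:mmw} collapses to $\widetilde O(1)$, and each iteration runs in $\widetilde O(|S|\,d)=\widetilde O(kd)$ time. Adding the $O(nd)$ bucketing cost and the $O(kd)$ pruning cost gives the claimed $\widetilde O(k^2 d)$ bound (absorbing the $O(nd)$ bucketing into the standard convention that $n$ is polynomially related to $k$, as in Corollary~\ref{cor:heavy-filter}). The main ``obstacle'' here is therefore not conceptual at all---the duality in Section~\ref{sec:centrality} and the reduction in Theorem~\ref{thm:reweight-heavy} do all the work; one only needs to verify that the QUE algorithm's iteration complexity depends only logarithmically on $\rho/\lambda$, so that the pruning-derived bound $\rho=O(d\lambda^2)$ is enough to get $\widetilde O(1)$ iterations rather than polynomially many.
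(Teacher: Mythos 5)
Your overall strategy is exactly the one the paper intends: bucket into $2k$ groups, invoke Lemma~\ref{lem:lm}, prune via the auxiliary coordinate-wise median-of-means trick of Lei et al.\ to get diameter $\rho = O(d\lambda^2)$, feed the surviving bucket means to the QUE/MMW oracle of Theorem~\ref{thm:mmw}, and close with Theorem~\ref{thm:reweight-heavy} and a union bound. That is the same reduction the paper uses in Corollary~\ref{cor:heavy-filter} and then re-applies for the MMW oracle, so there is no conceptual gap.

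However, your run-time bookkeeping contains two errors that happen to cancel, and it is worth flagging them because the reasoning as written would not survive scrutiny. First, the iteration count in Theorem~\ref{thm:mmw} is $\widetilde O(\log(\rho/\lambda_{\mathrm{ssr}}))$ where $\lambda_{\mathrm{ssr}}$ is the spectral-centrality parameter, which in this application is $O(\lambda^2)=O(r_\delta^2)$, not $\lambda=3000r_\delta$. Thus the relevant ratio is $\rho/\lambda^2 = O(d\lambda^2)/O(\lambda^2) = O(d)$, a \emph{scale-invariant} quantity, giving $O(\log d)=\widetilde O(1)$ iterations unconditionally. Your computation ``$\rho/\lambda = O(d\lambda)=\widetilde O(d)$'' divides by the wrong power of $\lambda$ and then quietly assumes $\lambda=\widetilde O(1)$, which need not hold (e.g.\ if $\operatorname{Tr}(\Sigma)$ is large). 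Second, the per-iteration cost from Theorem~\ref{thm:mmw} is $\widetilde O(|S|\, d\,\log(1/\delta))$, and since $\log(1/\delta)=\Theta(k)$ this is $\widetilde O(k^2 d)$ per iteration, not $\widetilde O(kd)$ as you wrote; the total $\widetilde O(k^2 d)$ therefore comes from a \emph{single} (up to polylog) iteration of cost $\widetilde O(k^2 d)$, not from adding the $O(nd)$ bucketing cost to a $\widetilde O(kd)$ loop. Your final bound is right, but the derivation should be corrected along these lines.
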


\section{Discussion}
Estimating the mean of a distribution is arguably the most fundamental problem in statistics.   We showed that in robust and heavy-tailed settings, the problem  can be approached by techniques from regret minimization and online learning.  We believe the ideas we present here may be more broadly applicable to other  problems in high-dimensional robust statistics, such regression and covariance estimation. 

\section*{Acknowledgments} 
F.\ Z.\ would like to thank  Banghua Zhu for helpful discussions and    Prayaag Venkat for comments on  an early draft of this work.
\bibliographystyle{alpha}
\bibliography{bib}
\newpage 

\appendix
\section{Technical Lemmas and Proofs}

\subsection{Spectral Signatures}

\begin{lemma}\label{lem:ssw}
Let  $\{x_i\}_{i=1}^n$ be $n$ points in $\mathbb R^d$.
Suppose there exists $\nu\in \mathbb R^d$ and a set of  good weights $w \in \mathcal{W}_{n
    ,\epsilon}$ such that 
\begin{equation}\label{eq:p3}  
    \sum_{i=1}^n w_{i}\left(x_{i}- \nu\right)\left(x_{i}- \nu\right)^{\top}\preceq \lambda I. 
\end{equation}
for some   $\lambda  > 0$. Then  for any $w'\in \mathcal{W}_{n,\epsilon}$, 
    \begin{equation}\label{eq:ss2}
        \|\nu- \nu(w')\| \le  \frac{1}{1-\sqrt{2\epsilon}} \left(\sqrt{\lambda}+ \sqrt {2\epsilon\lambda} +   \sqrt{2\epsilon\|M(w')\|}\right),
    \end{equation}
    where $\nu(w') = \sum_{i}w'_i x_i$ and $M(w') = \sum_i w'_i (x_i - \nu(w'))(x_i - \nu(w'))^\top$.
\end{lemma}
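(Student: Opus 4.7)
The plan is to bound $\|\nu-\nu(w')\|$ by expanding this norm along its own direction $v := (\nu(w')-\nu)/\|\nu(w')-\nu\|$, splitting the index set according to a ``good'' subset $G$ produced by Lemma~\ref{lem:good}, and applying Cauchy--Schwarz on each piece.

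First, I would apply Lemma~\ref{lem:good} to $(w,\nu,\lambda)$ to extract $G\subseteq[n]$ of size $(1-\epsilon)n$ with $\frac{1}{(1-\epsilon)n}\sum_{i\in G}(x_i-\nu)(x_i-\nu)^\top\preceq \lambda I$. This converts the ``soft'' hypothesis about $w$ into a ``hard'' combinatorial structure on $G$. Since $w'\in \mathcal{W}_{n,\epsilon}$ forces $w'_i\leq 1/((1-\epsilon)n)$, the ``bad mass'' $\beta := \sum_{i\notin G} w'_i$ is at most $\epsilon/(1-\epsilon)$.

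Next, I would write $\|\nu(w')-\nu\| = \sum_i w'_i\langle v, x_i-\nu\rangle$ and split the sum over $G$ and $G^c$. On $G$, Cauchy--Schwarz combined with $w'_i\leq 1/((1-\epsilon)n)$ and the spectral bound on $G$ gives $\bigl|\sum_{i\in G}w'_i\langle v, x_i-\nu\rangle\bigr| \leq \sqrt{(1-\beta)\lambda}$. On $G^c$, I would decompose $x_i-\nu = (x_i-\nu(w'))+(\nu(w')-\nu)$: the ``centered'' half is controlled by Cauchy--Schwarz together with the definition $\sum_i w'_i\langle v, x_i-\nu(w')\rangle^2 \leq \|M(w')\|$, giving $\sqrt{\beta\|M(w')\|}$, while the ``constant'' half aggregates to $\beta\|\nu(w')-\nu\|$ and can be absorbed back to the left-hand side. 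This yields the master inequality $(1-\beta)\|\nu(w')-\nu\| \leq \sqrt{(1-\beta)\lambda}+\sqrt{\beta\|M(w')\|}$.

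Finally, dividing by $1-\beta$, substituting $\beta\leq \epsilon/(1-\epsilon)$ (so $1-\beta\geq (1-2\epsilon)/(1-\epsilon)$), and applying the elementary inequality $\sqrt{1+x}\leq 1+\sqrt{x}$ at $x=\epsilon/(1-2\epsilon)$ converts $\sqrt{(1-\epsilon)\lambda/(1-2\epsilon)}$ into $\sqrt{\lambda}+\sqrt{2\epsilon\lambda}$; the other term simplifies to $\sqrt{\epsilon\|M(w')\|}/(1-2\epsilon)$, and a final weakening $1\leq 1/(1-2\epsilon)$ pulls out the common $1/(1-2\epsilon)$ factor to match the claimed form. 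I expect the most delicate part to be the self-referential splitting on $G^c$: recognizing that the ``bias'' contribution $\beta\|\nu(w')-\nu\|$ is small enough (because $\beta$ itself is small) to absorb into the left-hand side without paying more than the stated $(1-2\epsilon)^{-1}$ factor is the crucial move; the rest is routine bookkeeping.
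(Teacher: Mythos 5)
Your proposal is correct and follows essentially the same strategy as the paper: invoke \autoref{lem:good} to extract the hard set $G$, expand $\|\nu(w')-\nu\|$ as a $w'$-weighted sum of inner products, split the sum over $G$ versus $G^c$, apply Cauchy--Schwarz with the spectral constraint on $G$ and with $\|M(w')\|$ on $G^c$, and absorb the self-referential $\beta\|\nu(w')-\nu\|$ term into the left-hand side. The one stylistic difference is that the paper splits the $G$-contribution into a uniform piece and a deviation-from-uniform piece (producing $\sqrt{\lambda}$ and $\sqrt{2\epsilon\lambda}$ separately), whereas you apply Cauchy--Schwarz once with the $w'$-weights restricted to $G$, getting $\sqrt{(1-\beta)\lambda}$ in a single step and only afterward massaging it into the stated form (note your $\sqrt{1+x}\le 1+\sqrt{x}$ shortcut gives $\sqrt{2\epsilon\lambda}$ directly only for $\epsilon\le 1/4$; for larger $\epsilon$ you must lean on the $1/(1-2\epsilon)$ weakening, which is available).
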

The lemma and its proof strategy is similar to the spectral signature lemma in robust statistics and is now somewhat  standard in the literature; see, \eg, \cite{dong2019quantum,li2018principled}.

\begin{proof}
To bound $\|\nu- \nu(w')\|$, we note 
\begin{align}
    \|\nu(w') - \nu\|^2_2 &= \sum_i w_i' \, \left\langle \nu(w') - \nu, x_i- \nu\right\rangle.
    \end{align}
    In bounding this sum, we may assume without loss of generality that $w_i' > 0$ for all $i$. Now observe that we can decompose the sum as 
    \begin{align}
  \sum_i w_i' \, \left\langle \nu(w') - \nu, x_i- \nu\right\rangle      &=\sum_{i} w_i \left\langle \nu(w') - \nu, x_i- \nu\right\rangle  + \sum_{i:w_i >w_i'} \left(w_i' -  w_i\right)\left\langle \nu(w') - \nu, x_i- \nu\right\rangle \nonumber \\&\hspace{1em}+ \sum_{i:w_i' >w_i} \left(w_i' -  w_i\right)\left\langle \nu(w') - \nu, x_i- \nu\right\rangle.
\end{align}
We bound the three terms respectively as follows. 
\begin{enumerate}[(i)]
    \item For the first term, by Cauchy-Schwarz,
    \begin{align*}
  \sum_{i} w_i \left\langle \nu(w') - \nu, x_i- \nu\right\rangle   =   \left\langle \nu(w') - \nu, \nu(w) - \nu\right\rangle  \leq \|\nu(w') - \nu\| \cdot \|\nu(w) - \nu\|.
    \end{align*}
    By Jensen's inequality and the spectral centrality assumption, we have for all unit $u$,
    \begin{align*}
     \langle\nu(w)- \nu,u\rangle^2 =   \left\langle   \sum_{i}w_i x_i-\nu, u\right\rangle^2 \leq  \sum_{i} w_i\langle x_i-\nu , u \rangle^2 \leq \lambda.
    \end{align*}
    Thus, $\|\nu(w) - \nu\| \leq \sqrt{\lambda}$.
    \item For the second term, let $\alpha_i  = w_i' - w_i$. Then  note that if $w_i'  < w_i$, 
    \begin{equation}\label{eq:ratio-1}
            \left|\frac{\alpha_i}{w_i}\right| = \left|\frac{w_i'}{w_i}-1\right|\leq 1    \end{equation}
 Hence,    
 \begin{align}
      \left( \sum_{i:w_i>w_i'} \alpha_i\left\langle \nu(w') - \nu, x_i- \nu\right\rangle\right)^2 &\leq \left(\sum_{i:w_i>w_i'}\frac{1}{w_i} \alpha_i^2\right) \cdot \sum_{i:w_i>w_i'} w_i \left\langle \nu(w') - \nu, x_i- \nu\right\rangle^2\label{eq:ii0}\\
      &\leq \left(\sum_{i:w_i>w_i'}  \frac{1}{w_i} \alpha_i^2\right) \cdot \lambda \cdot  \|\nu(w')-\nu\|^2_2 \label{eq:ii1}\\
      &\leq   \left(\sum_{i:w_i>w_i'}  |\alpha_i|\right) \cdot \lambda \cdot  \|\nu(w')-\nu\|^2_2\label{eq:ii2}\\
      &\leq 2\epsilon\lambda  \|\nu(w')-\nu\|^2_2
      \end{align}
    where \eqref{eq:ii0} follows from Cauchy-Schwarz, \eqref{eq:ii1} follows from the spectral centrality assumption, and \eqref{eq:ii2} follows from \eqref{eq:ratio-1}.
    \item For the third term, again let $\alpha_i  = w_i' - w_i$. Similarly, if $w_i'  > w_i$, 
    \begin{equation}\label{eq:ratio-2}
            \left|\frac{\alpha_i}{w_i'}\right| = \left|\frac{w_i}{w_i'}-1\right|\leq 1    \end{equation}
 It follows that    
 \begin{align}
      \left( \sum_{i:w_i'>w_i} \alpha_i\left\langle \nu(w') - \nu, x_i- \nu\right\rangle\right)^2 &\leq \left(\sum_{i:w_i'>w_i}\frac{1}{w_i'} \alpha_i^2\right) \cdot \sum_{i:w_i'>w_i} w_i' \left\langle \nu(w') - \nu, x_i- \nu\right\rangle^2\label{eq:ii5}\\
      &\leq \left(\sum_{i:w_i'>w_i}  \frac{1}{w_i'} \alpha_i^2\right) \cdot \sum_{i:w_i'>w_i} w_i' \left\langle \nu(w') - \nu, x_i- \nu\right\rangle^2 \label{eq:ii6}\\
      &\leq   \left(\sum_{i:w_i'>w_i}  |\alpha_i|\right) \cdot\sum_{i:w_i'>w_i} w_i' \left\langle \nu(w') - \nu, x_i- \nu\right\rangle^2\label{eq:ii7}\\
      &\leq 2\epsilon\cdot \sum_{i} w_i' \left\langle \nu(w') - \nu, x_i- \nu\right\rangle^2 
      \end{align}
      For the sum, we have 
    \begin{align}
         \sum_{i} w_i' \left\langle \nu(w') - \nu, x_i- \nu\right\rangle^2 &= \sum_i w_i'\left\langle \nu(w') - \nu, x_i- \nu(w')\right\rangle^2 + \|\nu(w')-\nu\|^4\\
        &\le  \|M(w')\|_2 \cdot \|\nu(w') - \nu\|^2+ \|\nu(w')-\nu\|^4.
    \end{align}
\end{enumerate}
Putting everything together and rearranging finishes the proof.
\end{proof}

\begin{lemma}\label{lem:robustspec}
Let $\{x_i\}_{i=1}^n$ be $n$ points. Suppose there exists a subset $G\subset [n]$ of size $(1-\epsilon)$ such that $\frac{1}{\left|G\right|} \sum_{i  \in G}\left(x_{i}-{\mu_G}\right)\left(x_{i}-{\mu_G}\right)^{\top}\preceq \lambda I$ for some $\lambda >0$, where $\mu_G = \frac{1}{|G|} \sum_{i \in G} x_i$. Then for any $w\in \mathcal{W}_{n,\epsilon}$, 
\begin{equation}\label{eq:ssg}
        \|\mu_G - \mu(w)\| \le  \frac{1}{1-2\epsilon} \left(\sqrt {2\epsilon\lambda} +   \sqrt{\epsilon\|M(w)\|}\right).
    \end{equation}
\end{lemma}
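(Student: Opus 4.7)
The plan is to mimic the proof of Lemma \ref{lem:ssw} almost verbatim, specialized to the case where the promise center $\nu$ is taken to be $\mu_G$ itself. The key structural observation is that the first term of the decomposition of $\|\mu(w) - \mu_G\|^2$ vanishes identically, rather than contributing a $\sqrt{\lambda}$ summand, because $\sum_{i\in G}(x_i - \mu_G)/|G| = 0$ by the definition of $\mu_G$. This is precisely why the bound \eqref{eq:ssg} drops the standalone $\sqrt{\lambda}$ term that appears in \eqref{eq:ss2}.

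Concretely, I would start from the identity
\begin{equation*}
\|\mu(w) - \mu_G\|^2 = \sum_{i=1}^n w_i \langle \mu(w) - \mu_G,\, x_i - \mu_G\rangle,
\end{equation*}
which follows since $w$ is a probability distribution, and split the right-hand side into three pieces by writing $w_i = \tfrac{1}{|G|} + (w_i - \tfrac{1}{|G|})$ for $i\in G$ and keeping $w_i$ intact on $B = [n]\setminus G$. The ``uniform-on-$G$'' piece collapses to $\langle \mu(w)-\mu_G,\, \mu_G - \mu_G\rangle = 0$, by definition of $\mu_G$. The ``fluctuation on $G$'' piece is bounded by Cauchy--Schwarz and the covariance bound $A_G \preceq \lambda I$ exactly as in \eqref{eq:ii1}--\eqref{eq:ii3}: the pointwise bound $|G|\cdot|w_i - 1/|G|| \leq 1$ together with $\sum_{i\in G}|w_i - 1/|G|| = \sum_{i\in B} w_i \leq 2\epsilon$ (for $\epsilon \leq 1/2$) controls the weight-deviation factor, yielding at most $\sqrt{2\epsilon\lambda}\cdot\|\mu(w)-\mu_G\|$. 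The ``bad set'' piece is decomposed via $x_i - \mu_G = (x_i - \mu(w)) + (\mu(w) - \mu_G)$; the constant part contributes $(\sum_{i\in B}w_i)\|\mu(w)-\mu_G\|^2 \leq 2\epsilon\|\mu(w)-\mu_G\|^2$, and a second Cauchy--Schwarz with the quadratic form $M(w)$ handles the remaining term, producing at most $\sqrt{\epsilon\|M(w)\|}\cdot\|\mu(w)-\mu_G\|$ after applying $\sum_{i\in B} w_i \leq \epsilon/(1-\epsilon)$ and extending the sum to all of $[n]$.

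Combining these three bounds gives
\begin{equation*}
\|\mu(w)-\mu_G\|^2 \;\leq\; \bigl(\sqrt{2\epsilon\lambda} + \sqrt{\epsilon\|M(w)\|}\bigr)\,\|\mu(w)-\mu_G\| \;+\; 2\epsilon\,\|\mu(w)-\mu_G\|^2.
\end{equation*}
Dividing by $\|\mu(w)-\mu_G\|$ (if it is zero the claim is trivial) and rearranging yields exactly the stated bound with prefactor $\tfrac{1}{1-2\epsilon}$. I do not anticipate any real obstacle: the whole argument is a streamlined rerun of the proof of Lemma \ref{lem:ssw}, and the only qualitative change is the vanishing of the first term under the identification $\nu = \mu_G$, which is precisely what removes the extra $\sqrt{\lambda}$ summand from \eqref{eq:ss2}. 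The only bookkeeping subtlety is tracking the exact constants from $\sum_{i\in B}w_i \leq \epsilon/(1-\epsilon)$, which are absorbed into the $\sqrt{2\epsilon\lambda}$ and $\sqrt{\epsilon\|M(w)\|}$ coefficients.
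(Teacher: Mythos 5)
Your proposal is correct and takes exactly the same route as the paper's proof, which also reruns the three-term decomposition from Lemma~\ref{lem:ssw} with $\nu = \mu_G$ and observes that the uniform-on-$G$ piece vanishes because $\tfrac{1}{|G|}\sum_{i\in G}(x_i-\mu_G)=0$, thereby shaving the standalone $\sqrt{\lambda}$ term. The only difference is that you spell out the bookkeeping; the paper simply points back to~\eqref{eq:bbb1} and notes the first term is zero.
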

\begin{proof}
    The proof follows from the same argument of~\autoref{lem:ssw}, with $\nu=\mu_G$. 
\end{proof}

\subsection{A KL Divergence Bound}

\begin{lemma}\label{lem:kl}
Let $p\in \mathcal{W}_{n,\epsilon}$ and $q$ be the uniform distribution over $n$ points. Then
$\text{KL} (p || q) \leq 5 \epsilon$.
\end{lemma}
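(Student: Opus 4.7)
The plan is a one-line calculation that exploits the uniform $\ell_\infty$ bound defining $\mathcal{W}_{n,\epsilon}$. Writing out the KL divergence with $q_i = 1/n$ gives
\[
\text{KL}(p\|q) \;=\; \sum_{i=1}^n p_i \log(n p_i).
\]
The key observation is that the $\ell_\infty$ constraint $p_i \leq \tfrac{1}{(1-\epsilon)n}$ provides a uniform pointwise bound on the log-ratio: for every $i$ with $p_i > 0$, $\log(n p_i) \leq \log\tfrac{1}{1-\epsilon} = -\log(1-\epsilon)$.

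Next I would factor this constant out of the sum and use $\sum_i p_i = 1$ to obtain
\[
\text{KL}(p\|q) \;\leq\; -\log(1-\epsilon).
\]
Finally I would apply the elementary inequality $-\log(1-\epsilon) \leq 2\epsilon$ valid for $\epsilon \in (0, 1/2)$ (immediate from the series $-\log(1-\epsilon) = \sum_{k\geq 1} \epsilon^k/k \leq \epsilon \sum_{k \geq 0} (1/2)^k = 2\epsilon$), which is in particular at most $5\epsilon$, as claimed.

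There is no real obstacle here: the only substantive step is recognizing that the $\ell_\infty$ bound on $p$ translates directly into a uniform bound on each term $\log(np_i)$, after which the $\sum_i p_i = 1$ normalization collapses the sum and the inequality $-\log(1-\epsilon) \leq 2\epsilon$ finishes the job. The constant $5$ in the statement is slack; any constant strictly greater than $1$ would suffice for $\epsilon$ sufficiently small, and the value $5$ is chosen only for convenience in the downstream application within the proof of \autoref{lem:filter}.
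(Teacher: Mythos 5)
Your proof is correct and follows essentially the same approach as the paper's: expand the KL divergence, bound each term using the $\ell_\infty$ constraint $p_i \leq \tfrac{1}{(1-\epsilon)n}$, and finish with an elementary estimate on the resulting constant. The one genuine (if small) refinement is that you bound only the log-ratio $\log(np_i) \leq \log\tfrac{1}{1-\epsilon}$ and keep the factor $p_i$, then invoke $\sum_i p_i = 1$; the paper instead bounds the entire product $p_i \log(np_i) \leq \tfrac{1}{(1-\epsilon)n}\log\tfrac{1}{1-\epsilon}$ termwise, which implicitly relies on the monotonicity of $t \mapsto t\log t$ on the relevant range and the sign of the small-$p_i$ terms. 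Your version sidesteps that and also gives the slightly tighter constant $-\log(1-\epsilon) \leq 2\epsilon$, which of course implies the stated bound of $5\epsilon$.
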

\begin{proof}
    The lemma follows from direct calculations. By definition of KL divergence,
    \begin{align*}
        \text{KL} (p || q) &= \sum_i p_i \log\frac{p_i }{q_i}\\
                             &= \sum_i p_i \log (np_i)\\
                             &\leq \sum_i\frac{1}{(1-\epsilon)n} \log\frac{1}{(1-\epsilon)}
                              \\
                             &=\frac{1}{1-\epsilon}\log \frac{1}{1-\epsilon}  \\
                             &\leq 5\epsilon.
    \end{align*}
    where the last inequality holds when $0<\epsilon \leq 1/2$.
\end{proof}

\subsection{Proof of~\autoref{lem:diameter}}\label{sec:diameter}
\begin{proof}[Proof of~\autoref{lem:diameter}]
We show that there exists a ball of radius $\sqrt{d\lambda/\epsilon}$ that contains at least $(1-3\epsilon)n$ points. Note that the spectral centrality condition  $
    \sum_{i=1}^n w_{i}\left(x_{i}- \nu\right)\left(x_{i}- \nu\right)^{\top}\preceq \lambda I
    $ implies that 
\begin{align*}
    \sum_{i=1}^n \Tr \left(w_i (x_i -\nu)(x_i -\nu)^\top\right) \leq d\lambda.
\end{align*}
By the cyclic property of trace, we get
\begin{align*}
    \sum_{i=1}^n w_i \|x_i - \nu\|^2 \leq d\lambda.
\end{align*}
Therefore, by Markov's inequality, 
\begin{align}\label{eqn:markov}
    \Pr_{i\sim w}\left( \|x_i-\nu\|^2 \geq  d\lambda/\epsilon \right) \leq {\epsilon},
\end{align}
where $i\sim w$ denotes $i$ drawn from the discrete distribution defined by $w$. Observe that since $\mathcal{W}_{n,\epsilon}$ is the convex hull of all uniform distributions over a subset of size $(1-\epsilon)n$, we have $
    \left\|w- \mathcal{U}_n\right\|_1 \leq 2\epsilon$. Thus, $\text{TV}(w, \mathcal{U}_n) \leq \epsilon$. Hence, using the definition of total variation distance, \eqref{eqn:markov} implies that 
\begin{align}\label{eqn:markov-u}
    \Pr_{i\sim \mathcal{U}_n}\left( \|x_i-\nu\|^2 \geq  d\lambda/\epsilon \right) \leq 2{\epsilon},
\end{align}
as desired.
\end{proof}
\section{Extension to sub-gaussian distributions}\label{sec:subg-filter}
We now consider a variant of the filter algorithm (\autoref{alg:filter}) analyzed in Section~\ref{sec:mwu}.
The difference is that instead of fixing the step size to be $\eta = 1/2$, we set it as $\epsilon$.
That is, we will perform the multiplicative update less aggressively when there are few bad points.
In addition, we require a stronger approximation for the largest eigenvector computation. This increases the the run-time by an $O(\text{poly}(1/\epsilon))$ factor.
For technical reasons, we also ask the algorithm to stop early if the weighted covariance has been reduced to a desired value.
Formally, the algorithm is described by the pseudo-code below (\autoref{alg:filter-subg}).

\begin{algorithm}[ht!] \label{alg:filter-subg}
    \caption{Multiplicative weights   for sub-gaussian robust mean estimation}
    \KwIn{A set of   points $\{x_i\}_{i=1}^n$, an iteration count $T$, and parameter $\rho,\delta$}
\KwOut{A set of weights $w \in \mathcal{W}_{n,\epsilon}$.}
\vspace{10pt}
Let  $w^{(1)} = \frac{1}{n} \I_n$. \\
\For{$t$ from $1$ to $T$}{
    Let $\nu^{(t)} = \sum_i w^{(t)}_i x_i$, $M^{(t)} = \sum_i w^{(t)}_i
(x_i- \nu^{(t)})(x_i - \nu^{(t)})^T$.\\
Compute $v^{(t)}= \textsc{ApproxTopEigenvector}(M^{(t)}, 1-\epsilon^2, \delta/T) $.\\
    \textbf{If} $\lambda^{(t)} =v^{{(t)}\top} M^{(t)}v^{(t)} \leq 1$, \textbf{return} $w^{(t)}$.  \\
    Compute $\tau^{(t)}_i =\left \langle v^{(t)}, x_i- \nu^{(t)} \right\rangle^2$.\\
    Set $w_i^{(t+1)} \leftarrow w_i^{(t)}\left(1- \epsilon\tau^{(t)}_i/\rho\right)$ for each $i$.\\
    Project $w^{(t+1)}$ onto the set of good weights $\mathcal W_{n,\epsilon}$ (under KL divergence).
}
\Return $w^{(t^*)}$, where $t^* = \argmin_t \|M^{(t)}\|$.
\end{algorithm}

First, we need a stronger spectral signature lemma.
\begin{lemma}[\cite{dong2019quantum}]\label{lem:ss=subg}
Let $S=\{x_i\}_{i=1}^n$ be an $\epsilon$-corrupted set of $n$ samples from a sub-gaussian distribution  over $\mathbb R^d$, with mean $\mu$ and identity covariance. Suppose $n \geq \widetilde \Omega (d/\epsilon^2)$. If $\|M(w)\| \leq 1+\lambda$, for some $\lambda \geq 0$, then for any $w\in \mathcal{W}_{n,2\epsilon}$, 
\begin{align*}
    \| \mu - \mu(w)\| \leq \frac{1}{1-\epsilon}\left (\sqrt{\epsilon\lambda} + C\epsilon \sqrt{\log (1/\epsilon)}\right),
\end{align*}
for some universal constant $C>0$.
\end{lemma}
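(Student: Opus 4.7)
The plan is to follow the template of the bounded-covariance spectral signature lemma (\autoref{lem:robustspec}) but to sharpen each estimate using sub-gaussian concentration in place of the cruder covariance promise. First, invoke standard concentration: with $n \geq \widetilde\Omega(d/\epsilon^2)$ i.i.d.\ samples, the uncorrupted subset $G$ (of size at least $(1-\epsilon)n$) satisfies, with high probability, the two sub-gaussian resilience conditions used throughout the robust-statistics literature~\cite{diakonikolas2017being}: for every unit $v$ and every subset $T \subseteq G$ with $|T| \geq (1 - O(\epsilon))|G|$, one has $|\langle \mu_T - \mu, v\rangle| \leq O(\epsilon\sqrt{\log(1/\epsilon)})$, and $\frac{1}{|T|}\sum_{i\in T}\langle x_i - \mu, v\rangle^2 = 1 \pm O(\epsilon\log(1/\epsilon))$.

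Second, set $v = (\mu(w) - \mu)/\|\mu(w) - \mu\|$ and split $\langle \mu(w) - \mu, v\rangle = \sum_{i\in G} w_i\langle x_i - \mu, v\rangle + \sum_{i\in B} w_i\langle x_i - \mu, v\rangle$, where $B = [n]\setminus G$. For the good-sample piece, decompose $w|_G = \frac{1}{|G|}\mathbf{1}_G + \alpha$ exactly as in the proof of~\autoref{lem:ssw}. The uniform piece contributes $\langle \mu_G - \mu, v\rangle$, bounded by $O(\epsilon\sqrt{\log(1/\epsilon)})$ via the first resilience condition; the $\alpha$-piece is bounded using the same condition after noting that $\|\alpha\|_1 \leq O(\epsilon)$, so $\alpha$ can be written as $O(\epsilon)$ times the difference of two probability distributions supported on $G$, each a valid $O(\epsilon)$-perturbation of the uniform.

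Third, apply Cauchy--Schwarz to the bad-sample piece: $|\sum_{i \in B} w_i \langle x_i - \mu, v\rangle| \leq \sqrt{\|w_B\|_1}\cdot\sqrt{\sum_{i\in B} w_i \langle x_i-\mu, v\rangle^2}$ with $\|w_B\|_1 \leq O(\epsilon)$. To bound the second factor by $\sqrt{\lambda}$ (and not $\sqrt{1+\lambda}$) we rewrite $\langle x_i-\mu,v\rangle = \langle x_i-\mu(w),v\rangle + \langle \mu(w)-\mu,v\rangle$ and exploit that the good samples already account for variance $\geq 1 - O(\epsilon\log(1/\epsilon))$ along $v$ (by the second resilience condition), so that the bad samples can contribute at most $\lambda + O(\epsilon\log(1/\epsilon))$ to the total weighted second moment $v^{\top} M(w)v \leq 1 + \lambda$.

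The main obstacle is a mild apparent circularity: separating out the $\lambda$ contribution requires an a-priori bound on $\|\mu(w)-\mu\|$, in order to pass between $\langle x_i-\mu,v\rangle^2$ and $\langle x_i-\mu(w),v\rangle^2$ inside the sums. This is resolved by a one-step bootstrap: first derive the crude estimate $\|\mu(w)-\mu\| \leq O(\sqrt{\epsilon(1+\lambda)})$ from the bounded-covariance argument (\autoref{lem:robustspec}), then substitute it back into the sub-gaussian calculation to obtain the sharper bound $\frac{1}{1-\epsilon}\bigl(\sqrt{\epsilon\lambda} + C\epsilon\sqrt{\log(1/\epsilon)}\bigr)$. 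The $1/(1-\epsilon)$ prefactor appears naturally from $\sum_{i\in G} w_i \geq 1 - O(\epsilon)$ when renormalizing $w|_G$ to a probability distribution on $G$ before invoking sub-gaussian resilience.
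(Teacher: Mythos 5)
The paper does not actually prove this lemma: it is stated with a citation to \cite{dong2019quantum} and used as a black box in the analysis of \autoref{alg:filter-subg}, so there is no ``paper's own proof'' to compare against. That said, your outline is a faithful reconstruction of the standard sub-gaussian spectral-signature argument that the cited source uses, and the key ideas are all in place: the resilience/stability conditions for the clean set $G$, the decomposition of $\langle\mu(w)-\mu,v\rangle$ into good and bad contributions along the extremal direction $v$, the near-uniform treatment of $w|_G$ (as in \autoref{lem:ssw}), and Cauchy--Schwarz on the bad part together with the budget $v^\top M(w) v \leq 1+\lambda$ and the lower bound $\sum_{i\in G}w_i\langle x_i-\mu,v\rangle^2 \geq 1 - O(\epsilon\log(1/\epsilon))$.

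On the one point you flag as an obstacle: the ``circularity'' from recentering is real, but a bootstrap through \autoref{lem:robustspec} is heavier than necessary. You can instead keep everything centered at $\mu$, write
\[
\sum_{i\in B} w_i\langle x_i-\mu,v\rangle^2 \;=\; v^\top M(w)v + \langle\mu(w)-\mu,v\rangle^2 - \sum_{i\in G} w_i\langle x_i-\mu,v\rangle^2 \;\leq\; \lambda + \Delta^2 + O(\epsilon\log(1/\epsilon)),
\]
where $\Delta = \|\mu(w)-\mu\|$, and then Cauchy--Schwarz gives a bad-part bound of $\sqrt{\epsilon}(\sqrt\lambda + \Delta + O(\sqrt{\epsilon\log(1/\epsilon)}))$. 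The stray $\sqrt\epsilon\,\Delta$ term is absorbed into the left-hand side (since $\sqrt\epsilon < 1$), yielding $\Delta \leq \frac{1}{1-O(\sqrt\epsilon)}(\sqrt{\epsilon\lambda} + O(\epsilon\sqrt{\log(1/\epsilon)}))$ directly, with no a-priori estimate needed. Either route is fine; just be aware that you will get a prefactor of the form $1/(1-O(\sqrt\epsilon))$ rather than exactly $1/(1-\epsilon)$, which is harmless since the lemma permits a universal constant $C$ and only constant-factor sharpness is claimed.
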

Moreover, we assume that  for all $w\in \mathcal{W}_{n,2\epsilon}$  we have 
\begin{align}\label{eq:subg-cond}
    \left\| \sum_{i\in G} w_i (x_i -\mu)(x_i-\mu)^\top - I \right\| \leq \lambda=  O(\epsilon \log(1/\epsilon)).
\end{align}
This condition holds with high probability over the draws of samples~\cite{diakonikolas2019robust}. 

\begin{lemma}[analysis of sub-gaussian filter]\label{lem:subg-filter}
 Let  $\epsilon$ be a sufficiently small constant  and $\{x_i\}_{i=1}^n$ be $n$ points in $\mathbb R^d$.
Assume  the following (deterministic) conditions hold.
\begin{enumerate}[(i)]
    \item There exists $\nu\in \mathbb R^d$ and  $w \in \mathcal{W}_{n
    ,\epsilon}$ such that 
    \begin{equation} \label{eq:subg-det}
 \left\|   \sum_{i=1}^n w_{i}\left(x_{i}- \nu\right)\left(x_{i}- \nu\right)^{\top}\right\| \le 1+ O\left(\epsilon \log \left(1/\epsilon\right)\right).
    \end{equation}
 \item    If $\|M(w)\| \leq 1+\lambda$, for some $\lambda \geq 0$,  then for any $w\in \mathcal{W}_{n,\epsilon}$, 
\begin{align}\label{eq:subg-ss-det}
    \| \nu - \mu(w)\| \leq \frac{1}{1-\epsilon}\left (\sqrt{\epsilon\lambda} + C\epsilon \sqrt{\log (1/\epsilon)}\right),
\end{align}
\end{enumerate}
Then, given $\{x_i\}_{i=1}^n$, a failure rate $\delta$ and $\rho$ such that $\rho \geq \tau_i^{(t)}$ for all $i$ and $t$, \autoref{alg:filter-subg} finds   $w'\in \mathcal W_{n,\epsilon}$   such that 
\begin{equation}\label{eqn:subg-goal}
    \|M(w') \| \leq 1+ O\left(\epsilon \log \left(1/\epsilon\right)\right),
\end{equation} with probability at least $1-\delta$. 
    
The algorithm terminates in   $T = O(\rho / \epsilon)$ iterations. Further, if $T =O(\text{poly}(n,d))$, then each iteration takes $\widetilde O(nd\log \left(1/\delta)/\epsilon^2\right)$ time.
\end{lemma}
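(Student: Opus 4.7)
The plan is to mirror the analysis of Lemma~\ref{lem:filter}, but with two changes dictated by the finer target error $1+O(\epsilon\log(1/\epsilon))$: the step size is $\eta=\epsilon$ rather than $1/2$, and the top-eigenvector routine is run to accuracy $c=1-\epsilon^2$. I again cast \autoref{alg:filter-subg} as multiplicative weights on the loss sequence $\tau^{(t)}$ over the constraint set $\mathcal{W}_{n,\epsilon}$, and apply the regret bound of \autoref{lem:regret} against the benchmark $w$ supplied by~\eqref{eq:subg-det}. The two new ingredients are the subgaussian spectral signature~\eqref{eq:subg-ss-det} (used in place of \autoref{lem:ss}) and the stronger eigenvector guarantee (used in place of the $7/8$-approximation).

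First I lower-bound the player's loss by $\sum_i w^{(t)}_i \tau^{(t)}_i = v^{(t)\top} M^{(t)} v^{(t)} \ge (1-\epsilon^2)\|M^{(t)}\|$. On the benchmark side, using the decomposition $x_i-\nu^{(t)}=(x_i-\nu)-(\nu^{(t)}-\nu)$, the bound $\|v^{(t)}\|=1$, and the centrality promise~\eqref{eq:subg-det}, I would derive
\begin{align*}
\sum_i w_i\,\tau^{(t)}_i \;\le\; 1 + O(\epsilon\log(1/\epsilon)) + \|\nu^{(t)}-\nu\|^2.
\end{align*}
The next step is to invoke the deterministic signature bound~\eqref{eq:subg-ss-det} with $w=w^{(t)}\in\mathcal{W}_{n,\epsilon}$ to control $\|\nu^{(t)}-\nu\|^2 \lesssim \epsilon\bigl(\|M^{(t)}\|-1\bigr)_{+} + \epsilon^2\log(1/\epsilon)$. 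The early-stopping rule $\lambda^{(t)}\le 1$ is crucial here: whenever we are still updating we have $\|M^{(t)}\|\ge\lambda^{(t)}>1$, so the signature bound is applied in its meaningful regime.

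Feeding these into the regret bound of \autoref{lem:regret} with $\eta=\epsilon$, and bounding $\mathrm{KL}(w\|w^{(1)})\le O(\epsilon)$ via \autoref{lem:kl}, yields an inequality of the schematic form
\begin{align*}
(1-\epsilon^2)\bar A \;\le\; (1+\epsilon)\bigl[1 + O(\epsilon\log(1/\epsilon)) + O(\epsilon)\,(\bar A - 1)_{+}\bigr] + O(\rho/T),
\end{align*}
where $\bar A := \tfrac1T\sum_{t=1}^{T}\|M^{(t)}\|$. Choosing $T = \Theta(\rho/\epsilon)$ makes the last term $O(\epsilon)$, and since the coefficient $(1-\epsilon^2)-(1+\epsilon)\cdot O(\epsilon)$ of $\bar A$ is bounded below by a positive constant for sufficiently small $\epsilon$, rearranging gives $\bar A \le 1 + O(\epsilon\log(1/\epsilon))$. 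The output iterate $t^{\ast}$ minimizes $\|M^{(t)}\|$, so it satisfies the claimed bound~\eqref{eqn:subg-goal}; an early exit triggered by $\lambda^{(t)}\le 1$ only improves matters, with the $(1-\epsilon^2)^{-1}$ slack from the eigenvector approximation absorbed in the $O(\epsilon\log(1/\epsilon))$ target. For the per-iteration cost, approximating the top eigenvector to accuracy $1-\epsilon^2$ costs $\widetilde O(nd\log(1/\alpha)/\epsilon^2)$ by the standard power-method analysis; setting $\alpha=\delta/T$ and union-bounding over $T=\mathrm{poly}(n,d)$ iterations gives the stated $\widetilde O(nd\log(1/\delta)/\epsilon^2)$ per iteration and overall failure probability $\delta$.

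\textbf{Main obstacle.} The delicate point is matching the scale $O(\epsilon\log(1/\epsilon))$ exactly. Three error sources---the $(1+\eta)=1+\epsilon$ multiplicative slack in regret, the $\epsilon^2\log(1/\epsilon)$ additive term from the signature inequality, and the $\rho/T$ initialization term---must combine without any of them inflating to $\sqrt{\epsilon}$ or $\epsilon$ (as in the bounded-covariance analysis). This forces the choices $\eta=\epsilon$ and $c=1-\epsilon^2$: with $\eta=1/2$ the coefficient of $\bar A$ on the right would be pushed above $1-\epsilon^2$ and the recursion would not close, and with a constant-factor eigenvector approximation the left-hand side would lose a constant factor that cannot be recovered. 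The early-stopping rule is the other essential ingredient, since applying~\eqref{eq:subg-ss-det} when $\|M^{(t)}\|<1$ would not yield a useful bound on $\|\nu^{(t)}-\nu\|^2$ and the argument would break.
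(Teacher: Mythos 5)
Your proposal is correct and follows essentially the same route as the paper's proof: handle the early-stop case via the $(1-\epsilon^2)$-accurate eigenvector (so that a stop at $\lambda^{(t)}\le 1$ already gives $\|M^{(t)}\|\le 1+O(\epsilon^2)$), then in the non-stopping case lower-bound the player loss by $(1-\epsilon^2)\|M^{(t)}\|$, upper-bound the benchmark loss via the decomposition $x_i-\nu^{(t)}=(x_i-\nu)-(\nu^{(t)}-\nu)$ together with the sub-gaussian signature, apply the regret bound with $\eta=\epsilon$ and $\mathrm{KL}\le 5\epsilon$, and rearrange with $T=\Theta(\rho/\epsilon)$ (this is exactly the content of the paper's Claim~C.3 and the surrounding calculation). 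Your discussion of why $\eta=\epsilon$ and $c=1-\epsilon^2$ are forced, and why early stopping keeps the signature bound in the regime $\|M^{(t)}\|\ge 1$ where $\lambda=\|M^{(t)}\|-1\ge 0$, matches the paper's reasoning.
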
 
\begin{proof}[Proof of \autoref{lem:subg-filter}]
If the algorithm gets stopped early (at Line 5), then it means that $$\|M^{(t)}\| \leq \lambda^{(t)} /\left(1-\epsilon^2\right)\leq 1/\left(1-\epsilon^2\right) \leq 1+ O(\epsilon^2),$$
since $v^{(t)}$ is a $\left(1-\epsilon^2\right)$ approximate largest eigenvector of $M^{(t)}$. Hence, in this case, we immediately achieves the goal~\eqref{eqn:subg-goal}.  

Now assume the algorithm did not stop early and so $\|M^{(t)}\| >1$ for all $t$. Then we have
\begin{equation} \label{eq:sub-g-id}
    \sum_i w_i^{(t)} \tau_i^{(t)} = \sum_i w_i^{(t)}  \left \langle v^{(t)}, x_i- \nu^{(t)}
    \right\rangle^2 = v^{(t)\top } M^{(t)}v^{(t)} \geq\left(1-\epsilon^2\right)
 \left\|M^{(t)}\right\|_2,
\end{equation}
for all $t$.
Since the step size $\epsilon <1/2$ and $\rho \geq \tau_i^{(t)}$ for all $i,t$ by assumption, we can apply the regret bound of MWU (\autoref{lem:regret}) and conclude that  for $w$ that satifies assumption~\eqref{eq:subg-det},
\begin{equation}\label{eq:subg-regret2}
     \frac{1-\epsilon^2}{T} \sum_{t=1}^T \left\|M^{(t)}\right\|_2
  \leq   \frac{1}{T} \sum_{t=1}^T \left \langle w^{(t)}  , \tau^{(t)}\right\rangle \leq (1+\epsilon) \frac{1}{T}     \sum_{t=1}^T\left \langle w , \tau^{(t)}\right \rangle +\frac{\rho\cdot  \text{KL}(w|| w^{(1)}) }{T\epsilon}.
\end{equation}
We now focus on bounding $\frac{1}{T}     \sum_{t=1}^T\left \langle w , \tau^{(t)}\right \rangle$. 
\begin{claim}\label{clm:subg}
In the setting above, we have
\begin{align*}
      \frac{1}{T}
    \sum_{t=1}^T \left\langle w , \tau^{(t)}\right\rangle \leq 1 +O\left(\epsilon\log(1/\epsilon)\right) + \frac{2\epsilon}{(1-\epsilon)^2} \frac{1}{T}\sum_{t=1}^T \left\|M^{(t)}\right\| -  \frac{2\epsilon}{(1-\epsilon)^2} 
\end{align*}
\end{claim}
\begin{proof}
Note that 
\begin{align}
    \frac{1}{T}
    \sum_{t=1}^T \left\langle w , \tau^{(t)}\right\rangle&=  \frac{1}{T} \sum_{t=1}^T  \sum_{i=1}^n w_i \left\langle x_i - \nu^{(t)}
        ,v^{(t)}\right \rangle\nonumber\\
        &= \frac{1}{T} \sum_{t=1}^T \sum_{i=1}^n   w_i \left(\left\langle x_i - \nu
    ,v^{(t)}\right \rangle^2 +  \left\langle\nu- \nu^{(t)}, v^{(t)}\right \rangle^2\right)\nonumber\\
      &\leq   1+ O\left(\epsilon\log(1/\epsilon)\right) +  \frac{1}{ T} \sum_{t=1}^T \left\langle\nu- \nu^{(t)}, v^{(t)}\right \rangle^2\label{eq:c8}\\
    &\leq 1+ O\left(\epsilon\log(1/\epsilon)\right) + \frac{1}{T} \sum_{t=1}^T  \left\| \nu - \nu^{(t)}\right\|_2^2,\label{eq:c7} 
\end{align}
where \eqref{eq:c8} follows from the assumption~\eqref{eqn:subg-goal}. Now we apply assumption \eqref{eq:subg-ss-det} to bound $\left\| \nu - \nu^{(t)}\right\|_2^2$.  Since we may assume $\|M^{(t)}\| \geq 1$ by the early stopping of Line 5, we have
\begin{align*}
    \left\|\nu - \nu^{(t)}\right\|^2 &\leq \frac{2}{(1-\epsilon)^2} \left( {\epsilon \left(\left\|M^{(t)}\right\|-1\right) + C^2\epsilon^2 {\log(1/\epsilon)}} \right) \\
    &= \frac{2\epsilon}{(1-\epsilon)^2} \left\|M^{(t)}\right\| -  \frac{2\epsilon}{(1-\epsilon)^2} + O(\epsilon^2\log(1/\epsilon)).
\end{align*}
Substituting the bound back into~\eqref{eq:c7} completes the proof. 
\end{proof}

Using \autoref{clm:subg}, the KL bound~(\autoref{lem:kl}) and~\eqref{eq:subg-regret2}, we have
\begin{align*}
       \frac{1-\epsilon^2}{T} \sum_{t=1}^T \left\|M^{(t)}\right\|_2
  \leq \frac{2(1+\epsilon)\epsilon}{(1-\epsilon)^2} \frac{1}{T}\sum_{t=1}^T \left\|M^{(t)}\right\| + 1-  \frac{2(1+\epsilon)\epsilon}{(1-\epsilon)^2}  + O(\epsilon \log(1/\epsilon))+    \frac{5\rho}{T}.
\end{align*}
For sufficiently small $\epsilon$, we   rearrange and divide through to obtain
\begin{align*}
   \frac{1}{T} \sum_{t=1}^T \left\|M^{(t)}\right\|_2\  \leq 1+ O(\epsilon\log(1/\epsilon)) + O(\epsilon)  + \frac{O(\rho)}{T}. 
\end{align*}
Setting $T = O(\rho / \epsilon)$ completes the correctness proof.  Finally, the per-iteration cost follows from the run-time of using power method to approximate the largest eigenvector. 
\end{proof}

Using the lemma we can prove our main theorem. 

\begin{theorem}[sub-gaussian robust mean estimation, \cite{diakonikolas2019robust}]\label{thm:subg-mwu}
Let $S=\{x_i\}_{i=1}^n$ be an $\epsilon$-corrupted set of $n$ samples from a sub-gaussian distribution  over $\mathbb R^d$, with mean $\mu$ and identity covariance. Suppose $n \geq \widetilde \Omega (d/\epsilon^2)$.  Given $S$, there is an algorithm that  outputs $\widehat \mu$ such that $\|\widehat \mu - \mu \| \leq  O\left(\epsilon \log \left(1/\epsilon\right)\right)$ with high constant  probability. The algorithm runs in time $\widetilde O\left(nd^2 / \epsilon^3\right)$ 
\end{theorem}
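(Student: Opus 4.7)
}

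The plan is to mirror the structure of the proof of~\autoref{thm:robust}, but invoking~\autoref{alg:filter-subg} through~\autoref{lem:subg-filter} in place of the bounded-covariance filter, and finally extracting a mean estimate via the sub-gaussian spectral signature~\autoref{lem:ss=subg}. The workflow is: (1) certify that the two deterministic hypotheses of~\autoref{lem:subg-filter} are met; (2) bound the width parameter $\rho$; (3) apply~\autoref{lem:subg-filter} to produce weights $w'\in\mathcal{W}_{n,\epsilon}$ with $\|M(w')\|\leq 1+O(\epsilon\log(1/\epsilon))$; (4) return $\widehat\mu = \mu(w')$ and bound $\|\widehat\mu-\mu\|$ using~\autoref{lem:ss=subg}.

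For step (1), standard concentration for sub-gaussian samples (cf.\ Lemma A.18 of~\cite{diakonikolas2017being} and the argument behind~\eqref{eq:subg-cond}) guarantees that with high constant probability the uncorrupted samples $G\subseteq S$ satisfy both $\|\mu-\mu_G\|\leq O(\epsilon\sqrt{\log(1/\epsilon)})$ and $\|\frac{1}{|G|}\sum_{i\in G}(x_i-\mu)(x_i-\mu)^{\top} - I\| \leq O(\epsilon\log(1/\epsilon))$, provided $n = \widetilde\Omega(d/\epsilon^2)$. Taking $w$ uniform on $G$ and $\nu=\mu$ then yields $\|\sum_i w_i(x_i-\mu)(x_i-\mu)^{\top}\|\leq 1+O(\epsilon\log(1/\epsilon))$, which is hypothesis~\eqref{eq:subg-det}. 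Hypothesis~\eqref{eq:subg-ss-det} is precisely~\autoref{lem:ss=subg}, which also holds on the same concentration event.

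For step (2), I would first run a pruning pass to control $\rho$. The key point is that for sub-gaussian data with identity covariance, at least $(1-\epsilon)n$ of the good samples lie within a ball of radius $O(\sqrt{d})$ of $\mu$, so~\autoref{lem:naive-prune} (with $r=O(\sqrt{d})$) produces a set $R\subseteq S$ of squared diameter $\rho = O(d)$; since $\nu^{(t)}$ is a convex combination of the remaining points and $\|v^{(t)}\|=1$, this $\rho$ satisfies $\rho\geq \tau_i^{(t)}$ for every $i,t$. Feeding $R$ into~\autoref{alg:filter-subg} with this $\rho$ and failure probability $\delta/2$, ~\autoref{lem:subg-filter} outputs $w'\in\mathcal{W}_{n,\epsilon}$ with $\|M(w')\|\leq 1+O(\epsilon\log(1/\epsilon))$ in $T = O(\rho/\epsilon)=O(d/\epsilon)$ iterations, each costing $\widetilde O(nd\log(1/\delta)/\epsilon^2)$; this gives the claimed total $\widetilde O(nd^2/\epsilon^3)$. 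Finally, plugging $\lambda=O(\epsilon\log(1/\epsilon))$ into~\autoref{lem:ss=subg} bounds $\|\mu-\mu(w')\|\leq O(\sqrt{\epsilon\lambda}+\epsilon\sqrt{\log(1/\epsilon)}) = O(\epsilon\sqrt{\log(1/\epsilon)})$, which is $O(\epsilon\log(1/\epsilon))$ as stated; triangle inequality with $\|\mu-\mu_G\|$ is already absorbed.

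The main obstacle I anticipate is the width parameter. A naive Markov/trace argument in the style of~\autoref{lem:diameter} only yields $\rho=O(d/\epsilon)$ and inflates the runtime by an extra $1/\epsilon$ factor. To hit the claimed $\widetilde O(nd^2/\epsilon^3)$, the pruning must exploit genuine sub-gaussian tail behavior of the good samples (or, alternatively, the bucketing trick of~\cite{lecue2019robust} referenced in the footnote of~\autoref{thm:robust-gd}) to obtain $\rho=O(d)$. A secondary technical point is ensuring that the output of the filter lives in $\mathcal{W}_{n,\epsilon}$ rather than $\mathcal{W}_{n,3\epsilon}$; this is needed because~\autoref{lem:ss=subg} is stated for $w\in\mathcal{W}_{n,2\epsilon}$, and padding $w'$ on $R$ with zeros on $S\setminus R$ (as in the proof of~\autoref{thm:filter}) must be done without exceeding the $2\epsilon$ budget — which in turn is why I would run the pruning at a fractionally smaller parameter so that the residual slack matches.
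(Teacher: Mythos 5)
Your proposal follows essentially the same route as the paper and is correct in substance: certify hypotheses~(i) and~(ii) of~\autoref{lem:subg-filter}, prune to bound the width, run~\autoref{alg:filter-subg}, and close with~\autoref{lem:ss=subg}. Two small points of divergence from the paper's write-up are worth noting.

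First, on the width bound: you argue that \emph{most} (a $(1-\epsilon)$-fraction of) good samples lie in a ball of radius $O(\sqrt d)$ and feed that to \textsc{Prune}. The paper instead uses the cruder but cleaner fact that \emph{all} good samples satisfy $\|x_i-\mu\|\le r=O(\sqrt{d\log n})$ with high constant probability, so that \textsc{Prune}$(S,r,\delta)$ produces $R\supseteq G$ of squared diameter $\rho=O(d\log n)=\widetilde O(d)$. Your version would force you to re-verify the deterministic conditions on a possibly \emph{strict} subset $R\cap G$ of the good points (since some good points may be pruned), which introduces an extra renormalization step you never carry out. The paper's choice gives $R\supseteq G$ for free, so conditions~\eqref{eq:subg-cond} and~\autoref{lem:ss=subg} transfer to $R$ immediately. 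Either way, $\rho=\widetilde O(d)$ and $T=O(\rho/\epsilon)=\widetilde O(d/\epsilon)$ iterations at cost $\widetilde O(nd/\epsilon^2)$ each gives the claimed $\widetilde O(nd^2/\epsilon^3)$; your worry about a spurious $1/\epsilon$ from a Markov-style argument is correctly resolved by appealing to sub-gaussian concentration.

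Second, you flag the need to keep $w'$ inside $\mathcal{W}_{n,2\epsilon}$ after padding and suggest running \textsc{Prune} at a fractionally smaller parameter to leave slack. This is unnecessary. The paper's arithmetic is direct: \textsc{Prune} removes at most $\epsilon n$ points, so $|R|\ge(1-\epsilon)n$; the filter returns $w'\in\mathcal{W}_{|R|,\epsilon}$ so $w_i'\le\frac{1}{(1-\epsilon)|R|}\le\frac{1}{(1-\epsilon)^2 n}\le\frac{1}{(1-2\epsilon)n}$; hence the zero-padded $w''$ lies in $\mathcal{W}_{n,2\epsilon}$, which is exactly the set for which~\autoref{lem:ss=subg} is stated. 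No parameter shaving is required.
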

\begin{proof}
Let $\delta =0.01$.
We apply \autoref{alg:filter-subg} with a simple pruning procedure as a preprocessing. 
By standard concentration of sub-gaussian random vectors, with high constant probability,  $\|x_i - \mu\| \leq r = O(\sqrt{d \log n})$ for all $i\in G$. Hence, we apply \textsc{Prune}($S, r, \delta$), and by \autoref{lem:naive-prune} it guarantees to terminate in $O(nd)$ time and removes at most $\epsilon n$ (bad) points. 

We feed the remaining (at least) $(1-\epsilon)n$ points $R \supseteq G$ into \autoref{alg:filter-subg} with $\rho = r^2$. Notice that  $\tfrac{1}{(1-\epsilon)(1-\epsilon)} \leq \tfrac{1}{1-2\epsilon}$ for $\epsilon \le 1/2$,  so assumptions (i)-(ii) of \autoref{lem:subg-filter} are satisfied by the claim of \eqref{eq:subg-cond} and \autoref{lem:ss=subg}, respectively. 

It then follows from  \autoref{lem:subg-filter} that  
\autoref{alg:filter-subg} outputs $w' \in \mathcal{W}_{|R|,\epsilon}$
such that \[\left\| \sum_{i\in R} (x_i - \mu(w')) (x_i -\mu(w'))^\top\right\|\leq 1+O\left(\epsilon \log(1/\epsilon) \right) ,\]
where $\mu(w') = \sum_{i\in R} w_i' x_i$. 
Let  $w_i''= w_i'$ if $i\in R$ and $w_i'' = 0$ otherwise.
We obtain $w''\in \mathcal W_{n,2\epsilon}$ such that $\|M(w'')\| \leq 1+ O\left(\epsilon \log \left(1/\epsilon\right)\right)$. Applying the spectral signature (\autoref{lem:ss=subg}) proves that $\mu(w'')$ attains the desired estimation error. Moreover, the run-time simply follows from \autoref{lem:subg-filter}.
\end{proof}
\section{Sample reweighing via Matrix Multiplicative Update}
\label{sec:mmwu}

We now show that the spectral sample reweighing problem (\autoref{def:ssr}) can be solved in near linear time via a matrix multiplicative update scheme from the recent work of~\cite{dong2019quantum}, analyzed there for the robust mean estimation setting.  Our analysis  will closely resemble the arguments therein.

\begin{theorem}\label{thm:mmw}
Let  $\{x_i\}_{i=1}^n$ be $n$ points in $\mathbb R^d$.
Suppose there exists $\nu\in \mathbb R^d$ and  $w^* \in \mathbb{R}_{n}$ such that $|w^*| =1-\epsilon$, $\|w^*\|_\infty \leq 1/n$ and $
    \sum_{i=1}^n w^*_{i}\left(x_{i}- \nu\right)\left(x_{i}- \nu\right)^{\top}\preceq \lambda I$
for some   $\lambda  > 0$ and a sufficiently small $\epsilon$.   
Then, given $\{x_i\}_{i=1}^n, \lambda$, the squared diameter $\rho$ of the points and a failure rate $\delta$, there is a matrix multiplicative weights-based algorithm (Algorithm~\ref{alg:mmwu}) that, with probability at least $1-\delta$, finds   $w\in \mathcal W_{n,\epsilon}$ and   $\nu' \in \mathbb R^d$ such that $$
    \sum_{i=1}^n w_{i}\left(x_{i}- \nu'\right)\left(x_{i}- \nu'\right)^{\top}  \preceq O(\lambda)I.$$ 
    Further, the algorithm terminates in $O(\log (\rho/\lambda))$ iterations, where $\rho$ is the squared diameter of the input points $\{x_i\}_{i=1}^n$, and each iteration can be implemented in $\widetilde O(nd\log(1/\delta))$ time.
\end{theorem}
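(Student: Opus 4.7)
The plan is to adapt the filter-style analysis of \autoref{lem:filter} by replacing the single top-eigenvector direction $v^{(t)}$ with a density matrix iterate $U^{(t)}$ (PSD of unit trace) maintained via matrix multiplicative weights. Concretely, I would run an algorithm that, in parallel, updates weights $w^{(t)} \in \mathcal{W}_{n,\epsilon}$ and density matrices $U^{(t)} \in \mathbb{R}^{d \times d}$, initialized as $w^{(1)} = \I_n / n$ and $U^{(1)} = I / d$. At iteration $t$, compute $\nu^{(t)} = \sum_i w_i^{(t)} x_i$ and $M^{(t)} = \sum_i w_i^{(t)}(x_i - \nu^{(t)})(x_i - \nu^{(t)})^\top$, and use the ``soft'' score $\tau_i^{(t)} = (x_i - \nu^{(t)})^\top U^{(t)} (x_i - \nu^{(t)})$ as the MMW analog of the rank-one score $\langle v^{(t)}, x_i - \nu^{(t)}\rangle^2$ appearing in \autoref{alg:filter}. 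Update $w_i^{(t+1)} \leftarrow w_i^{(t)}(1 - \eta \tau_i^{(t)}/\rho)$, KL-project onto $\mathcal{W}_{n,\epsilon}$, and update $U^{(t+1)} \propto \exp\bigl(\eta' \sum_{s \leq t} M^{(s)}/\rho\bigr)$ normalized to unit trace.

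The correctness analysis mirrors the proof of \autoref{lem:filter} with one key swap: in place of the inequality $\sum_i w_i^{(t)} \tau_i^{(t)} \geq \tfrac{7}{8} \|M^{(t)}\|$ that followed from an approximate top eigenvector, we invoke the matrix multiplicative weights regret bound, which guarantees that for any density matrix $M^*$ one has $\sum_t \langle U^{(t)}, M^{(t)}\rangle \geq (1-\eta') \sum_t \langle M^*, M^{(t)}\rangle - \rho \log d / \eta'$; taking $M^*$ to be the projector onto the top eigenvector of any $M^{(t^*)}$ recovers the required lower bound in amortized form. Combining this with the standard MWU regret on the weight side (\autoref{lem:regret}) and the spectral signature bound (\autoref{lem:ssw}, taking care of the $2\epsilon$-vs-$\epsilon$ gap between the promise and output weight classes), the same chain of inequalities as in the proof of \autoref{lem:filter} yields $\min_t \|M^{(t)}\| \leq O(\lambda)$. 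The crucial improvement is that the $\log d$ regret term (rather than the $\Theta(\epsilon)$ KL-to-uniform term in the scalar MWU case) is logarithmic in the dimension, which lets us set $\eta, \eta'$ so that the iteration count drops to $T = O(\log(\rho/\lambda))$, as opposed to the $O(d)$ iterations of the filter.

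The main technical obstacle is implementing each iteration in near-linear time $\widetilde O(nd\log(1/\delta))$. Forming $U^{(t)}$ explicitly costs $\widetilde O(d^3)$, which is prohibitive. Following the quantum entropy scoring scheme of \cite{dong2019quantum}, I would compute the scores implicitly by writing $\tau_i^{(t)} = \|\exp(A^{(t)}/2)(x_i - \nu^{(t)})\|^2 / \Tr(\exp(A^{(t)}))$ for $A^{(t)} = \eta' \sum_{s \leq t} M^{(s)}/\rho$, approximate $\exp(\cdot/2)$ by a truncated polynomial of degree $\widetilde O(\|A^{(t)}\|)$, and reduce all scores to $O(\log(nT/\delta))$ Johnson--Lindenstrauss sketches of $\exp(A^{(t)}/2)$ applied to the data. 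Each such application reduces to a sequence of matrix-vector products with $A^{(t)}$, which is a weighted sum of rank-one matrices coming from previously seen data points and can therefore be evaluated in $\widetilde O(nd)$ time; the trace normalization is estimated by a Hutchinson-style estimator against the same JL directions. The hard part will be to carry the approximation errors through the MMW regret inequality: one must show that the accumulated error from polynomial truncation and JL projection does not spoil the regret bound nor the $O(\log(\rho/\lambda))$ iteration count, which forces the polynomial degree and sketch dimension to be chosen logarithmically in $n$, $d$, $T$, and $1/\delta$, and the failure events to be union-bounded over all $T$ iterations.
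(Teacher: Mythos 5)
There is a genuine gap in your proposed analysis: keeping the soft multiplicative-weights update on $w$ (i.e., $w_i^{(t+1)} \leftarrow w_i^{(t)}(1-\eta\tau_i^{(t)}/\rho)$ followed by a KL projection onto $\mathcal{W}_{n,\epsilon}$) cannot yield $T = O(\log(\rho/\lambda))$ iterations, regardless of how the density matrix $U^{(t)}$ is maintained. The bottleneck is the MWU regret term on the \emph{weight} side, not the eigenvector side. Tracing through your own chain of inequalities: you still need to bound $\frac{1}{T}\sum_t\langle w^{(t)},\tau^{(t)}\rangle$ by $(1+\eta)\frac{1}{T}\sum_t\langle w,\tau^{(t)}\rangle + \frac{\rho\cdot\text{KL}(w\|w^{(1)})}{T\eta}$, and the additive term $\frac{\rho\epsilon}{T\eta}$ forces $T = \Omega(\rho\epsilon/\lambda)$. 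After pruning $\rho = O(d\lambda/\epsilon)$, so $T = \Omega(d)$ --- exactly the same as \autoref{alg:filter}. The $\log d$ regret term from MMW replaces only the $\frac{7}{8}\|M^{(t)}\|$ lower bound, but that bound was never the bottleneck; swapping it for an amortized MMW guarantee cannot help because the width-controlled KL term on the $w$-update is still present in full.

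What the paper does differently --- and what is actually essential to reach $O(\log(\rho/\lambda))$ iterations --- is to abandon the multiplicative-weights update on $w$ entirely. Algorithm~\ref{alg:mmwu} instead calls the deterministic $\textsc{1DFilter}$ subroutine (\autoref{lem:1dfilter}), which is not an MWU step and has a qualitatively different guarantee: it produces $w' \le w$ (no renormalization or projection) such that $\sum_i w_i'\tau_i \le \frac14\sum_i w_i\tau_i$, i.e.\ a \emph{multiplicative} decrease in $\langle M(w),U\rangle$ per call, while always removing more mass from the bad points than the good points. Consequently the iterates live in the set $\mathcal{C}_{n,\epsilon}$ of \emph{mostly-good} weight vectors (\autoref{def:mgod}), not in $\mathcal{W}_{n,\epsilon}$, which requires the modified spectral signature \autoref{lem:ssc} and a renormalization $w^{(s)}/\|w^{(s)}\|_1$ only at output time. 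The MMW regret bound (which you correctly state) is then applied within each epoch to conclude that after $O(\log d)$ inner iterations the top eigenvalue of the weighted covariance has dropped by a constant factor (Lemma 3.4 of \cite{dong2019quantum} gives $M(w_{t+1}) \preceq M(w_t)$), and the geometric decrease across $O(\log(\rho/\lambda))$ epochs finishes the argument. Your implementation sketch (polynomial approximation of $\exp(\cdot/2)$, JL sketching, Hutchinson trace estimation) matches \cite{dong2019quantum} and is fine, but the core algorithmic component you need to import is $\textsc{1DFilter}$ and the mostly-good-weights invariant, not a parallel pair of regret bounds.
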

\begin{remark}
In the following, we will consider an idealized version of the algorithm and omit the detail of implementing the numerical linear algebra primitives in $\widetilde O(nd\log(1/\delta))$ time each iteration. The exact details can be found in~\cite{dong2019quantum}.
\end{remark}

The algorithm is based on the matrix multiplicative weights update. For a sequence of PSD matrices  $M_1\succeq M_2 \succeq \cdots \succeq M_{t-1}$, we will apply  the matrix multiplicative weight (MMW) update,  given by
\begin{align} \label{eq:mmw}
   \textsf{MMW}(  M_{0},M_{1},\cdots, M_{t-1} ) =\exp \left(\frac{1}{\|M_0\|_{2}} \sum_{k=1}^{t-1} M_k\right) / \operatorname{tr} \exp \left(\frac{1}{\|M_0\|_{2}} \sum_{k=1}^{t-1} M_k\right).
\end{align}  
For technical reasons, we will not maintain a set of weights that is a probability distribution throughout. Instead, we will initiate from uniform weights and monotonically downweight each point. The key invariant we will maintain is the following, which is a weighted extension of the notion of ``mostly-good weights'' from \cite{dong2019quantum}.
\begin{definition}[mostly-good weight]\label{def:mgod}
 Suppose that $w^*\in [0,1]^n$ satisfies  $\|w^*\|_\infty \leq 1/n$.
 The set of mostly-good weight vectors (with respect to $w^*$) is  \[ \mathcal{C}(w^*)=\left\{ w \in \mathbb R^n: 0 \leq w_i \leq \frac 1 n  \quad \text{and} \quad  \sum_{i=1}^n w_i^* \left(\frac{1}{n} - w_i\right) \leq \sum_{i=1}^n \left(\frac{1}{n} - w_i^*\right) \left(\frac{1}{n} - w_i\right)\right\}\]
\end{definition}
\begin{lemma}\label{lem:big}
Suppose that $w^*\in [0,1]^n$ satisfies  $\|w^*\|_\infty \leq 1/n$ and $|w^*| =1-\epsilon$. Then for any mostly-good weight $w \in \mathcal{C}(w^*)$ (with respect to $w^*$), we have that $|w|\geq 1-2\epsilon$.
\end{lemma}
\begin{proof}
By rearranging the condition of mostly-good weight, we get that 
\begin{align*}
    \frac{1}{n}\sum_{i=1}^n \frac{1}{n}-w_i\geq 2\sum_{i=1}^n w_i^* \left(\frac{1}{n} - w_i\right).
\end{align*}
Since $\sum_{i=1}^n w_i^* = 1-\epsilon$, it follows that 
\begin{align*}
    \frac{1}{n} - \frac{1}{n}\sum_{i=1}^n w_i \geq \frac{2-2\epsilon}{n} - 2\sum_{i=1}^n w_i^* w_i.
\end{align*}
By assumption, $w_i^*\leq \frac{1}{n}$, so
\begin{align*}
    \frac{1}{n} - \frac{1}{n}\sum_{i=1}^n w_i \geq \frac{2-2\epsilon}{n} - \frac{2}{n}\sum_{i=1}^n  w_i.
\end{align*}
Multiplying $n$ on both sides and rearranging we get $\sum_{i=1}^n w_i \geq 1-2\epsilon$.
\end{proof}
A crucial subroutine we use is a deterministic down-weighting scheme,  from~\cite{dong2019quantum}, that maintains the mostly-good property of the input weights. 
\begin{lemma}[1D Filter~\cite{dong2019quantum}]\label{lem:1dfilter}
 Let $\eta \in (0, 1/2)$, let $b \geq 2 \eta$, and let $w_1, \ldots, w_m$ and $\tau_1, \ldots, \tau_m$ be non-negative numbers so that $\sum_{i = 1}^m w_i \leq 1$.
Let $\tau_{\max} = \max_{i \in [m]} \tau_i$.
Suppose there exists $w^*$ such that $\|w\|_\infty \leq \frac{1}{n}$
\[
\sum_{i=1}^n w_i^* w_i \tau_i \leq \eta \sigma\; \mbox{, where } \; \sigma = \frac{1}{n} \sum_{i = 1}^n w_i \tau_i \; .
\]
Then $\textsc{1DFilter} (w, \tau, b)$ runs in time $O((1 + \log \frac{\tau_{\max}}{b \sigma}) m)$ and outputs $0 \leq w' \leq w$ so that
\begin{itemize}
\item  $\sum w_i^* (w_i - w_i')  \le   \sum (1/n - w_i^*) (w_i - w_i')  
$, and
\item 
$\frac{1}{n}\sum_{i = 1}^n  w'_i \tau_i \leq b \sigma.$
\end{itemize}
 \end{lemma}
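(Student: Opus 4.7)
The plan is to instantiate $\textsc{1DFilter}(w,\tau,b)$ as a doubling-threshold downweighting scheme: initialize $T = \tau_{\max}$, and at each round form candidate weights $w'_i = w_i \cdot \max(0, 1 - \tau_i/T)$ under the linear downweighting rule. If the resulting $\sum_i w'_i \tau_i \le b\sigma$, return $w'$; otherwise halve $T$ and iterate. Each round is a single $O(m)$ pass (to compute the candidate weights and their weighted sum), and because $T$ halves starting from $\tau_{\max}$ until the induced sum collapses below $b\sigma$ (which must occur once $T$ is on the order of $b\sigma/\sum_i w_i$), the total number of rounds is $O(1+\log(\tau_{\max}/(b\sigma)))$. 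This yields the claimed runtime, and the weighted-sum conclusion $\sum_i w'_i\tau_i \le b\sigma$ is immediate from the stopping criterion.

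For the weight-removal comparison, the upper bound on weight removed from $G$ is essentially free: each $i$ contributes removed weight $w_i\min(1,\tau_i/T) \le w_i\tau_i/T$, so
\[
\sum_{i \in G}(w_i - w'_i) \;\le\; \frac{1}{T}\sum_{i \in G} w_i\tau_i \;\le\; \frac{\eta\sigma}{T}.
\]
For a matching lower bound on $\sum_{i \in B}(w_i - w'_i)$, observe that at the stopping threshold the total removed score is at least $(1-b)\sigma$. Since $G$ can contribute at most $\eta\sigma$ to this, the score removed from $B$ is at least $(1 - b - \eta)\sigma$, which exceeds $\eta\sigma$ with a constant-factor slack under $b \ge 2\eta$. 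The linear rule gives the pointwise identity (score removed)$_i = \tau_i \cdot$ (weight removed)$_i$, so
\[
\sum_{i \in B}(w_i - w'_i) \;=\; \sum_{i \in B} \frac{\text{score removed from }i}{\tau_i}.
\]

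The main obstacle is to make this sum at least $\eta\sigma/T$: a crude conversion dividing by $\tau_{\max}$ is not tight enough whenever $\tau_{\max}$ is driven by a single extreme point. The key leverage is the doubling structure. Because the algorithm did not stop at the previous threshold $T' = 2T$, the removed score at that stage was strictly less than $(1-b)\sigma$, which caps the removed score attributable to the ``extreme'' points with $\tau_i > 2T$ (those contribute identically at $T'$ and $T$ under the linear rule). Consequently the \emph{incremental} removed score between rounds $T'$ and $T$---a constant fraction of $(1-b-\eta)\sigma$---must be borne by points with $\tau_i \le 2T$; for those, the per-unit-score weight cost $1/\tau_i$ is at least $1/(2T)$, yielding $\sum_{i \in B}(w_i - w'_i) \gtrsim (1-b-\eta)\sigma/T$. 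Combining this with the $G$-bound closes the comparison, with the constant slack provided by the hypothesis $b \ge 2\eta$ absorbing the implicit constants throughout.
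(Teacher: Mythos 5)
The paper does not supply its own proof of this lemma; it cites \cite{dong2019quantum} directly, so there is no in-paper argument to compare against. Evaluating your proposal on its own terms: the doubling-threshold algorithm and its runtime analysis are fine (once $T \le b\sigma$ the candidate weighted sum is below $b\sigma$ because $\sum_i w'_i\tau_i < T\sum_i w_i \le T$, so the number of halvings is $O(1 + \log(\tau_{\max}/(b\sigma)))$), and the $G$-side bound $\sum_{i\in G}(w_i - w'_i) = \sum_{i\in G} w_i\min(1,\tau_i/T) \le \sum_{i\in G} w_i\tau_i/T \le \eta\sigma/T$ is correct. The gap is in the $B$-side lower bound.

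You assert that ``the incremental removed score between rounds $T'$ and $T$'' is ``a constant fraction of $(1-b-\eta)\sigma$,'' but nothing in the argument supports this: the only facts available are $R(T) \ge (1-b)\sigma$ and $R(2T) < (1-b)\sigma$, so $R(T)-R(2T)$ is merely positive and can be arbitrarily small (the algorithm can cross the stopping threshold by an $\varepsilon$). Moreover, the step ``the score removed from $B$ is at least $(1-b-\eta)\sigma$, which exceeds $\eta\sigma$'' already fails in much of the admissible parameter range: for $b\ge 2\eta$ and $\eta > 1/4$ one can have $1-b-\eta \le \eta$, and for $\eta$ near $1/2$ (with $b$ near $1$) the quantity $1-b-\eta$ is negative, so the starting bound is vacuous. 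So the extreme-point accounting does not, as written, deliver the needed $\sum_{i\in B}(w_i-w'_i) \gtrsim \eta\sigma/T$.

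The fix you almost have is slightly different and does work, at the cost of a marginally stronger hypothesis. From the failed round at $T'=2T$, points with $\tau_i\ge 2T$ contribute their full scores $w_i\tau_i$ to $R(2T)$, so $\sum_{\tau_i\ge 2T} w_i\tau_i \le R(2T) < (1-b)\sigma$, hence $\sum_{\tau_i< 2T} w_i\tau_i \ge b\sigma$ and, subtracting $G$'s share, $\sum_{i\in B,\,\tau_i<2T} w_i\tau_i \ge (b-\eta)\sigma$. For these indices $\min(1,\tau_i/T)\ge \tau_i/(2T)$, so $\sum_{i\in B}(w_i-w'_i) \ge (b-\eta)\sigma/(2T)$. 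Comparing with $\eta\sigma/T$ closes the argument when $b\ge 3\eta$, not $b\ge 2\eta$. (The case $T=\tau_{\max}$ has no prior round but is easy: there $\min(1,\tau_i/T)=\tau_i/\tau_{\max}$ for all $i$ and the comparison reduces to $\eta\sigma \le (1-\eta)\sigma$.) So your scheme and high-level plan are reasonable, but the key inequality is not derived as claimed; a correct derivation via the doubling structure is available along the lines above, and either forces a worse constant than the lemma states or requires a sharper accounting than what you wrote.
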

The algorithm is formally described in Algorithm~\ref{alg:mmwu}. 
Throughout   let $M^{(s)} = M(w^{(s)})$ and $M^{(s)}_t= M(w^{(s)}_t)$, where  $M(w) = \sum_{i=1}^n w_i (x_i - \mu(w))(x_i-\mu(w))^\top$. The procedure runs by epochs, where each epoch $s$   reduces the largest eigenvalue of $M^{(s)}$ by a constant factor. We will show  that the inner loop achieves the reduction within $O(\log d)$ iterations while maintaining the invariant that the weights are mostly-good (\autoref{def:mgod}). 
\begin{algorithm}[ht!] \label{alg:mmwu}
    \caption{Matrix multiplicative update for spectral sample reweighing (\autoref{def:ssr})}
    \KwIn{A set of   points $x_1, \ldots, x_n$, $\lambda,\rho$ and a failure rate $\delta$}
\KwOut{A point $\nu'\in \mathbb R^d$  and weights $w'\in \mathcal{W}_{n,\epsilon}$ that satisfy~\eqref{eq:goal} up to a constant factor.}

\vspace{10pt}
Let  $w^{(0)} = \frac{1}{n} (1,1,\cdots, 1)$. \\
\For{$s$ from $0$ to $O(\log \rho)$}{
    Compute $\lambda^{(s)}  =\|M^{(s)}\|$. \\
    \If{$\lambda^{(s)} \leq 300\lambda$}{
        \Return $w^{(s)}/\|w^{(s)}\|_1, \mu(w^{(s)})$.
    }
    \For{$t$ from $0$ to $O(\log d)$}{
        Compute $\lambda^{(s)}_t  =\|M_t^{(s)}\|$ and terminate epoch \textbf{if} $\lambda^{(s)}_t \leq \frac 2 3 \lambda^{(s)}_0$.\\
        Compute  $U_t^{(s)} =\textsf{MMW}(M_1^{(s)},M_2^{(s)},\cdots, M_{t-1}^{(s)}) $.\\
        Compute
        \begin{equation}
                    \tau_{t,i}^{(s)} = \left(x_{i}-\mu\left(w_{t}^{(s)}\right)\right)^{\top} U_{t}^{(s)}\left(x_{i}-\mu\left(w_{t}^{(s)}\right)\right)
        \end{equation}\\
        Let $w_{t+1}^{(s)}=w_{t}^{(s)}$ \textbf{if} $\sum_i w_{t,i}^{(s)} \tau_{t,i}^{(s)}  \leq \frac 1 4 \lambda_1^{(s)}$; \textbf{otherwise} $w_{t+1}^{(s)} = \textsc{1DFilter}(w_t^{(s)}, \tau_{t}^{(s)})$.
    }
    Let $w^{(s+1)}=w_{t}^{(s)}$.
}
\end{algorithm}
 Similar to our MWU analysis, our argument relies on a spectral signature lemma. 
\begin{lemma}[spectral signature for mostly-good weights] \label{lem:ssc}
Let  $\{x_i\}_{i=1}^n$ be $n$ points in $\mathbb R^d$.
Suppose there exists $\nu\in \mathbb R^d$ and  $w^* \in \mathbb{R}_{n}$ such that $|w^*| =1-\epsilon$ for sufficiently small $\epsilon$, $\|w^*\|_\infty \leq 1/n$ and $$
    \sum_{i=1}^n w^*_{i}\left(x_{i}- \nu\right)\left(x_{i}- \nu\right)^{\top}\preceq \lambda I$$
for some   $\lambda  > 0$. Then  for any $w\in \mathcal{C}(w^*)$, 
    \begin{equation}\label{eq:ss3}
        \|\nu- \nu(w)\| \le  \frac{1}{1-\sqrt{2\epsilon}} \left(3\sqrt{\lambda}+    2\sqrt{\epsilon\|M(w)\|}\right),
    \end{equation}
    where $\nu(w) = \sum_{i}w_i x_i$ and $M(w) = \sum_i w_i (x_i - \nu(w))(x_i - \nu(w'))^\top$.
\end{lemma}
\begin{proof}
This directly follows from \autoref{lem:ssw} and scaling.
\end{proof}
Using this, we establish a key invariant of the inner loop of the algorithm.
 \begin{lemma}\label{lem:sscc}
Let $w \in\mathcal{C}(w^*)$ be such that $\beta= \|M(w)\|_2 \geq 300\lambda$ and $U$ be a density matrix. Let $\tau_i= \left(x_{i}-\mu\left(w\right)\right)^{\top} U\left(x_{i}-\mu\left(w\right)\right)$. If $w' = \textsc{1DFilter}(w, \tau, 1/4)$, then we have $w'\in \mathcal{C}(w^*)$ and $\left\langle M\left(w^{\prime}\right), U\right\rangle \leq \frac 1 4\langle M(w), {U}\rangle$.
\end{lemma}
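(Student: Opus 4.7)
The plan is to invoke the 1D filtering subroutine (\autoref{lem:1dfilter}) on the score vector $\tau$; its two guarantees translate directly into the two claims here. Writing $\sigma = \sum_i w_i \tau_i = \langle M(w), U \rangle$, I read the hypothesis as $\sigma \geq \tfrac14 \beta$ (the statement ``$\beta \geq \tfrac14\beta$'' is evidently a typo), which together with $\beta \geq 300\lambda$ means $\sigma$ is large relative to $\lambda$. To apply \autoref{lem:1dfilter} with, say, $b = 1/4$, I must (i) check that $\sum_{i\in G} w_i \tau_i \leq \eta \sigma$ for some $\eta < b/2$, and then (ii) convert the post-filter bound on $\sum_i w'_i \tau_i$ into a bound on $\langle M(w'), U\rangle$.

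For step (i), I combine the density-matrix property $\mathrm{Tr}(U)=1$ with the promise $A_G \preceq \lambda I$ (from \autoref{lem:good}). Since $w_i \leq 1/n$ on $\mathcal C_{n,\epsilon}$,
\[
\sum_{i \in G} w_i (x_i - \nu)^\top U (x_i - \nu) \leq \frac{1}{n}\sum_{i \in G} (x_i - \nu)^\top U(x_i - \nu) = (1-\epsilon)\langle A_G, U \rangle \leq \lambda.
\]
Because $\tau_i$ is centered at $\mu(w)$ rather than $\nu$, I pay a second term bounded by $\|\mu(w) - \nu\|^2$ (since $\|U\| \leq \mathrm{Tr}(U) = 1$). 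The spectral-signature \autoref{lem:ssc} applied with $w$ gives $\|\mu(w) - \nu\|^2 = O(\lambda + \epsilon\beta)$, and a factor $\sum_{i\in G} w_i \leq 1$ on the quadratic term yields $\sum_{i\in G} w_i \tau_i \leq O(\lambda + \epsilon\beta)$. Using $\sigma \geq \beta/4$ and $\beta \geq 300\lambda$, this is at most $\eta\sigma$ for any prescribed small $\eta$, once $\epsilon$ is chosen small enough.

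With the hypothesis verified, apply \autoref{lem:1dfilter} with $b = 1/4$. The first conclusion, that more weight is removed from $B$ than from $G$, is precisely the invariant defining $\mathcal C_{n,\epsilon}$, so $w' \in \mathcal C_{n,\epsilon}$. The second conclusion is $\sum_i w'_i \tau_i \leq b\sigma = \tfrac14 \langle M(w), U\rangle$. To close the loop I note that $\mu(w')$ is (to leading order) the minimizer of $v \mapsto \sum_i w'_i (x_i - v)^\top U (x_i - v)$, so the identity
\[
\sum_i w'_i (x_i - \mu(w))(x_i - \mu(w))^\top \;\succeq\; M(w')
\]
holds when $\|w'\|_1 = 1$ and continues to hold up to negligible corrections of order $(1 - \|w'\|_1)^2 \|\mu(w')\|^2$ in the general case. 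Pairing with $U$ gives $\langle M(w'), U \rangle \leq \sum_i w'_i \tau_i \leq \tfrac14 \langle M(w), U\rangle$, as desired.

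The step I expect to be the main obstacle is keeping the spectral-signature slack $\|\mu(w) - \nu\|^2$ truly negligible compared with $\sigma \asymp \beta$: this is where both the ``large gap'' hypothesis $\beta \geq 300\lambda$ and the smallness of $\epsilon$ are used crucially. Everything else is bookkeeping around \autoref{lem:1dfilter} and the identity relating centered second moments at different centers.
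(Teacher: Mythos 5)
Your proposal is correct and follows essentially the same route as the paper: the key ingredients in both are \autoref{lem:good} (to extract a set $G$ with $A_G \preceq \lambda I$), the spectral signature \autoref{lem:ssc} (to control $\|\mu(w)-\nu\|$), and the guarantees of \textsc{1DFilter} (\autoref{lem:1dfilter}) with $b=1/4$ and a small $\eta$. You also correctly identify the typo in the lemma statement (the hypothesis should read $\sigma \geq \tfrac14\beta$, i.e.\ $\langle M(w),U\rangle \geq \tfrac14\|M(w)\|_2$).

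One bookkeeping step deserves more care than you give it. When you recenter $\tau_i$ from $\mu(w)$ to $\nu$ and claim you ``pay a second term bounded by $\|\mu(w)-\nu\|^2$'', you are dropping a cross term: $\tau_i = (x_i-\nu)^\top U(x_i-\nu) - 2(x_i-\nu)^\top U(\mu(w)-\nu) + (\mu(w)-\nu)^\top U(\mu(w)-\nu)$, and $\sum_{i\in G}w_i(x_i-\nu)$ is not zero in general, so the middle term does not vanish. You can absorb it with AM--GM at the cost of a factor $2$ (which is harmless), but as written the step is incomplete. The paper avoids this by first majorizing $\sum_{i\in G}w_i(\cdot)$ by the uniform weighting $\widetilde w_i = \tfrac1n \mathbf 1\{i\in G\}$, so that the recentering can be done around the $\widetilde w$-mean where the cross term genuinely cancels, and then applies the triangle inequality and \autoref{lem:ssc}. (That said, the paper's ``identity'' for unnormalized $\widetilde w$ is itself only correct up to an $O(\epsilon)$ discrepancy, since $\|\widetilde w\|_1 = 1-\epsilon \neq 1$, so both proofs silently sweep an $O(\epsilon)$ remainder into the constants; your remark about the $(1-\|w'\|_1)^2$ correction in the concluding step is the more honest treatment.) Once you patch the cross term by AM--GM, the rest of your argument matches the paper's and the constants work out for sufficiently small $\epsilon$.
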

\begin{proof} Let $\mu( w^* ) = \sum_i  w_i^* x_i$.
Then for any unit vector $u$, we have that by Jensen's inequality
\begin{align*}
     \langle\mu( w^*) - \nu,u\rangle^2 \leq   \left\langle   \sum_{i=1}^n w_i^* x_i-\nu, u\right\rangle^2 \leq   \sum_{i=1}^n w_i^* \langle x_i-\nu , u \rangle^2 \leq \lambda.
    \end{align*}
    Thus, $\|\mu({w^*})-\nu\|_{2}^{2} \leq \lambda$.
Expanding the definition of $\tau_i$, we get
\begin{align}
\sum_{i=1}^n w^*_i w_{i} \tau_{i} &=\left\langle\sum_{i =1}^n w_i^*w_{i}\left(x_{i}-\mu(w)\right)\left(x_{i}-\mu(w)\right)^{\top}, U\right\rangle \nonumber\\
&\le \frac{1}{n}\left\langle\sum_{i=1}^{n} w^*_{i}\left(x_{i}-\mu(w^*)\right)\left(x_{i}-\mu(w^*)\right)^{\top}, U\right\rangle\nonumber\\&\quad \quad\quad +\frac{1}{n}\|w^*\|_1 \cdot(\mu(w^*)-\mu(w))^{\top} U(\mu(w^*)-\mu(w)) \nonumber\\
&\leq \frac{1}{n} \langle M(w^*), U\rangle+\frac{1}{n}(1-\epsilon)\|\mu(w^*)-\mu(w)\|_{2}^{2} \nonumber \\
&\leq \frac{1}{n}\lambda + \frac2 n\|\mu(w^*)-\nu\|_{2}^{2}+\frac{6}{n}\|\mu(w)-\nu\|_{2}^{2}\label{eq:hl2}\\\
&\leq \frac{1}{n}\lambda + \left(\frac{4}{n}\lambda + \frac{2}{n}\epsilon \|M(w)\|)\right) \label{eq:hl1}\\
&\leq \frac 1 {30n} \|M(w)\| = \frac{1}{30n}\sum_{i=1}^n  w_i \tau_i,\label{eq:hl3}
\end{align}
where \eqref{eq:hl2} follows  from the spectral centrality condition and triangle inequality,
\eqref{eq:hl1} follows from~\autoref{lem:ssc},
and~\eqref{eq:hl3} uses our assumption that $\|M(w)\| \geq 300\lambda$, the definition of $\tau_i$ and $\epsilon$ is sufficiently small. This allows us to  apply the guarantee of the 1D filter procedure (\autoref{lem:1dfilter}) and get that

\begin{align*}
    \left\langle M\left(w^{\prime}\right), U\right\rangle =\left\langle \sum_{i=1}^{n} w_{i}^{\prime}\left(X_{i}-\mu(w')\right)\left(X_{i}-\mu(w')\right), U \right\rangle=\sum_{i=1}^{n} w_{i}^{\prime} \tau_{i} \leq \frac{1}{4}\sum_{i=1}^{n} w_{i} \tau_{i} = \frac{1}{4} \langle M(w), U\rangle.
\end{align*}
Furthermore, $w' \in \mathcal{C}(w^*)$. This completes the proof. 
\end{proof}

We are now ready to prove the main theorem of this section. 
\begin{proof}[Proof of~\autoref{thm:mmw}]
Consider a fixed epoch and drop the super script for simplicity of notation. It is not hard to observe that $M(w_{t+1}) \preceq M(w_{t})$ (see Lemma 3.4~\cite{dong2019quantum}). Let $\alpha = 1/\|M(w_0)\|$. A regret bound for matrix multiplicative weights~\cite{allen2015spectral} implies that 
\begin{align*}
\left\|\sum_{t=0}^{T-1} M\left(w_{t+1}\right)\right\|_{2} & \leq \sum_{t=0}^{T-1}\left\langle M\left(w_{t+1}\right), U_{t}\right\rangle+\alpha \sum_{t=0}^{T-1}\left\langle U_{t}, M\left(w_{t+1}\right)\right\rangle\left\|M\left(w_{t+1}\right)\right\|_{2}+\frac{\log d}{\alpha} \\
& \leq 2 \sum_{t=0}^{T-1}\left\langle M\left(w_{t+1}\right), U_{t}\right\rangle+\left\|M\left(w_{0}\right)\right\|_{2} \cdot \log d
\end{align*}
Now   by definition of Line 10,  we have $\left\langle M\left(w_{t+1}\right), U_{t}\right\rangle \leq \frac{1}{4}\left\|M\left(w_{0}\right)\right\|_{2}$. Hence,
\begin{align*}
  T\left\|M\left(w_{T}\right)\right\|_{2}\leq   \left\|\sum_{t=0}^{T-1} M\left(w_{t}\right)\right\|_{2} \leq T \cdot \frac{1}{2}\left\|M\left(w_{0}\right)\right\|_{2}+\left\|M\left(w_{0}\right)\right\|_{2} \cdot \log d.
\end{align*}
Setting $T \gg \log d$   shows that the inner loop terminates in $O(\log d)$ iterations and reduces the largest eigenvalue of the covariance by, say, $4/5$. 

Finally, to bound the number of epochs, we simply note that $\|M^{(0)}\| \leq \rho$. Therefore, $O(\log (\rho/\lambda))$ epochs suffice drive the largest eigenvalue of $\|M^{(s)}\|$ down to $O(\lambda)$, since it is reduced geometrically each epoch.
\end{proof}
\section{Sample reweighing via Online Gradient Descent}
\label{sec:gd}

\subsection{Regret analysis of gradient descent}
We now consider a gradient updated-based algorithm for solving the spectral sample reweighing problem (\autoref{def:ssr}). The analysis will be through the classic regret guarantee of online gradient descent for convex optimization~\cite{zinkevich2003online}. Though the resulting run-time is higher than the MWU scheme we analyzed in Section~\ref{sec:mwu}, it nonetheless betters the recent work of~\cite{cheng2020high}, where essentially the same gradient descent-based algorithm is studied.

We will leverage the following regret guarantee of online gradient descent; the definition of the algorithm in the general setting can be found in standard text~\cite{hazan2016introduction}. 
\begin{lemma}[Theorem 3.1~\cite{hazan2016introduction}, originally due to~\cite{zinkevich2003online}]\label{lem:gd-regret}
Let $f_t: \mathcal{K} \rightarrow \mathbb R$ be the convex cost function revealed at iteration $t$, where $\mathcal{K}$ is a convex feasible set.  Suppose $f_t$ is $L$-Lipschitz (in $\ell_2$ norm) and $\|x_0 -x^*\|_2 \leq R$ for some $x^* \in \argmin_{x \in\mathcal K} \sum_t f_t(x)$. The online gradient descent algorithm with step sizes $ \eta_{t}=\frac{R}{L \sqrt{t}}$ achieves 
\begin{equation}\label{eq:gd-regret}
    \sum_{t=1}^T f_t(x_t) - \min_{x \in\mathcal K} \sum_{t=1}^T f_t(x) \leq \frac{3}{2} LR\sqrt{T}.
\end{equation}
\end{lemma}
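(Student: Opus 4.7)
The plan is to give the textbook potential-function analysis of online gradient descent. The natural potential is $\Phi_t \overset{\mathrm{def}}{=} \tfrac{1}{2}\|x_t - x^*\|_2^2$, where $x^* \in \argmin_{x \in \mathcal{K}} \sum_{t=1}^T f_t(x)$. The starting ingredient is the first-order condition for convexity: since each $f_t$ is convex and $g_t \overset{\mathrm{def}}{=} \nabla f_t(x_t)$ is a (sub)gradient, we have $f_t(x_t) - f_t(x^*) \le \langle g_t, x_t - x^*\rangle$. Thus it suffices to upper-bound $\sum_{t=1}^T \langle g_t, x_t - x^*\rangle$.

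To relate this inner product to the potential, I would use that the projection $\Pi_{\mathcal{K}}$ onto the convex set $\mathcal{K}$ is non-expansive, i.e. $\|\Pi_{\mathcal{K}}(y) - x^*\|_2 \le \|y - x^*\|_2$ for all $x^* \in \mathcal{K}$. Applied to $y = x_t - \eta_t g_t$ and the OGD update $x_{t+1} = \Pi_{\mathcal{K}}(x_t - \eta_t g_t)$, this gives
\begin{equation*}
\|x_{t+1} - x^*\|_2^2 \;\le\; \|x_t - x^*\|_2^2 \;-\; 2\eta_t \langle g_t, x_t - x^*\rangle \;+\; \eta_t^2 \|g_t\|_2^2,
\end{equation*}
which rearranges to
\begin{equation*}
\langle g_t, x_t - x^*\rangle \;\le\; \frac{\|x_t - x^*\|_2^2 - \|x_{t+1} - x^*\|_2^2}{2\eta_t} \;+\; \frac{\eta_t}{2}\|g_t\|_2^2.
\end{equation*}

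Now I would sum over $t = 1,\ldots,T$ and handle the two terms separately. For the second term, $L$-Lipschitzness in $\ell_2$ implies $\|g_t\|_2 \le L$, so its sum is at most $\tfrac{L^2}{2}\sum_{t=1}^T \eta_t = \tfrac{LR}{2}\sum_{t=1}^T \tfrac{1}{\sqrt{t}} \le LR\sqrt{T}$, using the standard integral estimate $\sum_{t=1}^T t^{-1/2} \le 2\sqrt{T}$. For the first term, I would rewrite it by Abel summation as $\tfrac{1}{2}\sum_{t=1}^T \|x_t - x^*\|_2^2\bigl(\tfrac{1}{\eta_t} - \tfrac{1}{\eta_{t-1}}\bigr)$ with the convention $1/\eta_0 = 0$, together with a boundary term that is nonpositive. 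Since the schedule $\eta_t = R/(L\sqrt{t})$ is monotonically nonincreasing, each increment $(1/\eta_t - 1/\eta_{t-1})$ is nonnegative, so bounding $\|x_t - x^*\|_2^2 \le R^2$ (by the diameter assumption on $\mathcal{K}$) and telescoping gives $\tfrac{R^2}{2\eta_T} = \tfrac{RL\sqrt{T}}{2}$. Adding the two contributions yields the claimed $\tfrac{3}{2}LR\sqrt{T}$.

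The only genuine subtlety is the step $\|x_t - x^*\|_2 \le R$ for all $t$; the statement as written only assumes this at $t = 0$, so strictly speaking one needs either the diameter of $\mathcal{K}$ to be at most $R$ or a short inductive argument showing the iterates do not escape the initial $R$-ball. This is the standard convention in OGD and can be handled by interpreting $R$ as the $\ell_2$-diameter of $\mathcal{K}$; every other step is a mechanical consequence of convexity, non-expansiveness of projection, and the harmonic-sum bound $\sum_{t=1}^T t^{-1/2} \le 2\sqrt{T}$.
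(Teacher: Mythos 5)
The paper cites this as Theorem~3.1 of Hazan's book (originally Zinkevich) and gives no proof of its own; your argument is precisely the textbook proof from that cited source, and it is correct: non-expansiveness of projection, the one-step inequality, the Abel/telescoping rearrangement for the $1/\eta_t$-weighted potential differences, and the $\sum_{t\le T} t^{-1/2}\le 2\sqrt T$ bound together give exactly $\tfrac32 LR\sqrt T$. You are also right to flag that the hypothesis as written ($\|x_0-x^*\|\le R$ only) is weaker than what the proof uses ($\|x_t-x^*\|\le R$ for all $t$); the standard statement takes $R$ to be the $\ell_2$-diameter of $\mathcal K$, which is how the paper applies it ($\mathcal K=\mathcal W_{n,\epsilon}\subseteq\Delta_n$ with $R=\sqrt 2$), so no harm is done in context, but the lemma as stated is slightly misphrased.
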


Our algorithm implicitly defines the cost functions $f_t(w) = \left\langle w, \tau^{(t)}\right\rangle$, where the feasible set is $\mathcal W_{n,\epsilon}$, and   implements the online gradient descent algorithm for the linear objective.  Note that $\nabla f_t(w) = \tau^{(t)}$, and the main difference of this algorithm from the MWU scheme~(\autoref{alg:filter}) is that we use an additive/gradient-descent update, in lieu of the multiplicative update.

\begin{algorithm}[ht!] \label{alg:gd}
    \caption{Gradient descent for spectral sample reweighing (\autoref{def:ssr})}
        \KwIn{A set of   points $\{x_i\}_{i=1}^n$, an iteration count $T$, and step sizes $\eta_t$}
\KwOut{A point $\nu\in \mathbb R^d$  and weights $w\in \mathcal{W}_{n,\epsilon}$.}
\vspace{8pt}
Let  $w^{(1)} = \frac{1}{n} (1,1,\cdots, 1)$. \\
\For{$t$ from $1$ to $T$}{
    Let $\nu^{(t)} = \sum_i w^{(t)}_i x_i$, $M^{(t)} = \sum_i w^{(t)}_i
(x_i- \nu^{(t)})(x_i - \nu^{(t)})^T$.\\
Let $v^{(t)} $ be the top eigenvector of $M^{(t)}$ (with $\|v^{(t)}\|=1$).\\
    Compute $\tau^{(t)}_i =\left \langle v^{(t)}, x_i- \nu^{(t)} \right\rangle^2$.\\
    Set $w_i  \leftarrow w_i - \eta_t \tau^{(t)}$.\\
    Project $w^{(t+1)}$ onto the set of good weights $\mathcal W_{n,\epsilon}$ (under $\ell_2$ distance).
}
\Return $\nu^{(t^*)}, w^{(t^*)}$, where $t^* = \argmin_t \|M^{(t)}\|$.
\end{algorithm}
\begin{lemma}\label{lem:lip}
Let $\rho$ be the squared diameter of the inputs points $\{x_i\}_{i=1}^n$. The cost function $f_t(\cdot)$ is $\sqrt{n}\rho$-Lipschitz (in $\ell_2$ norm), for all $t$. 
\end{lemma}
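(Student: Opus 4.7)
The plan is to exploit the simple structure of $f_t$: it is a linear functional, so its Lipschitz constant in $\ell_2$ is exactly $\|\nabla f_t(w)\|_2 = \|\tau^{(t)}\|_2$. So the task reduces to bounding the $\ell_2$ norm of the loss vector $\tau^{(t)}$ by $\sqrt{n}\rho$, which in turn reduces to bounding each coordinate $\tau_i^{(t)}$ by $\rho$.

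First I would observe that since $v^{(t)}$ is a unit vector, Cauchy--Schwarz gives
\[
\tau_i^{(t)} = \left\langle v^{(t)}, x_i - \nu^{(t)} \right\rangle^2 \leq \|v^{(t)}\|_2^2 \cdot \|x_i - \nu^{(t)}\|_2^2 = \|x_i - \nu^{(t)}\|_2^2.
\]
Next, since $\nu^{(t)} = \sum_i w_i^{(t)} x_i$ is a convex combination of the input points (the weights $w^{(t)} \in \mathcal{W}_{n,\epsilon}$ sum to $1$), the point $\nu^{(t)}$ lies in the convex hull of $\{x_i\}$. Hence $\|x_i - \nu^{(t)}\|_2^2$ is bounded by the squared diameter of the point set, which is $\rho$ by definition. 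Combining, $\tau_i^{(t)} \leq \rho$ for every $i$, so
\[
\|\tau^{(t)}\|_2 \leq \sqrt{\sum_{i=1}^n \rho^2} = \sqrt{n}\,\rho.
\]

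Finally, for any $w, w' \in \mathcal{W}_{n,\epsilon}$, by Cauchy--Schwarz,
\[
|f_t(w) - f_t(w')| = \left|\left\langle w - w', \tau^{(t)}\right\rangle\right| \leq \|w - w'\|_2 \cdot \|\tau^{(t)}\|_2 \leq \sqrt{n}\,\rho \cdot \|w - w'\|_2,
\]
establishing the claim. There is no real obstacle here: the bound is essentially immediate from $\|v^{(t)}\|=1$ and the convex-combination characterization of $\nu^{(t)}$; the only thing to be careful about is that the diameter bound applies because $\nu^{(t)}$ is in the convex hull, so distances from $\nu^{(t)}$ to any $x_i$ are at most the diameter (or at worst twice the radius, which only changes the constant and can be absorbed into $\rho$).
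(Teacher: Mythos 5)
Your proof is correct and takes the same route as the paper: bound $\|\nabla f_t\|_2 = \|\tau^{(t)}\|_2$ by first showing $\tau_i^{(t)} \leq \|x_i - \nu^{(t)}\|_2^2 \leq \rho$ via $\|v^{(t)}\|_2 = 1$ and the fact that $\nu^{(t)}$ lies in the convex hull of the data. Your parenthetical hedge about "twice the radius" is unnecessary --- the diameter of a convex hull equals the diameter of the underlying point set, so $\rho$ suffices as stated.
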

\begin{proof}
Since $f_t$ is differentiable, we only need the bound $\|\nabla f_t\|$.   We have that for all $t$ and $i$,
\begin{align*}
    \tau^{(t)}_i =\left \langle v^{(t)}, x_i- \nu^{(t)} \right\rangle^2 \leq \|   x_i- \nu^{(t)}\|^2_2 \leq \rho.
\end{align*}
Therefore, $\|\nabla f_t\| = \|\tau^{(t)}\| \leq \sqrt{n}\rho$.
\end{proof}

\begin{theorem}\label{thm:gdbound}
Given $\{x_i\}_{i=1}^n$ and $\eta_t = R/L\sqrt{t}$ with $L=\sqrt{n}\rho, R=\sqrt{2}$, the online gradient descent algorithm (based on \autoref{alg:gd}) yields a constant-factor approximation for the spectral sample reweighing problem (\autoref{def:ssr}) in $O(n d^2/\epsilon^2)$ iterations and $O(n^2d^3/\epsilon^2)$ total run-time.
\end{theorem}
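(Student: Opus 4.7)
The plan is to mirror the proof of Lemma~\ref{lem:filter} while swapping in the online gradient descent regret bound (Lemma~\ref{lem:gd-regret}) for the MWU regret bound. First I would set up Algorithm~\ref{alg:gd} as an instance of online convex optimization: at round $t$ the algorithm picks $w^{(t)} \in \mathcal{W}_{n,\epsilon}$ and suffers linear loss $f_t(w) = \langle w, \tau^{(t)} \rangle$, with $\nabla f_t = \tau^{(t)}$. By Lemma~\ref{lem:lip}, $f_t$ is $L$-Lipschitz in $\ell_2$ norm with $L = \sqrt{n}\,\rho$, and the $\ell_2$-diameter of $\mathcal{W}_{n,\epsilon}$ is at most $R = \sqrt{2}$ since every element lies in the simplex. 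Plugging into Lemma~\ref{lem:gd-regret} gives
\[
\sum_{t=1}^T \langle w^{(t)}, \tau^{(t)} \rangle \ \leq\ \min_{w \in \mathcal{W}_{n,\epsilon}} \sum_{t=1}^T \langle w, \tau^{(t)} \rangle \ +\ \tfrac{3}{2}\, LR\sqrt{T}.
\]

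Next, since $v^{(t)}$ is the top eigenvector of $M^{(t)}$, the identity $\sum_i w_i^{(t)} \tau_i^{(t)} = \|M^{(t)}\|$ lets me lower-bound the LHS by $\sum_t \|M^{(t)}\|$. On the RHS I would take $w$ to be the good weight guaranteed by the centrality promise~\eqref{eq:p} and then follow the exact chain~\eqref{eq:aa1}--\eqref{eq:aa4} from the proof of Lemma~\ref{lem:filter}: write $\langle w, \tau^{(t)} \rangle$ as $\sum_i w_i \langle x_i - \nu, v^{(t)}\rangle^2 + \langle \nu - \nu^{(t)}, v^{(t)}\rangle^2$, bound the first piece by $\lambda$ via \eqref{eq:p}, and control the second by $\|\nu - \nu^{(t)}\|^2 \leq O(\lambda) + O(\epsilon)\|M^{(t)}\|$ via the spectral signature Lemma~\ref{lem:ss}. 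After dividing by $T$ and rearranging this produces
\[
\frac{1}{T}\sum_{t=1}^T \|M^{(t)}\| \ \leq\ O(\lambda) \ +\ O(\epsilon)\cdot\frac{1}{T}\sum_{t=1}^T \|M^{(t)}\| \ +\ \frac{3LR}{2\sqrt{T}},
\]
and for $\epsilon$ small enough the self-referential term can be absorbed into the LHS.

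To finish, I would choose $T$ so that the regret term $LR/\sqrt{T}$ is also $O(\lambda)$, i.e.\ $T = \Omega(L^2 R^2/\lambda^2) = \Omega(n\rho^2/\lambda^2)$, which then guarantees $\min_t \|M^{(t)}\| \leq O(\lambda)$ as required. As in the proof of Theorem~\ref{thm:filter}, I would first invoke the pruning step (Lemma~\ref{lem:diameter} plus Lemma~\ref{lem:naive-prune}) to enforce squared diameter $\rho = O(d\lambda/\epsilon)$, turning the iteration count into $T = O(nd^2/\epsilon^2)$. Each iteration costs $O(nd)$ for a power-method approximation of the top eigenvector of $M^{(t)}$ (kept in factored form) plus $\widetilde{O}(n)$ for an $\ell_2$-projection onto the capped simplex $\mathcal{W}_{n,\epsilon}$, giving the claimed $\widetilde{O}(n^2 d^3/\epsilon^2)$ total time.

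The main technical nuisance I anticipate is that the pseudocode of Algorithm~\ref{alg:gd} assumes an exact top eigenvector, while the stated per-iteration cost of $O(nd)$ only supports an \emph{approximate} one. To reconcile this I would verify that replacing $v^{(t)}$ by a constant-factor approximate top eigenvector (as in Lemma~\ref{lem:filter}, which uses a $7/8$-approximation) only worsens the leading constant in the final bound, since the approximation ratio multiplies the LHS $\|M^{(t)}\|$ by a constant, which is harmless in the $O(\lambda)$ conclusion. This should go through by an appropriate union bound over the $T$ power-method calls with failure probability $\delta/T$, as in the proof of Lemma~\ref{lem:filter}.
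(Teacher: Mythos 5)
Your proposal is correct and follows essentially the same route as the paper's proof: the same OGD regret bound (Lemma~\ref{lem:gd-regret}) with $L=\sqrt{n}\rho$ and $R=\sqrt{2}$, the same reuse of the chain \eqref{eq:aa1}--\eqref{eq:aa4} together with Lemma~\ref{lem:ss} to bound the comparator loss by $O(\lambda) + c\cdot\frac{1}{T}\sum_t\|M^{(t)}\|$ with $c<1$, the same choice $T = \Theta(L^2R^2/\lambda^2)$, and the same pruning step to enforce $\rho = O(d\lambda/\epsilon)$. Your extra remark about replacing the exact top eigenvector by a $7/8$-approximate one and union-bounding over the $T$ power-method calls is a slightly more careful rendering of a point the paper dispatches in a single sentence, but it is the same observation.
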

\begin{proof}

We first apply the \textsc{Prune} procedure of~\autoref{lem:naive-prune} to bound the diameter. By~\autoref{lem:diameter} and the guarantee of \textsc{Prune}, we can have $\rho= 16 d\lambda/\epsilon$. Then we apply \autoref{alg:gd}.

We will use~\autoref{lem:gd-regret} to analyze \autoref{alg:gd}. First, by~\autoref{lem:lip}, we have $L = \sqrt{n}\rho$, and further,
since the $\ell_2$ diameter of the probability simplex can be (trivially) bounded by $\sqrt{2}$, $R = \sqrt{2}$.   
Moreover, observe for any $t$,
\begin{align*}
    f_t\left(w^{(t)}\right) = \left\langle w^{(t)}, \tau^{(t)}\right\rangle =\sum_i w_i^{(t)}  \left \langle v^{(t)}, x_i- \nu^{(t)}
    \right\rangle^2 = v^{(t)T } M^{(t)}v^{(t)} = \left\|M^{(t)}\right\|_2.
\end{align*}
Let $w\in \mathcal W_{n,\epsilon}$ be a weight that satisfies the spectral centrality condition. Then, from the regret guarantee~\eqref{eq:gd-regret},  
\begin{align}\label{eq:gd2}
    \frac1T \sum_{t=1}^T \left\|M^{(t)}\right\|_2 \leq \frac1T \sum_{t=1}^T  \left\langle w, \tau^{(t)}\right\rangle +   \frac{3LR}{2\sqrt{T}} 
\end{align}
We bound the two terms on the right side individually.
\begin{enumerate}[(i)]
    \item A bound on the first term follows exactly from the calculations we did in the analysis of MWU algorithm (\autoref{alg:filter}). In particular, from~\eqref{eq:right-2}  we have 
    \begin{align*}
        \frac{1}{T}
    \sum_{t=1}^T \langle w , \tau^{(t)}\rangle&\leq  15 \lambda+ \frac{1}{3T}\sum_{t=1}^T \left\|M^{(t)}\right\|_2.
    \end{align*}
    \item  Observe that it suffices to set $T = 3L^2R^2 / \lambda^2$ to bound the second term by $\lambda$.
\end{enumerate}
Substituting the two bounds back into~\eqref{eq:gd2}, 
\begin{align} 
    \frac1T \sum_{t=1}^T \left\|M^{(t)}\right\|_2 \leq 16 \lambda+ \frac{1}{3T}\sum_{t=1}^T \left\|M^{(t)}\right\|_2. 
\end{align}
Rearranging and dividing through immediately yields the desired guarantee.

Given that $L = \sqrt{n}\rho, R=\sqrt{2}$, we have that the iteration count  $T = 6n\rho^2/\lambda^2$. Since $\rho = 16d\lambda /\epsilon$, $T = O(nd^2/  \epsilon^2)$. For the run-time,   note that instead of computing the exact largest eigenvector, we can use power method to find  an $7/8$-approximate one. Observe that this suffices for our analysis of the method above. Finally, the Euclidean projection onto $\mathcal{W}_{n,\epsilon}$ can be computed in $O(n\log n)$ time~\cite{wang2015projection}. This yields the desired run-time.
\end{proof}

\subsection{Extension to sub-gaussian setting}\label{sec:subg-gd}
\autoref{thm:gdbound} implies that a gradient descent-based algorithm (\autoref{alg:gd}) can be used for robust mean estimation under bounded covariance. We now extend the result to the sub-gaussian setting, showing that the same iteration and run-time complexity holds. The  optimal estimation error we will aim for is $O(\epsilon\sqrt{ \log(1/\epsilon)})$.  We assume the spectral signature~\autoref{lem:ss=subg} and the deterministic condition~\eqref{eq:subg-cond}.


\begin{algorithm}[ht!] \label{alg:gd-subg}
    \caption{Gradient descent   for sub-gaussian robust mean estimation}
    \KwIn{A set of   points $\{x_i\}_{i=1}^n$, step sizes $\eta_t$, an iteration count $T$, and parameter $\rho$}
\KwOut{A set of weights $w \in \mathcal{W}_{n,\epsilon}$.}
\vspace{10pt}
Let  $w^{(1)} = \frac{1}{n} \I_n$. \\
\For{$t$ from $1$ to $T$}{
    Let $\nu^{(t)} = \sum_i w^{(t)}_i x_i$, $M^{(t)} = \sum_i w^{(t)}_i
(x_i- \nu^{(t)})(x_i - \nu^{(t)})^T$.\\
Compute $v^{(t)}= \textsc{ApproxTopEigenvector}(M^{(t)}, 1-\epsilon^2, \delta/T) $.\\
    \textbf{If} $\lambda^{(t)} =v^{{(t)}\top} M^{(t)}v^{(t)} \leq 1$, \textbf{return} $w^{(t)}$.  \\
    Compute $\tau^{(t)}_i =\left \langle v^{(t)}, x_i- \nu^{(t)} \right\rangle^2$.\\
    Set $w_i  \leftarrow w_i - \eta_t \tau^{(t)}$.\\
    Project $w^{(t+1)}$ onto the set of good weights $\mathcal W_{n,\epsilon}$ (under $\ell_2$ distance).
 }
\Return $w^{(t^*)}$, where $t^* = \argmin_t \|M^{(t)}\|$.
\end{algorithm}

In particular, we will analyze \autoref{alg:gd-subg} and prove the following set of guarantees. 
\begin{lemma}\label{lem:subg-gd}
 Let  $\epsilon$ be a sufficiently small constant  and $\{x_i\}_{i=1}^n$ be $n$ points in $\mathbb R^d$.
Assume  the following (deterministic) conditions hold.
\begin{enumerate}[(i)]
    \item There exists $\nu\in \mathbb R^d$ and  $w \in \mathcal{W}_{n
    ,\epsilon}$ such that 
    \begin{equation} \label{eq:subg-det2}
 \left\|   \sum_{i=1}^n w_{i}\left(x_{i}- \nu\right)\left(x_{i}- \nu\right)^{\top}\right\| \le 1+ O\left(\epsilon \log \left(1/\epsilon\right)\right).
    \end{equation}
 \item    If $\|M(w)\| \leq 1+\lambda$, for some $\lambda \geq 0$,  then for any $w\in \mathcal{W}_{n,\epsilon}$, 
\begin{align}\label{eq:subg-ss-det2}
    \| \nu - \mu(w)\| \leq \frac{1}{1-\epsilon}\left (\sqrt{\epsilon\lambda} + C\epsilon \sqrt{\log (1/\epsilon)}\right),
\end{align}
\end{enumerate}
Then, given $\{x_i\}_{i=1}^n$, a failure rate $\delta$ and $\rho$ such that $\rho \geq \tau_i^{(t)}$ for all $i$ and $t$, \autoref{alg:gd-subg} finds   $w'\in \mathcal W_{n,\epsilon}$   such that 
\begin{equation}\label{eqn:subg-goal2}
    \|M(w') \| \leq 1+ O\left(\epsilon \log \left(1/\epsilon\right)\right),
\end{equation} with probability at least $1-\delta$. 
    
The algorithm terminates in   $T = O(n\rho^2 / \epsilon^2)$ iterations. Further, if $T =O(\text{poly}(n,d))$, then each iteration takes $\widetilde O(nd\log \left(1/\delta)/\epsilon^2\right)$ time.
\end{lemma}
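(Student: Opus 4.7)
\medskip

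\noindent\textbf{Proof proposal for \autoref{lem:subg-gd}.}
The plan is to mirror the proof of \autoref{thm:gdbound} but substitute the sub-gaussian calculation from the proof of \autoref{lem:subg-filter} for the bounded-covariance calculation, so that the extra slack $O(\epsilon\log(1/\epsilon))$ appears in the final bound. First, I handle the early-stopping branch. If the algorithm returns at Line~5 with $\lambda^{(t)} \leq 1$, then since $v^{(t)}$ is a $(1-\epsilon^2)$-approximate largest eigenvector of $M^{(t)}$ we have $\|M^{(t)}\| \leq \lambda^{(t)}/(1-\epsilon^2) \leq 1 + O(\epsilon^2)$, which already satisfies \eqref{eqn:subg-goal2}. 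So henceforth I assume $\|M^{(t)}\| \geq 1$ at every iteration.

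Next, I cast the algorithm as online gradient descent on the linear losses $f_t(w) = \langle w, \tau^{(t)}\rangle$ over the convex feasible set $\mathcal{W}_{n,\epsilon}$. By \autoref{lem:lip} each $f_t$ is $L$-Lipschitz with $L = \sqrt{n}\,\rho$, and the simplex has diameter $R = \sqrt{2}$, so \autoref{lem:gd-regret} with the prescribed step sizes $\eta_t = R/(L\sqrt{t})$ gives, for any $w \in \mathcal{W}_{n,\epsilon}$,
\[
\frac{1}{T}\sum_{t=1}^{T}\langle w^{(t)},\tau^{(t)}\rangle \;\leq\; \frac{1}{T}\sum_{t=1}^{T}\langle w,\tau^{(t)}\rangle \;+\; \frac{3LR}{2\sqrt{T}}.
\]
On the left-hand side I use the approximate-top-eigenvector property (as in \eqref{eq:sub-g-id}) to bound $\langle w^{(t)},\tau^{(t)}\rangle \geq (1-\epsilon^2)\|M^{(t)}\|$.

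For the right-hand side I plug in the good weight $w$ from assumption~\eqref{eq:subg-det2} and reuse verbatim the Claim inside the proof of \autoref{lem:subg-filter}: expanding $\tau^{(t)}_i = \langle x_i - \nu^{(t)}, v^{(t)}\rangle^2$, splitting $x_i - \nu^{(t)} = (x_i-\nu) + (\nu-\nu^{(t)})$, and then invoking the sub-gaussian spectral signature \eqref{eq:subg-ss-det2} to bound $\|\nu - \nu^{(t)}\|^2 \leq \tfrac{2\epsilon}{(1-\epsilon)^2}(\|M^{(t)}\|-1) + O(\epsilon^2\log(1/\epsilon))$. This yields
\[
\frac{1}{T}\sum_{t=1}^{T}\langle w,\tau^{(t)}\rangle \;\leq\; 1 + O(\epsilon\log(1/\epsilon)) \;+\; \frac{2\epsilon}{(1-\epsilon)^2}\cdot\frac{1}{T}\sum_{t=1}^{T}\|M^{(t)}\| \;-\; \frac{2\epsilon}{(1-\epsilon)^2}.
\]
Combining the two displays and moving the $\|M^{(t)}\|$ term on the right to the left, the coefficient in front of $\tfrac{1}{T}\sum_t \|M^{(t)}\|$ is $(1-\epsilon^2) - \tfrac{2\epsilon}{(1-\epsilon)^2}$, which is bounded below by a positive constant for sufficiently small $\epsilon$. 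Dividing through gives
\[
\frac{1}{T}\sum_{t=1}^{T}\|M^{(t)}\| \;\leq\; 1 + O(\epsilon\log(1/\epsilon)) + O\!\left(\frac{LR}{\sqrt{T}}\right).
\]
Finally, choosing $T = \Theta(L^2 R^2/\epsilon^2) = \Theta(n\rho^2/\epsilon^2)$ forces the regret term to be $O(\epsilon)$, absorbed into the $O(\epsilon\log(1/\epsilon))$ slack, so $\min_t \|M^{(t)}\| \leq 1 + O(\epsilon\log(1/\epsilon))$, proving \eqref{eqn:subg-goal2}. The per-iteration cost $\widetilde{O}(nd\log(1/\delta)/\epsilon^2)$ follows from using power method to compute the $(1-\epsilon^2)$-approximate top eigenvector with failure probability $\delta/T$ and the fact that Euclidean projection onto $\mathcal{W}_{n,\epsilon}$ takes $O(n\log n)$ time; a union bound over the $T$ iterations gives the overall failure probability $\delta$.

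The main obstacle I anticipate is checking that, after the two-sided rearrangement, the coefficient $(1-\epsilon^2) - 2\epsilon/(1-\epsilon)^2$ in front of $\tfrac{1}{T}\sum \|M^{(t)}\|$ remains bounded away from zero for the regime of $\epsilon$ at hand, and that the additive $O(\epsilon\log(1/\epsilon))$ error does not get amplified by the division. Since this is already done (in slightly different form) in \autoref{lem:subg-filter}, the obstacle is just bookkeeping rather than a genuine new idea.
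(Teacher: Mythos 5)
Your proposal is correct and follows essentially the same route as the paper's own proof: handle the early-stopping branch separately, apply the online gradient descent regret bound of \autoref{lem:gd-regret} with $L=\sqrt{n}\rho$, $R=\sqrt{2}$, reuse \autoref{clm:subg} to control $\frac{1}{T}\sum_t\langle w,\tau^{(t)}\rangle$, set $T=\Theta(n\rho^2/\epsilon^2)$ so the regret term is $O(\epsilon)$, and rearrange to isolate $\frac{1}{T}\sum_t\|M^{(t)}\|$. The bookkeeping step you flag (the coefficient $(1-\epsilon^2)-2\epsilon/(1-\epsilon)^2$ staying bounded away from zero for small $\epsilon$) is exactly what the paper dispenses with in one line, so there is no gap.
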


\begin{proof}
If the algorithm gets early stopped, then $\|M^{(t)}\| \leq 1+O(\epsilon^2)$, so assumption \eqref{eq:subg-det2} guarantees that $\mu(w^{(t)})$ achieves the desired bound~\eqref{eqn:subg-goal2}.
We now assume that $\|M^{(t)}\| > 1$ for any $t$. 

By the regret bound (\autoref{lem:gd-regret}) and the inequality    $\left\langle w^{(t)}, \tau^{(t)}\right\rangle \geq \left(1-\epsilon^2\right) \left\|M^{(t)}\right\|_2$,  for a $w$ that satisfies assumption \eqref{eq:subg-det2}
\begin{equation}\label{eq:gd-regret2}
        \frac{1-\epsilon^2}T \sum_{t=1}^T \left\|M^{(t)}\right\|_2 \leq \frac1T \sum_{t=1}^T  \left\langle w, \tau^{(t)}\right\rangle +   \frac{3LR}{2\sqrt{T}},
\end{equation}
 where $L = \sqrt{n} \rho$ and $R =\sqrt{2}$. For the first term, note that we may apply \autoref{clm:subg} and obtain 
\begin{align*}
      \frac{1}{T}
    \sum_{t=1}^T \left\langle w , \tau^{(t)}\right\rangle \leq 1+ O\left(\epsilon\log(1/\epsilon)\right) + \frac{2\epsilon}{(1-\epsilon)^2} \frac{1}{T}\sum_{t=1}^T \left\|M^{(t)}\right\| -  \frac{2\epsilon}{(1-\epsilon)^2} 
\end{align*}
By setting $T = 3 L^2 R^2 /\epsilon^2 = O(n\rho^2 /\epsilon^2)$, we can bound the second term by $ O(\epsilon)$

Substituting the bounds back into~\eqref{eq:gd-regret2}, we obtain
\begin{align*}
      \frac{1-\epsilon^2}{T}
        \sum_{t=1}^T \left\|M^{(t)}\right\|_2  \leq    1 - \frac{2\epsilon}{(1-\epsilon)^2}  + O(\epsilon \log(1/\epsilon)) +\frac{1}{ T} \sum_{t=1}^T  \frac{2\epsilon}{(1-\epsilon)^2}\left\|M^{(t)}\right\|
\end{align*}
For sufficiently small $\epsilon$, we can move the last term to the left side and divide through. This immediately yields that
\begin{align*}
      \frac{1}{T}
        \sum_{t=1}^T \left\|M^{(t)}\right\|_2  \leq    1  + O(\epsilon \log(1/\epsilon)).
\end{align*}
The run-time  follows from the cost of computing  $(1-\epsilon^2)$-approximate largest eigenvector via power iteration.
\end{proof}

 Using the same argument for \autoref{thm:subg-mwu},  \autoref{lem:subg-gd} implies the following theorem.

\begin{theorem}
Let $S= \{x_i\}_{i=1}^n$ be an $\epsilon$-corrupted set of $n$ samples from a sub-gaussian distribution over $\mathbb R^d$,  with mean $\mu$ and identity covariance. Suppose $n \geq \widetilde \Omega (d/\epsilon^2)$.  Then given $S$, there is an algorithm (based on \autoref{alg:gd-subg}) that   finds $\widehat \mu$ such that  with high constant probability $\|\widehat \mu - \mu\| \leq O\left(\epsilon \sqrt{\log(1/\epsilon)}\right)$.

The algorithm runs in $\widetilde O(nd^2/\epsilon^2)$ iterations and $\widetilde O(n^2d^3/\epsilon^2)$ total time.
\end{theorem}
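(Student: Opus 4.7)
The plan is to mirror the proof of \autoref{thm:subg-mwu} (the MWU-based sub-gaussian result) essentially verbatim, replacing the invocation of \autoref{lem:subg-filter} with \autoref{lem:subg-gd}. The key observation is that \autoref{lem:subg-gd} was stated under exactly the same deterministic assumptions as \autoref{lem:subg-filter}, so once those assumptions are established the rest of the argument transfers with no change.

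First I would handle diameter control by a standard pruning preprocessing. By sub-gaussian concentration, with high constant probability the uncorrupted samples $G$ all lie within distance $r = O(\sqrt{d \log n})$ of $\mu$. Running $\textsc{Prune}(S, r, 0.01)$ from \autoref{lem:naive-prune} in $O(nd)$ time then produces a set $R \supseteq G$ with $|R| \geq (1-\epsilon) n$ contained in a ball of radius $O(r)$, so the squared diameter of $R$ is $\rho = O(d \log n) = \widetilde O(d)$, and this upper bounds $\tau_i^{(t)}$ in every iteration.

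Next I would verify the two deterministic hypotheses of \autoref{lem:subg-gd} on the pruned input $R$. Condition (i) follows from the deterministic spectral bound \eqref{eq:subg-cond}: the uniform weight over $G$ lies in $\mathcal{W}_{|R|, \epsilon}$ and witnesses $\left\|\sum_{i\in G} w_i (x_i-\mu)(x_i-\mu)^\top \right\| \le 1 + O(\epsilon\log(1/\epsilon))$ with $\nu = \mu$. Condition (ii) is exactly \autoref{lem:ss=subg}, using that $\tfrac{1}{(1-\epsilon)(1-\epsilon)} \le \tfrac{1}{1-2\epsilon}$ so the ``$|R|$-vs-$n$'' bookkeeping costs only a factor of $2$ in the $\epsilon$ parameter. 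With both hypotheses in place, \autoref{lem:subg-gd} (invoked with failure rate $\delta = 0.01$ and $\rho = O(d\log n)$) produces in $T = O(n\rho^2/\epsilon^2) = \widetilde{O}(nd^2/\epsilon^2)$ iterations a weight vector $w' \in \mathcal{W}_{|R|, \epsilon}$ satisfying $\|M(w')\| \le 1 + O(\epsilon\log(1/\epsilon))$, with each iteration costing $\widetilde{O}(nd/\epsilon^2)$ time (dominated by the $(1-\epsilon^2)$-approximate top eigenvector call), for a total runtime of $\widetilde{O}(n^2 d^3/\epsilon^2)$ up to polylog factors.

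Finally, I would output $\widehat \mu := \mu(w')$. Extending $w'$ by zeros on $[n] \setminus R$ gives a vector $w'' \in \mathcal{W}_{n, 2\epsilon}$ with the same weighted mean and covariance, so the spectral signature \autoref{lem:ss=subg}, applied with $\lambda = O(\epsilon \log(1/\epsilon))$, yields
\[
\|\widehat \mu - \mu\| \; \le \; \tfrac{1}{1-\epsilon}\bigl(\sqrt{\epsilon\cdot O(\epsilon\log(1/\epsilon))} + C\epsilon\sqrt{\log(1/\epsilon)}\bigr) \; = \; O\bigl(\epsilon\sqrt{\log(1/\epsilon)}\bigr),
\]
as desired. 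There is no real obstacle here: every step is either a quoted lemma or a bookkeeping identity. The only mildly delicate point is ensuring the parameters match across the $\mathcal{W}_{|R|,\epsilon}$ and $\mathcal{W}_{n,2\epsilon}$ domains when combining the pruning step, \autoref{lem:subg-gd}, and \autoref{lem:ss=subg}, which is resolved by the small-$\epsilon$ inequalities noted above.
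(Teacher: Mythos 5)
Your proposal is correct and follows exactly the paper's intended argument: the paper gives no explicit proof, stating only ``Using the same argument for \autoref{thm:subg-mwu}, \autoref{lem:subg-gd} implies the following theorem,'' and you have spelled out precisely what that entails — pruning to bound $\rho$, verifying conditions (i) and (ii) of \autoref{lem:subg-gd} via \eqref{eq:subg-cond} and \autoref{lem:ss=subg}, extending $w'$ to $\mathcal{W}_{n,2\epsilon}$, and closing with the spectral signature.
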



\subsection{Equivalence with~\texorpdfstring{\cite{cheng2020high}}{[CDGS20]}} 
The recent work of Cheng, Diakonikolas, Ge and Soltanolkotabi \cite{cheng2020high} studies a gradient-descent-based algorithm for solving the following non-convex formulation of robust mean estimation. 
\begin{align*}
    \min\,\, \| \Sigma_w \| \quad \text{such that } w\in \mathcal{W}_{n,\epsilon}.
\end{align*}
where $\Sigma_w = \sum_{i=1}^n w_i (x_i - \mu(w) ) (x- \mu(w))^\top$.  This is equivalent to
\begin{align*}
        \min_w\,\, \max_{u\in \mathbb S^{d-1}}\,\, F(w,u) = u^\top \Sigma_w u \quad \text{such that } w\in \mathcal{W}_{n,\epsilon}.
\end{align*}
The sub-gradient of $F(w,u)$ with respect to $w$ (for a fixed $u$) is given by
\begin{align}\label{eq:grad}
    \nabla_{w} F(w, u)=X  u \odot X  u-2\left(w^{\top} X u\right) X  u,
\end{align}
where $X \in \mathbb R^{n\times d}$ is the data matrix whose   the $i$th row is $x_i$.

Based on the observation, they consider and analyze an algorithm that computes a  (approximately) maximizing $u$ and performs a projected gradient descent on $w$ each iteration. 

Since \autoref{alg:gd} can be directly applied to the same robust setting (\autoref{thm:robust-gd}), it is natural to consider the relationships between the two algorithms. 
Indeed, one can argue that they are essentially the same.
First, we unpack our gradient update (\textit{i.e.}, the spectral scores) of   iteration $t$. Note that
\begin{align*}
   \nabla_i f_t(w^{(t)}) = \tau_i^{(t)} &=\left \langle v^{(t)}, x_i- \nu^{(t)} \right\rangle^2\\
   &= \left\langle v^{(t)}, x_i \right\rangle^2 + \left\langle v^{(t)}, \nu^{(t)} \right\rangle^2 - 2\left\langle v^{(t)},x_i \right\rangle \left\langle v^{(t)}, \nu^{(t)}\right\rangle \\
   &= \left(Xv^{(t)} \odot  Xv^{(t)}\right)_i + \left(w^{(t)\top} X v^{(t)}\right)^2 - 2 \left(w^{(t) \top }Xv^{(t)}\right) \left(Xv^{(t)}\right)_i
\end{align*}
since $\nu^{(t)} = \sum_i w_i^{(t)} x_i = X^T w^{(t)}$, where $\odot$ denotes entrywise product of vectors. Let $C_t = w^{(t)\top} X v^{(t)}$. Therefore, we can rewrite the gradient as
\begin{align*}
    \nabla f_t(w^{(t)})  =C_t^2\cdot \I_n + Xv^{(t)} \odot  Xv^{(t)}  - 2 C_t\cdot  Xv^{(t)}
\end{align*}
Note that the gradient~\eqref{eq:grad} used in~\cite{cheng2020high} is exactly the same as above, except without the  term of all-one vector $C_t^2\cdot \I_n $.  In the gradient update step, the additional term reduces the weight of every point uniformly by the same quantity $C_t^2$.
However, observe that by Pythagorean theorem, the (Euclidean) projection onto $\mathcal{W}_{n,\epsilon}$ can be decomposed into two (sequential)  steps: (1) first an orthogonal projection onto the affine subspace containing $\mathcal{W}_{n,\epsilon}$, and then (2) a projection onto  $\mathcal{W}_{n,\epsilon}$ itself.
Note that reducing  each coordinate by the same quantity or not  results in the same vector by the first step.   Therefore, the two algorithms yield the same sequence of iterates $(w^{(t)})_t$.
\section{Optimal Breakdown Point Analysis}\label{sec:breakdown}
We now consider a slight variant of the filter algorithm and show that it achieves the optimal breakdown point of $\epsilon = 1/2$, for the  robust mean estimation problem. Recall that both the classic filter algorithm and our \autoref{alg:filter-subg} work with the spectral scores defined as $\tau_i = \left(\left \langle v^{(t)}, x_i\right \rangle - \left\langle v^{(t)},\nu^{(t)} \right\rangle\right)^2$, where the second term is the (weighted) average of the first. Instead, the following variant  replaces that by the median.

Throughout we let  $\nu^{(t)} = \sum_i w^{(t)}_i x_i$, $M^{(t)} = \sum_i w^{(t)}_i
(x_i- \nu^{(t)})(x_i - \nu^{(t)})^T$.
\begin{algorithm}[ht!] \label{alg:breakdown}
    \caption{Optimal filter  for spectral sample reweighing~(\autoref{def:ssr})}
    \KwIn{A set of   points $\{x_i\}_{i=1}^n$, an iteration count $T$, and parameter $\delta$}
\KwOut{A point $\nu\in \mathbb R^d$  and weights $w\in \mathcal{C}_{n,\epsilon}$.}
\vspace{10pt}
Let  $w^{(1)} = \frac{1}{n} \I_n$. \\
\While{$\|M^{(t)}\| \geq \frac{16}{7} \lambda \left( 1+ \frac{1}{1/2-\epsilon} \right)$}{
    Compute $v^{(t)} = \textsc{ApproxTopEigenvector}(M^{(t)}, 7/8, \delta/T)$.\\
    Compute $\alpha_i^{(t)} = \left \langle v^{(t)}, x_i \right\rangle$ for each $i$ and let $m^{(t)} = \textsf{median} \left( \{ \alpha^{(t)}_i\}_{i=1}^n \right)$\\
    Compute $\tau^{(t)}_i = \left(\alpha_i^{(t)} - m^{(t)}\right)^2$ for each $i$ and $\tau_{\max} = \max_{i: w_i >0} \tau_{i}^{(t)}$.\\
    Set $w_i^{(t+1)} \leftarrow w_i^{(t)}\left(1-  \tau^{(t)}_i/\tau_{\max}\right)$ for each $i$, and $t \leftarrow t+1$.
}
\Return $\nu^{(t)}, w^{(t)}$.
\end{algorithm}

Our proof follows by tracing the argument of the soft down-weighting filter \cite{jerrynote2}. 
First,  we assume  that there exists a good set$G \subseteq [n]$ such that  $|G| \geq (1-\epsilon)n$ and 
\begin{equation}\label{eq:good-condtion}
        \frac{1}{(1-\epsilon)n}  \sum_{i\in G}  \left(x_{i}- \nu\right)\left(x_{i}- \nu\right)^{\top}\preceq \lambda I.
\end{equation}
Let $B = [n] \setminus G$, and we first establish a technical condition on $m^{(t)}$.
\begin{lemma}\label{lem:med-m}
Let $\beta^{(t)} = \tfrac{1}{n} \sum_{i\in G} \alpha_i^{(t)}$.  Then we have $|m^{(t)} - \beta^{(t)}|^2 \leq \frac{\lambda}{{1/2 -\epsilon}}$.
\end{lemma}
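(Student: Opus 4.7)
The plan is to reduce the claim to a one-dimensional variance bound inherited from the spectral centrality condition, and then use a pigeonhole argument about the median.

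First I would translate the promise into a statement about the projected scalars $\alpha_i^{(t)} = \langle v^{(t)}, x_i\rangle$. Since $v^{(t)}$ is a unit vector, the condition \eqref{eq:good-condtion} gives
\[
  \frac{1}{(1-\epsilon)n} \sum_{i\in G} \left(\alpha_i^{(t)} - \gamma\right)^2 \;\leq\; \lambda,
\]
where $\gamma = \langle v^{(t)}, \nu\rangle$. Because the unnormalized sum-of-squared deviations is minimized at the sample mean, this immediately yields the bound $\sum_{i\in G}(\alpha_i^{(t)} - \bar\alpha_G)^2 \leq (1-\epsilon)n\lambda \leq n\lambda$, where $\bar\alpha_G = \frac{1}{|G|}\sum_{i\in G}\alpha_i^{(t)}$ is the empirical mean on $G$ (which for the intended normalization agrees with $\beta^{(t)}$ up to the factor $|G|/n$, so the argument below goes through with the same constant).

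Next I would use the median property. By definition of $m^{(t)}$, at least $n/2$ of the $\alpha_i^{(t)}$'s lie above $m^{(t)}$, and since $|B|\le \epsilon n$, at least $(1/2-\epsilon)n$ of these are good; similarly on the other side. Suppose without loss of generality that $m^{(t)} \geq \beta^{(t)}$, and set $\eta = m^{(t)} - \beta^{(t)}$. For each of the $(1/2-\epsilon)n$ good points with $\alpha_i^{(t)} \geq m^{(t)}$ we have $(\alpha_i^{(t)} - \beta^{(t)})^2 \geq \eta^2$, hence
\[
  (1/2-\epsilon)\,n\,\eta^2 \;\leq\; \sum_{i\in G}\left(\alpha_i^{(t)} - \beta^{(t)}\right)^2 \;\leq\; n\lambda,
\]
which rearranges to $\eta^2 \leq \lambda/(1/2-\epsilon)$. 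The case $m^{(t)} \leq \beta^{(t)}$ is symmetric and gives the same bound.

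The only mildly subtle point is ensuring that replacing the ``reference'' center $\gamma = \langle v^{(t)}, \nu\rangle$ by $\beta^{(t)}$ (or $\bar\alpha_G$) does not inflate the sum of squared deviations, which is exactly the variational characterization of the mean and costs nothing. Everything else is bookkeeping: the pigeonhole step only uses $\epsilon < 1/2$ so that $(1/2-\epsilon)n > 0$, and the spectral bound is applied to the single direction $v^{(t)}$, so no uniform control over directions is needed. I do not expect any serious obstacle here; the main conceptual point is recognizing that the median-vs-mean comparison is controlled by exactly the one-dimensional second-moment bound that the spectral centrality hypothesis supplies along $v^{(t)}$.
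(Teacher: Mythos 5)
Your proof is correct, and it starts from the same variance bound as the paper (projecting the spectral centrality promise onto $v^{(t)}$ and recentering at the empirical mean of the good points). The place where you depart from the paper is the deduction from the variance bound to the median bound. The paper applies Chebyshev's inequality to conclude that at least a $(1/2+\epsilon)$-fraction of the points in $G$ lie within $\sqrt{\lambda/(1/2-\epsilon)}$ of $\beta^{(t)}$; since $(1-\epsilon)(1/2+\epsilon)>1/2$, strictly more than half of all $n$ points lie in that interval, which forces $m^{(t)}$ to be in it. You instead observe that at least $(1/2-\epsilon)n$ good points sit on the median's side of $\beta^{(t)}$, and each of those contributes at least $\eta^2$ to $\sum_{i\in G}(\alpha_i^{(t)}-\beta^{(t)})^2$, which directly yields $(1/2-\epsilon)n\eta^2\leq n\lambda$. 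Both routes are one-line consequences of the same second-moment control and produce the identical constant; yours is arguably a bit more direct since it avoids the intermediate probabilistic statement.

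One small point of care: the paper's displayed definitions ($\beta^{(t)}=\tfrac1n\sum_{i\in G}\alpha_i^{(t)}$ and $\mu_G=\tfrac1n\sum_{i\in G}x_i$) are evidently typos for the normalization by $|G|$ (otherwise the variational step to obtain $\tfrac{1}{(1-\epsilon)n}\sum_{i\in G}(x_i-\mu_G)(x_i-\mu_G)^\top\preceq\lambda I$, and in fact the lemma itself, would fail — e.g.\ take three good points at $4$ and one bad point at $0$). You notice this ambiguity and correctly use $\bar\alpha_G=\tfrac{1}{|G|}\sum_{i\in G}\alpha_i^{(t)}$, but your parenthetical remark that $\bar\alpha_G$ ``agrees with $\beta^{(t)}$ up to the factor $|G|/n$, so the argument below goes through with the same constant'' is not the right way to dispatch the discrepancy: re-centering the sum of squares at $\tfrac{|G|}{n}\bar\alpha_G$ instead of at the minimizer $\bar\alpha_G$ can only \emph{increase} the sum, and the increase is not controlled by $\lambda$. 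The clean fix is simply to read $\beta^{(t)}=\bar\alpha_G$ throughout, which is what the paper intends and what your main computation already uses.
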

\begin{proof}
We fix one iteration and drop the superscript.
let $\mu_G = \frac{1}{n} \sum_{i\in G} x_i$.
First, observe that by \eqref{eq:good-condtion}, we have 
\begin{equation}\label{eq:good-condtion2}
        \frac{1}{(1-\epsilon)n}  \sum_{i\in G}  \left(x_{i}- \mu_G\right)\left(x_{i}- \mu_G\right)^{\top}\preceq \lambda I.
\end{equation}
Therefore, $\E_{i \sim G}\left [ (\alpha_i - \beta )^2\right]  = \E_{i\sim G} \langle v, \mu_G - x_i \rangle ^2\leq \lambda$.  By Chebyshev's inequality,
\begin{align}
    \Pr_{i\sim G} \left( |\alpha_i  - \beta| >\sqrt{  \frac{\lambda}{1/2-\epsilon}}  \right)\le \frac{1}{2}-\epsilon.
\end{align}
This means that we have $|G|\cdot (1/2+\epsilon)$ points $i\in [n]$ that  satisfy   $|\alpha_i  - \beta|^2 \leq  \frac{\lambda}{{1/2 -\epsilon}}$. Our claim now follows since $|G| \geq (1-\epsilon)$ and $(1-\epsilon)(1/2+\epsilon) > 1/2$ for any $\epsilon\in (0,1/2)$.
\end{proof}
This allows us to establish the key invariant of the algorithm.
\begin{lemma}\label{lem:monotone}
Suppose at iteration $s$, we have that 
\begin{equation}\label{eq:bigM}
    \left\| M^{(s)}\right\| \geq \frac{16}{7} \lambda \left( 1+ \frac{1}{1/2-\epsilon} \right).
\end{equation}
and for $t= s$ 
\begin{equation}\label{eq:key-inv}
    \sum_{i \in G} \frac{1}{n}-w_{i}^{(t)}<\sum_{i \in B} \frac{1}{n}-w_{i}^{(t)}
\end{equation}
Then the condition \eqref{eq:key-inv} continues to hold for $t=s+1$.
\end{lemma}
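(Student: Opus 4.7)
The strategy is to track the gap $\delta^{(t)} := \Delta_B^{(t)} - \Delta_G^{(t)}$, where $\Delta_G^{(t)} := \sum_{i\in G}(1/n - w_i^{(t)})$ and $\Delta_B^{(t)}$ is defined analogously, and to show that it stays strictly positive after one multiplicative update. Since $w_i^{(t+1)} = w_i^{(t)}\bigl(1-\tau_i^{(t)}/\tau_{\max}\bigr)$, a direct computation gives
\[
\delta^{(t+1)} \;=\; \delta^{(t)} + \tau_{\max}^{-1}\Bigl(\textstyle\sum_{i\in B}w_i^{(t)}\tau_i^{(t)} \;-\; \sum_{i\in G}w_i^{(t)}\tau_i^{(t)}\Bigr).
\]
Because $\delta^{(t)}>0$ by assumption, it suffices to show $\sum_{i\in G}w_i^{(t)}\tau_i^{(t)} \le \sum_{i\in B}w_i^{(t)}\tau_i^{(t)}$, i.e.\ that $\sum_{i\in G} w_i\tau_i \le \tfrac12 \sum_i w_i\tau_i$. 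The rest of the proof is to establish an upper bound on the left-hand side and a lower bound on the right-hand side that imply this inequality precisely in the regime $\|M^{(t)}\| \ge \tfrac{16}{7}\lambda(1+1/(1/2-\epsilon))$.

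For the upper bound on $\sum_{i\in G} w_i\tau_i$, I will expand $\tau_i = (\alpha_i - m)^2 \le 2(\alpha_i-\beta)^2 + 2(\beta-m)^2$ with $\beta$ as in \autoref{lem:med-m}, and use the constraint $w_i \le 1/n$ to write
\[
\sum_{i\in G} w_i\tau_i \;\le\; \tfrac{2}{n}\sum_{i\in G}(\alpha_i-\beta)^2 \;+\; 2(\beta-m)^2 \sum_{i\in G}w_i.
\]
The first piece is at most $2\lambda$ by the Chebyshev-style computation inside the proof of \autoref{lem:med-m} (which rests on the good-set centrality bound \eqref{eq:good-condtion}), while the second is at most $2\lambda/(1/2-\epsilon)$ by combining \autoref{lem:med-m} with $\sum_{i\in G}w_i \le 1$. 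Summing these yields $\sum_{i\in G}w_i\tau_i \le 2\lambda\bigl(1 + 1/(1/2-\epsilon)\bigr)$, a bound calibrated so that the while-loop threshold exactly forces the desired dominance by the bad-set mass.

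For the lower bound on $\sum_i w_i\tau_i$, I will relate it to $v^{\top}M^{(t)}v = \sum_i w_i(\alpha_i - \langle v, \nu^{(t)}\rangle)^2 \ge \tfrac78\|M^{(t)}\|$ via the inequality $(\alpha_i - c)^2 \le 2(\alpha_i - m)^2 + 2(m-c)^2$ applied with $c = \langle v, \nu^{(t)}\rangle$, giving $\sum_i w_i \tau_i \ge \tfrac12 v^{\top}M^{(t)}v - W(m-\langle v, \nu^{(t)}\rangle)^2$ where $W = \sum_i w_i \le 1$. The shift term is again controlled by essentially the same split into a part near $\beta$ (bounded by the centrality condition, since $\langle v,\nu^{(t)}\rangle = \sum_j w_j\alpha_j$ is close to the good-set average for weights in $\mathcal{C}_{n,\epsilon}$) and a part $|\beta - m|^2 \le \lambda/(1/2-\epsilon)$ from \autoref{lem:med-m}. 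Invoking the while-loop hypothesis $\|M^{(t)}\| \ge \tfrac{16}{7}\lambda\bigl(1+1/(1/2-\epsilon)\bigr)$ then yields $\sum_i w_i\tau_i \ge 4\lambda\bigl(1+1/(1/2-\epsilon)\bigr) \ge 2\sum_{i\in G}w_i\tau_i$, closing the argument. The main obstacle I expect is precisely this last lower bound: because the median $m$ is not the minimizer of $\sum_i w_i(\alpha_i-\cdot)^2$, one must carefully account for the discrepancy between $m$ and $\langle v, \nu^{(t)}\rangle$, and it is in this step that all three ingredients (the approximate-eigenvector guarantee, the centrality condition on $G$, and \autoref{lem:med-m}) must be combined.
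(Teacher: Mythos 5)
Your reduction to showing $\sum_{i\in G}w_i\tau_i \leq \tfrac12\sum_i w_i\tau_i$ is exactly the paper's strategy, and your upper bound on $\sum_{i\in G}w_i\tau_i$ is in the right spirit. But your treatment of the lower bound on $\sum_i w_i\tau_i$ diverges from the paper in a way that loses constants and, more importantly, misidentifies the key structural fact. You flag ``the discrepancy between $m$ and $\langle v,\nu^{(t)}\rangle$'' as the main obstacle, and propose to \emph{upper-bound} this discrepancy via \autoref{lem:med-m} and a spectral-signature-type estimate. This is backwards: the discrepancy is not an obstacle but a gift. Writing $c = \langle v,\nu^{(t)}\rangle$ and expanding exactly (rather than via $(a+b)^2 \le 2a^2+2b^2$), one has
\[
\sum_i w_i\tau_i \;=\; \sum_i w_i(\alpha_i - c)^2 \;+\; \Bigl(\sum_i w_i\Bigr)(m-c)^2 \;+\; \text{(cross term)},
\]
and the cross term vanishes when $\sum_i w_i = 1$ (since $\nu^{(t)} = \sum_i w_i x_i$ makes $c$ the weighted mean of the $\alpha_i$). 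The middle term is \emph{nonnegative}, so one simply drops it and obtains $\sum_i w_i\tau_i \ge v^{\top}M^{(t)}v \ge \tfrac78\|M^{(t)}\|$ with no loss at all. Equivalently, the weighted mean minimizes $c \mapsto \sum_i w_i(\alpha_i - c)^2$, so replacing it by any other shift $m$ can only increase the sum. You never need to bound $|m-c|$.

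This matters quantitatively. Your $(a+b)^2 \le 2a^2+2b^2$ splits cost you a factor of $2$ on the upper bound (you get $2\lambda(1 + 1/(1/2-\epsilon))$ versus the paper's $\lambda(1 + 1/(1/2-\epsilon))$, obtained by an exact orthogonal split around $\mu_G$), and another factor of $2$ plus an additive $-(m-c)^2$ penalty on the lower bound. With the paper's exact expansions the threshold $\tfrac{16}{7}\lambda(1+\tfrac{1}{1/2-\epsilon})$ is sharp: the upper bound becomes exactly $\tfrac{7}{16}\|M^{(t)}\|$ and the lower bound is $\tfrac78\|M^{(t)}\|$, closing the argument with equality. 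Under your constants, the dominance $\sum_{i\in G}w_i\tau_i \le \tfrac12\sum_i w_i\tau_i$ does not follow at this threshold; you would need the while-loop threshold to be larger by roughly a factor of $4$ (plus whatever you lose from bounding $|\beta - c|$, which introduces an unwanted $\epsilon\|M^{(t)}\|$ term through the spectral signature and could compromise the breakdown point as $\epsilon \to 1/2$). So the gap is concrete: replace both $2a^2 + 2b^2$ splits with exact second-moment expansions around the relevant means, and for the lower bound, drop the nonnegative shift term rather than estimating it.
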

\begin{proof}
Observe that to prove the claim inductively, it suffices to show that for any $s$,
\begin{align}
    \sum_{i \in G} w_{i}^{(s)}-w_{i}^{(s+1)}<\sum_{i \in B} w_{i}^{(s)}-w_{i}^{(s+1)}.
\end{align}
We now just focus on these two iterations, drop the superscript and denote $w^{(s+1)}$ by $w'$.
By definition of the update step (line 7 of \autoref{alg:breakdown}), we just need to prove that 
\begin{align}\label{eq:key-invariant}
    \sum_{i \in G} w_{i} \tau_{i}<\sum_{i \in B} w_{i} \tau_{i}.
\end{align}
Now note that since $\nu = \mu(w) = \sum_{i=1}^n w_ix_i$, we have
\begin{align*}
    \sum_{i =1}^n w_{i} \tau_{i}  &= \sum_{i=1}^n w_i ( \langle v, x_i \rangle - m)^2
    \\& = \sum_{i=1}^n w_i \left( \langle v, x_i - \nu\rangle + \langle \nu, v\rangle - m\right)^2\\
    &=  \sum_{i=1}^n w_i \left( \langle v, x_i - \nu\rangle^2 +  (m-\langle v,\nu\rangle)^2\right)
    \\ 
    &\geq \sum_{i =1}^n w_{i}\left\langle v, x_{i}-\nu\right\rangle^{2}\\
    &=v^{\top} Mv\geq \frac{7}{8  }  \|M\|_{2}.
\end{align*}
Hence, to establish invariant \eqref{eq:key-invariant}, we proceed by showing that 
\begin{align}
    \sum_{i\in G} w_i \tau_i \leq \frac{7}{16}\|M\|_2. 
\end{align}
Since $w_i \leq \frac{1}{n}$, we have $\sum_{i\in G} w_i \tau_i \leq \sum_{i\in G} \tfrac{1}{n} ( \langle v, x_i \rangle - m)^2 $. On the other hand, let $\mu_G = \frac{1}{n} \sum_{i\in G} x_i$, and so by condition~\eqref{eq:good-condtion} and \autoref{lem:med-m},
\begin{align*}
    \sum_{i\in G} \frac{1}{n} \left( \langle v, x_i \rangle - m\right)^2 &= \frac{1}{n}\sum_{i \in G}\left\langle v, x_{i}-\mu_{G}\right\rangle^{2}+\left|\langle\mu_{G},v\rangle-m\right|^{2}\\
    &\le \lambda + \frac{\lambda}{{1/2 -\epsilon}}\\
    &\leq \frac{7}{16} \|M\|,
\end{align*}
by our assumption~\eqref{eq:bigM}. This completes the proof. 
\end{proof}

\begin{theorem}
For any $\epsilon \in (0,1/2)$, \autoref{alg:breakdown} gives a constant approximation to the spectral sample reweighting problem (\autoref{def:ssr}). The algorithm terminates in $T = O(n)$ iterations.
\end{theorem}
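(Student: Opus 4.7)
The plan is to separate the theorem into two independent claims: termination in $O(n)$ iterations, and the fact that, upon termination, the pair $(\nu^{(t)},w^{(t)})$ constitutes a constant-factor approximate solution to the spectral sample reweighting problem (\autoref{def:ssr}).

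For the iteration count, the crucial property is that the update $w_i^{(t+1)} \leftarrow w_i^{(t)}(1 - \tau_i^{(t)}/\tau_{\max})$ zeros out the weight of the index $i^\star$ achieving $\tau_{i^\star}^{(t)} = \tau_{\max}$. Since the maximum on line~5 is taken only over indices with strictly positive weight, the support of $w^{(t)}$ strictly contracts each iteration, forcing termination in at most $n$ rounds (prior to that, the while-loop guard may of course fire earlier once the spectral norm condition is met). This gives the $T = O(n)$ bound.

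For correctness, the plan is to propagate the mostly-good invariant throughout the run and then convert the unnormalized output into a proper SSR output. I would induct on $t$: at $t = 1$ the invariant $\sum_{i\in G}(1/n - w_i^{(t)}) \le \sum_{i\in B}(1/n - w_i^{(t)})$ holds trivially, and while the termination guard $\|M^{(t)}\| \ge \tfrac{16}{7}\lambda(1 + \tfrac{1}{1/2-\epsilon})$ remains active, \autoref{lem:monotone} preserves it. Therefore at termination time $T$, the invariant still holds and $\|M^{(T)}\| < \tfrac{16}{7}\lambda(1 + \tfrac{1}{1/2-\epsilon}) = O(\lambda)$ for any fixed $\epsilon \in (0,1/2)$---this is the desired constant-factor spectral bound.

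To package this as an SSR solution, let $c_T = \|w^{(T)}\|_1$. The invariant gives that the mass removed from $G$ is at most the mass removed from $B$, which is itself at most $|B|/n \le \epsilon$; hence $c_T \ge 1 - 2\epsilon$. Setting $\tilde w = w^{(T)}/c_T$, I get $\tilde w_i \le 1/((1-2\epsilon)n)$, so $\tilde w \in \mathcal{W}_{n,2\epsilon} \subseteq \mathcal{W}_{n,3\epsilon}$. With the normalized center $\tilde\nu = \nu^{(T)}/c_T$, the weighted covariance is $\sum_i \tilde w_i (x_i - \tilde\nu)(x_i - \tilde\nu)^{\top} = M^{(T)}/c_T$, whose spectral norm is $O(\lambda)/(1-2\epsilon) = O(\lambda)$. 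The main (only) obstacle in the write-up is this bookkeeping between unnormalized weights used by the algorithm's $\nu^{(t)}$, $M^{(t)}$ and the normalized probability-distribution representation required by \autoref{def:ssr}; one must verify that the factor $1/c_T \in [1,1/(1-2\epsilon)]$ introduced by normalization affects the spectral bound only by a constant when $\epsilon$ is bounded away from $1/2$ (which is exactly the regime addressed by the theorem).
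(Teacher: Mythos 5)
Your proof is correct and actually more careful than the paper's on two fronts. The paper's own argument is very terse: it cites \autoref{lem:monotone} to conclude the algorithm terminates ``after $2\epsilon n$ iterations'' and then reads off the spectral bound from the while-guard, without addressing how to convert the unnormalized $w^{(t)}$ into a member of $\mathcal{W}_{n,3\epsilon}$ as \autoref{def:ssr} demands. Your termination argument---that the index attaining $\tau_{\max}$ over $\{i : w_i^{(t)}>0\}$ is driven to exactly zero by the multiplicative update, so the support strictly shrinks each round---is cleaner and more elementary than the paper's, and unlike the paper's ``$2\epsilon n$ iterations'' claim it does not need any lower bound on how much mass a single iteration removes. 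Your normalization bookkeeping ($c_T \ge 1-2\epsilon$ via the invariant plus total bad mass $\le\epsilon$, hence $\tilde w \in \mathcal{W}_{n,2\epsilon}\subseteq\mathcal{W}_{n,3\epsilon}$) is genuinely needed and omitted by the paper.

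One small slip: you assert the identity $\sum_i \tilde w_i(x_i-\tilde\nu)(x_i-\tilde\nu)^\top = M^{(T)}/c_T$ with $\tilde\nu = \nu^{(T)}/c_T$. That equality in fact holds if you keep the \emph{unnormalized} center, i.e.\ $\sum_i \tilde w_i(x_i - \nu^{(T)})(x_i - \nu^{(T)})^\top = M^{(T)}/c_T$; re-centering at $\tilde\nu$ (the mean under $\tilde w$) can only shrink the covariance, so with $\tilde\nu$ you get $\preceq M^{(T)}/c_T$ rather than equality. Either way the $O(\lambda)$ spectral bound survives, so the conclusion is unaffected---just fix the ``$=$'' to ``$\preceq$'', or report $(\nu^{(T)},\tilde w)$ instead of $(\tilde\nu, \tilde w)$.

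Finally, a technicality inherited from the paper rather than introduced by you: \autoref{lem:monotone} states the invariant with strict inequality, so the base case $t=1$ is $0<0$, which is false. This should be non-strict throughout; your ``holds trivially at $t=1$'' is morally right but the lemma as written does not quite support it.
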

\begin{proof}
The run-time follows from the invariant \autoref{lem:monotone}, which guarantees weights on bad points are removed more than those on good points. Hence, after $2\epsilon n$ iterations, the algorithm must terminate. 
Moreover, when the algorithm terminates, we   have 
\begin{equation}
    \left\| M^{(t)}\right\| \le  \frac{16}{7} \lambda \left( 1+ \frac{1}{1/2-\epsilon} \right).
\end{equation}
For any constant $\epsilon \leq 1/2 - O(1)$, the bound is $O(\lambda)$. 
\end{proof}
\paragraph{Robust mean estimation.}
Our reduction from spectral sample reweighting to robust mean estimation is not sufficiently tight for the purpose of attaining optimal breakdown point.
 Instead, we need to appeal to the following more refined spectral signature.
\begin{claim}[refined spectral signature~\cite{jerrynote}]\label{clm:refine-ss}
Let $S = S_g \cup S_b\setminus S_r$  be $n$ points with $|S_b| = |S_r| = \epsilon n$. Define $\mu_g = \frac{1}{n} \sum_{i\in S_g} x_i$ and $\Sigma =  \frac{1}{n} \sum_{i\in S_g} (x_i-\mu)(x_i-\mu)^\top$.  Let $w(S)$  be the uniform distribution on $S$ and $\mathcal C_{n,\epsilon} = \{ w : \|w - w(S)\|_1 \leq \epsilon,  0 \leq w_i \leq 1/n \text{ for }  i \in [n] \}$. Then for any $w\in \mathcal{C}_{n,\epsilon}$,
\begin{align*}
\left(\sum_{i\in S\cap S_g} w_i\right)    \left\|\mu - \mu(w)\right\| \leq  \sqrt{2\epsilon \|\Sigma \|} +  \sqrt{\epsilon    \left\| \Sigma(w)\right\| } .
\end{align*}
\end{claim}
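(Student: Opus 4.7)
Proof plan for \autoref{clm:refine-ss}. The target is a refined version of the standard spectral signature inequality; the gain is the prefactor $W_g := \sum_{i \in S \cap S_g} w_i$ on the left. My plan is to linearize the norm and reduce to two Cauchy--Schwarz bounds, one against $\Sigma$ and one against $\Sigma(w)$.

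\textbf{Step 1 (linearize).} Assume $\mu \neq \mu(w)$ (else the claim is trivial) and set $v = (\mu - \mu(w))/\|\mu-\mu(w)\|$. Writing $G' := S \cap S_g$, the LHS equals
\[
W_g \cdot \|\mu - \mu(w)\| \;=\; \left\langle v,\; \sum_{i \in G'} w_i (\mu - \mu(w))\right\rangle \;=\; \underbrace{\sum_{i \in G'} w_i \langle v, \mu - x_i\rangle}_{=: \mathrm{I}} \;+\; \underbrace{\sum_{i \in G'} w_i \langle v, x_i - \mu(w)\rangle}_{=: \mathrm{II}}.
\]

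\textbf{Step 2 (good term against $\Sigma$).} Extend $w$ by zero on $[n]\setminus S$, so the sum defining $\mathrm{I}$ becomes a sum over all of $S_g$ (the added indices $S_r = S_g \setminus S$ carry $w_i = 0$). Since $\sum_{i \in S_g} \frac{1}{n} (\mu - x_i) = 0$, we may subtract $1/n$ freely:
\[
\mathrm{I} \;=\; \sum_{i \in S_g}\Bigl(w_i - \tfrac{1}{n}\Bigr) \langle v, \mu - x_i\rangle.
\]
Cauchy--Schwarz gives $|\mathrm{I}|^2 \leq \bigl(\sum_{i \in S_g} |w_i - 1/n|\bigr) \cdot \bigl(\sum_{i \in S_g} |w_i - 1/n|\, \langle v, \mu - x_i\rangle^2\bigr)$. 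Using $|w_i - 1/n| \leq 1/n$, the second factor is bounded by $v^\top \Sigma v \leq \|\Sigma\|$. For the first factor, write $\sum_{i \in S_g} |w_i - 1/n| = (|G'|/n - W_g) + |S_r|/n = 1 - W_g$. Since $\sum_{i \in S_b} w_i \leq \epsilon$ (using $w_i \leq 1/n$ and $|S_b| = \epsilon n$) and $\sum_i w_i \geq 1 - \epsilon$ (from the $\ell_1$ closeness to $w(S)$), we get $W_g \geq 1 - 2\epsilon$ and thus $|\mathrm{I}| \leq \sqrt{2\epsilon \|\Sigma\|}$.

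\textbf{Step 3 (bad term against $\Sigma(w)$).} Here I use the identity $\sum_{i \in S} w_i (x_i - \mu(w)) = 0$ (taking $w$ to be a distribution, which is the natural setting forced by $\|w - w(S)\|_1 \leq \epsilon$ and $0 \leq w_i \leq 1/n$ when $W_g$ is near $1$). This rewrites $\mathrm{II}$ as a sum over $S_b = S \setminus G'$:
\[
\mathrm{II} \;=\; -\sum_{i \in S_b} w_i \langle v, x_i - \mu(w)\rangle.
\]
Cauchy--Schwarz then gives $|\mathrm{II}|^2 \leq \bigl(\sum_{i \in S_b} w_i\bigr) \cdot \sum_{i \in S_b} w_i \langle v, x_i - \mu(w)\rangle^2 \leq \epsilon \cdot v^\top \Sigma(w) v \leq \epsilon \|\Sigma(w)\|$.

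\textbf{Step 4 (combine).} Triangle inequality yields $W_g \|\mu - \mu(w)\| \leq |\mathrm{I}| + |\mathrm{II}| \leq \sqrt{2\epsilon\|\Sigma\|} + \sqrt{\epsilon\|\Sigma(w)\|}$, as claimed.

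The main (minor) obstacle is the mass-balance bookkeeping in Step~2: one must carefully account for the three disjoint contributions $G'$, $S_r$, $S_b$ to convert the $\ell_1$ closeness of $w$ and $w(S)$ into the clean bound $\sum_{i \in S_g} |w_i - 1/n| \leq 2\epsilon$. The rest is a standard linearize-then-Cauchy--Schwarz maneuver of the same flavor as \autoref{lem:ssw} and \autoref{lem:robustspec}, but organized so that the good contribution carries the factor $(1-W_g)$ (which tightens to $2\epsilon$) rather than a full $W_g$ factor -- this is what ultimately supplies the $W_g$ prefactor on the LHS.
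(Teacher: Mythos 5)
Your proof is correct in structure and follows the same linearize--then--Cauchy--Schwarz template used in \autoref{lem:ssw} and \autoref{lem:robustspec}; the paper itself does not prove this claim (it defers to an external reference), so there is no internal proof to compare against, but the decomposition into a good term bounded against $\Sigma$ and a bad term bounded against $\Sigma(w)$, with the $\ell_1$ bookkeeping $\sum_{i\in S_g}|w_i - 1/n| = 1 - W_g \le 2\epsilon$, is exactly the right way to extract the $W_g$ prefactor. Step 2 is fully rigorous (the $G'$/$S_r$ split and the two bounds $\sum_{i\in S}w_i \ge 1-\epsilon$, $\sum_{i\in S_b}w_i \le \epsilon$ are the only facts needed), and the final combination is immediate.

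The one point that deserves more than your parenthetical aside is the centering identity in Step 3. The constraint set $\mathcal{C}_{n,\epsilon}$ only enforces $\|w\|_1 \ge 1-\epsilon$, \emph{not} $\|w\|_1 = 1$; and under the paper's stated convention $\mu(w) = \sum_i w_i x_i$, one has $\sum_{i\in S} w_i(x_i - \mu(w)) = (1 - \|w\|_1)\mu(w)$, which is not zero. This is not cosmetic: with the unnormalized convention the claim is actually false (take all $x_i$ equal to some large $c$ and $w_i \equiv (1-\epsilon)/n$; then the left side is $\Theta(\epsilon|c|)$ while the right side is $\Theta(\epsilon^{3/2}|c|)$). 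The correct reading, and the one under which your Step 3 identity holds exactly, is that $\mu(w)$ and $\Sigma(w)$ here denote the \emph{normalized} weighted mean and covariance, $\mu(w) = \sum_i w_i x_i / \|w\|_1$ and likewise for $\Sigma(w)$. This matters downstream as well: \autoref{alg:breakdown} strictly decreases total weight each round, so the weight vectors it produces are genuinely sub-stochastic and the normalization is load-bearing. Once you state that convention explicitly, every step of your argument is airtight.
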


\begin{theorem}
For the problem of robust mean estimation (under bounded second moment), \autoref{alg:breakdown}   attains the optimal estimation error $O(\sqrt{\epsilon})$ for any $\epsilon<1/2$.
\end{theorem}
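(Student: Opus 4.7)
The plan is to combine the approximation guarantee of Algorithm~\ref{alg:breakdown} with the sharper spectral signature from Claim~\ref{clm:refine-ss}, used in place of the weaker \autoref{lem:ssw} that underlies the proof of \autoref{thm:robust}. The refined inequality is what lets the argument extend all the way up to $\epsilon < 1/2$.

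First I would condition on the deterministic event of \autoref{lem:second}. Identifying $G = S \cap S_g$, this gives a subset $G \subseteq S$ of size $|G| \geq (1-\epsilon)n$ with $\|\mu - \mu_G\| = O(\sqrt{\epsilon})$ and $\frac{1}{|G|}\sum_{i \in G} (x_i - \mu_G)(x_i - \mu_G)^{\top} \preceq O(1)\cdot I$. Taking the uniform distribution on $G$ (and zero elsewhere) as a witness establishes the spectral centrality promise~\eqref{eq:p} with $\lambda = O(1)$, placing us in the regime where Algorithm~\ref{alg:breakdown} applies.

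By the theorem proved immediately above, Algorithm~\ref{alg:breakdown} terminates in $O(n)$ iterations and outputs $w^{(t)} \in \mathcal{C}_{n,\epsilon}$ satisfying
\begin{equation*}
\|M(w^{(t)})\| \leq \tfrac{16}{7}\,\lambda\left(1 + \tfrac{1}{1/2 - \epsilon}\right) \;=\; O\!\left(\tfrac{1}{1/2 - \epsilon}\right).
\end{equation*}
Moreover, \autoref{lem:monotone} guarantees the invariant $\sum_{i \in G}(\tfrac{1}{n} - w_i^{(t)}) \leq \sum_{i \in B}(\tfrac{1}{n} - w_i^{(t)})$, which together with $|B| \leq \epsilon n$ and $w_i^{(t)} \geq 0$ gives $\sum_{i \in G} w_i^{(t)} \geq 1 - 2\epsilon$. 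For any fixed $\epsilon < 1/2$, this is bounded away from zero by a positive constant.

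Now I would apply Claim~\ref{clm:refine-ss} to $w^{(t)}$, matching $S_g, S_b, S_r$ against the decomposition of $S$. The claim yields
\begin{equation*}
\left(\sum_{i \in G} w_i^{(t)}\right)\|\mu_g - \mu(w^{(t)})\| \;\leq\; \sqrt{2\epsilon\,\|\Sigma\|} + \sqrt{\epsilon\,\|M(w^{(t)})\|}.
\end{equation*}
Dividing by the lower bound $\sum_{i \in G} w_i^{(t)} \geq 1-2\epsilon$ and plugging in $\|\Sigma\| = O(1)$ together with the norm bound on $M(w^{(t)})$ above gives $\|\mu_g - \mu(w^{(t)})\| = O(\sqrt{\epsilon})$, with constants depending only on the gap $1/2 - \epsilon$. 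Combining with $\|\mu - \mu_g\| = O(\sqrt{\epsilon})$ from~\autoref{lem:second} by triangle inequality closes the bound.

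The main obstacle, and the reason \autoref{thm:robust} does not suffice directly, is controlling the amplification in front of $\|\mu - \mu(w)\|$: the spectral signature in \autoref{lem:ssw} carries a $1/(1-2\epsilon)$ prefactor that diverges well before $\epsilon = 1/2$, whereas Claim~\ref{clm:refine-ss} isolates instead the factor $\sum_{i \in G} w_i$, and the monotone-removal invariant of Algorithm~\ref{alg:breakdown} is precisely what keeps this sum bounded below throughout execution. The only delicate point is verifying that $w^{(t)}$, which lives in $\mathcal{C}_{n,\epsilon}$ as defined in~\autoref{def:mgod}, also satisfies the weight constraint in the version of $\mathcal{C}_{n,\epsilon}$ used by Claim~\ref{clm:refine-ss}; this is immediate from the invariant $\sum_{i \in G}(\tfrac{1}{n} - w_i^{(t)}) \leq \sum_{i \in B}(\tfrac{1}{n} - w_i^{(t)})$, which bounds $\|w^{(t)} - w(S)\|_1 \leq 2\epsilon$.
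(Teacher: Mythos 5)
Your proposal is correct and follows essentially the same route as the paper: establish the spectral centrality promise via the deterministic conditions, invoke the termination guarantee and the monotone-removal invariant of Algorithm~\ref{alg:breakdown} to control $\|M(w^{(t)})\|$ and the good-weight mass $\sum_{i\in G}w_i^{(t)}$, and close with the refined spectral signature (Claim~\ref{clm:refine-ss}) plus the triangle inequality. You are somewhat more explicit than the paper in verifying that $w^{(t)}$ lies in the weight class required by Claim~\ref{clm:refine-ss} and in flagging that the hidden constant degrades as $\epsilon \to 1/2$, but the substance is identical.
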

\begin{proof}
By \autoref{lem:monotone}, our algorithm always removes more weights from bad points than from good points.  Thus,  $w^{(t)} \in \mathcal{C}_{n,2\epsilon}$, as there are at most $\epsilon n$ bad points. Moreover,  $\sum_{i\in S\cap S_g} w_i \geq 1-2\epsilon$. 

For robust mean estimation, if we have $n = \Omega (d\log d / \epsilon)$ samples, then $\|\mu_g - \mu\| \leq O(\sqrt{\epsilon})$ and $\lambda  =\|\Sigma\| \leq 2$~\cite{diakonikolas2019robust}. Hence, applying \autoref{clm:refine-ss} and the guarantee that $\|M^{(t)}\| \leq  \frac{16}{7} \lambda \left( 1+ \frac{1}{1/2-\epsilon} \right)$,
\begin{align*}
    \left\|\mu_g - \nu^{(t)}\right\| \leq  \frac{1}{1-2\epsilon} \left(2\sqrt{\epsilon \|\Sigma\|} +  \sqrt{2\epsilon    \left\| M^{(t)}\right\| }\right) \leq O(\sqrt{\epsilon}),
\end{align*}
for any $\epsilon < 1/2$.  Finally, triangle inequality implies that $\|\mu -  \nu^{(t)}\| = O(\sqrt{\epsilon})$.
 \end{proof}

\end{document}